\declaretheorem{theorem}
\declaretheorem[sibling=theorem]{lemma}
\declaretheorem[sibling=theorem]{corollary}
\declaretheorem{problem}
\declaretheorem[sibling=theorem]{definition}
\declaretheorem[sibling=theorem]{proposition}
\theoremstyle{remark}
\declaretheorem[sibling=theorem]{example}
\newcommand{\nat}{\mathbb{N}}
\newcommand{\intg}{\mathbb{Z}}
\newcommand{\rel}{\mathbb{R}}
\newcommand{\rat}{\mathbb{Q}}
\newcommand{\Acal}{\mathcal{A}}
\newcommand{\Bcal}{\mathcal{B}}
\newcommand{\Lcal}{\mathcal{L}}
\newcommand{\Mcal}{\mathcal{M}}
\newcommand{\Pcal}{\mathcal{P}}
\newcommand{\Scal}{\mathcal{S}}
\newcommand{\Tcal}{\mathcal{T}}
\newcommand{\seq}[1]{(#1)_{n=0}^{\infty}}
\newcommand{\st}{\colon}
\newcommand{\zerovec}{\mathbf{0}}
\title{On the Decidability of \\Presburger Arithmetic Expanded with
  Powers}
\author[1]{Toghrul Karimov}
\author[2]{Florian Luca}
\author[1]{Joris Nieuwveld}
\author[1]{Jo{\"e}l Ouaknine}
\author[3]{James Worrell}
\affil[1]{Max Planck Institute for Software Systems, Saarland
  Informatics Campus, Germany}
\affil[2]{Mathematics Division, Stellenbosch University, Stellenbosch, South Africa}
\affil[3]{Oxford University, Department of Computer Science, Oxford, United Kingdom}
\date{}
\begin{document}
	\maketitle
	\begin{abstract}
          We prove that for any integers $\alpha, \beta > 1$, the
          existential fragment of the first-order theory of the
          structure
          $\langle \intg; 0,1,<, +, \alpha^{\nat},
          \beta^{\nat}\rangle$ is decidable (where $\alpha^{\nat}$ is
          the set of positive integer powers of $\alpha$, and likewise
          for $\beta^{\nat}$).  On the other hand, we show  by way of
          hardness that
          decidability of the existential fragment of the theory of
          $\langle \nat; 0,1, <, +, x\mapsto \alpha^x, x \mapsto
          \beta^x\rangle$ for any multiplicatively independent
          $\alpha,\beta > 1$ would lead to mathematical breakthroughs
          regarding base-$\alpha$ and base-$\beta$ expansions of
          certain transcendental numbers.
          Finally, modifying the original proof of Hieronymi and Schulz we show that for any multiplicatively independent $\alpha,\beta>1$, it is undecidable whether a given formula with at most 3 alternating blocks of quantifiers holds in $\langle \nat; 0, 1, <, +, \alpha^\nat, \beta^\nat \rangle$.
	\end{abstract}

\section{Introduction}

\emph{Presburger arithmetic}, the first-order theory of the integers with
addition and order, has been an object of study for nearly a
century. Its decidability was first established by Presburger in 1929
via a quantifier-elimination procedure~\cite{presburger-og}; yet
Presburger arithmetic remains to this day a topic of active research
owing, among others, to its deep connections to automata theory and formal
languages (see, e.g., the survey~\cite{haase-presburger-survival}) as
well as symbolic dynamics and combinatorics on words (see, e.g., the
excellent recent text~\cite{shallit22}).\footnote{It is interesting to note that
  the computational complexity of quantifier elimination itself
  remains of contemporary interest: see, e.g., \cite{haase24}.}  

Another rich line of inquiry has consisted in investigating
\emph{expansions} of Presburger arithmetic, i.e., theories obtained by
augmenting Presburger arithmetic with particular predicates or
functions. Here one must proceed with care: adding, for example, the
multiplication function $\times : \intg^2 \rightarrow \intg$ (or even
simply the `squaring' function, from which multiplication is easily
recovered) to Presburger arithmetic immediately results in
undecidability, thanks to G\"odel's incompleteness
theorem~\cite{godel31}. In fact, even the existential fragment of the
first-order theory of $\langle \intg; 0,1,<,+,\times\rangle$ is
undecidable, as shown by Matiyasevich in his negative solution of
Hilbert's 10th problem (see~\cite{mat93}). Nevertheless, many
decidable expansions of Presburger arithmetic have been discovered and
studied (see, for instance, the
survey~\cite{bes2002survey}). Decidability is usually established in one
of two ways: either via quantifier elimination, along the lines of
Presburger's original approach, or through automata-theoretic means,
where integers are encoded in a given base as strings of digits, which
are in turn manipulated by automata.

Before giving examples of such expansions, let us introduce some
notation. For a fixed integer $\alpha \geq 2$, we denote by $\alpha^\nat$
the set $\{ \alpha^n : n \in \nat\}$ of all positive powers of
$\alpha$, and by $\alpha^x$ the function $n \mapsto \alpha^n$
that takes a positive integer argument $n$ to $\alpha^n$. We also
write $V_\alpha(n)$ to represent the function taking $n$ to the
largest power of $\alpha$ that divides $n$ (thus, for example, $V_2(24) = 8$).

Using automata theory, B\"uchi showed that, for any $\alpha$, the
first-order theory of $\langle \intg; 0,1,<,+,V_{\alpha}\rangle$ is
decidable~\cite{buchi1960weak}. Villemaire however proved that, for
multiplicatively independent $\alpha$ and $\beta$, the first-order
theory of $\langle \intg; 0,1,<,+,V_{\alpha},V_{\beta}\rangle$ is
undecidable~\cite{villemaire92}.  Sem\"enov used quantifier
elimination to show that, for any `effectively sparse' predicate
$P \subset \intg$, the first-order theory of
$\langle \intg; 0,1,<,+,P\rangle$ is decidable. Examples of sparse
predicates include the sets of powers $\alpha^\nat$ as well as the set
of factorial numbers $\{n! : n \in\nat\}$.\footnote{The complexity of
  expansions of Presburger arithmetic by a power predicate
  $\alpha^\nat$ or a powering function $\alpha^x$ was very recently
  investigated~\cite{benedikt23}.}  The question of whether
decidability could however be maintained with the addition of
\emph{two} (or more) power predicates goes back to the 1980s; it was
finally answered in the negative in a recent paper of Hieronymi and
Schulz~\cite{hieronymi-schulz}.

Note that automata-theoretic techniques work well when all numbers in play can be
represented over a common base. But unfortunately, for
multiplicatively independent $\alpha$ and $\beta$ (such as $2$ and~$3$), this is not the case: powers of $2$, for example, have a very
regular structure in base $2$ but not in base~$3$, and
vice-versa. Moreover, multiplicatively independent
power predicates enable one to formulate non-trivial number-theoretic
assertions about integers, such as the fact that there are only
finitely many powers of $2$ and powers of $3$ that are no farther than
$10$ apart, say. Such an assertion can in fact already be formulated in the
first-order theory of $\langle \intg; 0,1,<,2^\nat,3^\nat\rangle$
(noting that addition has been removed); the decidability of this
theory is non-trivial, and was established by
Sem\"enov~\cite{semenov1980}. Very recently, the monadic second-order theory of
$\langle \intg; 0,1,<,2^\nat,3^\nat\rangle$ was also shown
decidable~\cite{MSO24}. 

Hieronymi and Schulz's undecidability result is quite intricate. The
standard approach would have been to show that multiplication is
definable in $\langle \intg; 0,1,<,+,\alpha^\nat,\beta^\nat\rangle$, but
unfortunately, this is provably not the
case~\cite{schulz2023undefinability}. The undecidability construction
in~\cite{hieronymi-schulz} makes use of three quantifier alternations
(i.e., four blocks of quantifiers of alternating polarity). This
naturally raises the question of whether weaker fragments might be
decidable.
In~\cite[Section~5]{hieronymi-schulz},
Hieronymi and Schulz in fact conjecture that the \emph{existential} fragment
of $\langle \intg; 0,1,<,+,\alpha^\nat,\beta^\nat\rangle$ is decidable
subject to certain number-theoretic effectiveness assumptions.

\textbf{Our main contribution is the following:}

\begin{theorem}
  \label{thm::main-decidability}
There is an algorithm that, given integers
                $\alpha,\beta > 1$
                and an existential formula $\varphi$ of $\langle
                \intg; 0,1,<,+,\alpha^\nat,\beta^\nat\rangle$,
                decides whether $\varphi$ is true or not.
	\end{theorem}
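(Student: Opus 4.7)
The plan is to reduce the satisfiability of $\varphi$ to a system of linear relations over powers of $\alpha$ and $\beta$. First, I would put $\varphi$ in prenex existential form and convert its matrix to DNF. A negative power literal $\neg(t \in \alpha^\nat)$ can itself be expressed existentially, since for $t \geq 1$ we have $t \notin \alpha^\nat$ iff $\exists y(y \in \alpha^\nat \wedge y < t \wedge t < \alpha y)$, so we may assume all power atoms are positive. Each disjunct then has the form
\[
\exists \vec x \in \intg \ \exists \vec a \in \nat^m \ \exists \vec b \in \nat^k \ \bigwedge_{\ell} L_\ell\bigl(\vec x, \alpha^{a_1}, \ldots, \alpha^{a_m}, \beta^{b_1}, \ldots, \beta^{b_k}\bigr)
\]
where the $L_\ell$ are linear (in)equalities with integer coefficients.

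I would next eliminate the Presburger variables $\vec x$, treating the power values as parameters. Using an existential form of Cooper's quantifier-elimination procedure, this yields (after redistributing into DNF) a conjunction of linear inequalities and modular congruences of the form
\[
\sum_i c_i \alpha^{a_i} + \sum_j d_j \beta^{b_j} \ \square\ N,\qquad \sum_i c_i \alpha^{a_i} + \sum_j d_j \beta^{b_j} \equiv r \pmod{M}
\]
with $\square \in \{<,\leq,=\}$. Since $a\mapsto \alpha^a \bmod M$ and $b\mapsto \beta^b \bmod M$ are eventually periodic, each modular congruence restricts the exponents to a finite union of arithmetic progressions. I would branch on a choice of progression for each $a_i$ and $b_j$, writing $a_i = p_i + q_i t_i$ and $b_j = p'_j + q'_j s_j$, and reduce to deciding the linear (in)equalities under these residue restrictions.

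The linear (in)equalities are the main content. I would proceed by guessing the relative ordering of the magnitudes of the $\alpha^{a_i}$ and $\beta^{b_j}$; for each such ordering, each inequality is controlled by its dominant term unless two comparable terms of opposite sign nearly cancel. The key tool is an effective Baker-type bound on linear forms in logarithms: for multiplicatively independent $\alpha,\beta$ and any nonzero integers $c,d$, an inequality of the form $|c\alpha^a - d\beta^b|\le N$ forces $\max(a,b)$ to be bounded by an effectively computable function of $N$, $c$, $d$, $\alpha$, $\beta$. Applying such bounds inductively on the number of exponential terms, any near-cancellation pins down some subset of the exponents, the remaining exponents are then governed purely by their dominant terms, and in all cases the problem collapses to a finite search.

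The main obstacle is the uniformity of this case analysis in the presence of several exponentials of each base. Relations of the form $\sum_i c_i \alpha^{a_i} = \sum_j d_j \beta^{b_j} + N$ (and their inequality counterparts with small slack) are genuine exponential Diophantine problems; the decisive step is invoking sufficiently strong effective transcendence estimates---lower bounds on linear forms in logarithms of algebraic numbers, in the style of Matveev or Laurent---to convert every unbounded-exponents scenario into a bounded enumeration, consistently across the exponent progressions produced by the modular reduction and across all orderings of the magnitudes.
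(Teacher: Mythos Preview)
Your reduction to systems of linear (in)equalities in $\alpha^{a_i}$ and $\beta^{b_j}$, the elimination of the Presburger variables $\vec x$, and the treatment of the congruences via eventual periodicity of $\alpha^a,\beta^b \bmod M$ are all correct and essentially match the paper's \Cref{sec::first-step}. The gap is in the final step, where you claim that after ordering the magnitudes and invoking Baker-type bounds ``in all cases the problem collapses to a finite search''.

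This is false for the inequality part of the system. Baker's theorem gives effective \emph{lower} bounds on $|c\alpha^a - d\beta^b|$ when it is nonzero, and thus bounds the exponents in any equation or in any inequality with small slack; this is indeed how the paper handles the equational constraints in \Cref{sec::equations} (including the subtlety that vanishing sub-sums produce infinite solution families in the class~$\mathfrak{A}$, not a finite list). But a system of \emph{strict} homogeneous inequalities such as $A\mathbf{z}>\mathbf{0}$ typically has either no solutions or infinitely many, and deciding which is a genuine Diophantine \emph{approximation} question, not a Baker-type gap question. Concretely, after the reductions one faces systems of the shape
\[
\frac{h_i(z_3^{n_3},\ldots,z_l^{n_l})}{\beta^{n_2}} \;<\; \frac{\alpha^{n_1}}{\beta^{n_2}} - a \;<\; \frac{h_j(z_3^{n_3},\ldots,z_l^{n_l})}{\beta^{n_2}}
\]
(cf.\ \eqref{eq::overview-system}), asking whether $\alpha^{n_1}/\beta^{n_2}$ can land in a specific shrinking window around~$a$ while $(n_3,\ldots,n_l)$ simultaneously satisfies a smaller system. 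Baker tells you the window cannot be too narrow, but does not tell you whether it can be hit; for that one needs the density of $\{\alpha^{n_1}/\beta^{n_2}\}$ (Kronecker's theorem, \Cref{thm::kronecker}) together with a Pumping Lemma (\Cref{thm::pumping}) and a delicate simultaneous-approximation argument (\Cref{thm::simult-diophantine-appr}) to couple the choice of $(n_1,n_2)$ with a valid choice of $(n_3,\ldots,n_l)$. None of this is a finite search, and your proposal does not supply any mechanism for it.
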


        As noted above, automata-theoretic techniques appear
        inadequate to establish such statements. We make use instead
        of mathematical tools from Diophantine approximation and
        transcendental number theory, in particular Baker's theorem
        on linear forms in logarithms, in a manner similar
        to~\cite{bertok2017number,stewart1980representation}.

As a secondary contribution, we provide a shorter proof of Hieronymi and Schulz's
undecidability result, requiring only two quantifier alternations
(rather than three); this is presented in
Section~\ref{sec::undec}. 

Finally, we also investigate the existential fragment of $\langle
                \nat; 0,1,<,+,\alpha^x,\beta^x\rangle$, in which the
                power predicates have been replaced by powering
                functions.\footnote{We have switched the domain from
                  $\intg$ to $\nat$; this is entirely benign, as
                  the order relation is available to us, and was
                  carried out chiefly so as not to have to separately redefine
                  the meaning of the powering functions over negative
                  entries.} 
                We have not been able to establish either decidability
                or undecidability; however, we prove the following by
                way of hardness:

	\begin{theorem}
		\label{thm::main-hardness}
		Let $\alpha, \beta > 1$ be multiplicatively independent integers.
		Write $\seq{A_n}$ for the base-$\beta$ expansion of $\log_\beta(\alpha)$ and $\seq{B_n}$ for the base-$\alpha$ expansion of $\log_\alpha(\beta)$.
		Suppose that the existential fragment of $\langle
                \nat; 0,1,<,+,\alpha^x,\beta^x\rangle$ is decidable.
		Then the following are in turn decidable:
		\begin{enumerate}
			\item[(A)] Whether a given pattern appears in $\seq{A_n}$.
			\item[(B)] Whether a given pattern appears at some index simultaneously in $\seq{A_n}$ and $\seq{B_n}$.
			\item[(C)] Whether a given pattern appears in $(A_{\alpha^n})_{n=0}^\infty$.
		\end{enumerate}
              \end{theorem}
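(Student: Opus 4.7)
The plan is to reduce each of (A), (B), (C) to deciding an existential sentence of $\langle \nat; 0,1,<,+,\alpha^x,\beta^x\rangle$, and then invoke the hypothesised decision procedure. The whole reduction rests on one observation: letting $M_n := \lfloor \beta^n \log_\beta(\alpha)\rfloor$, applying $\beta^{(\cdot)}$ to the defining inequality $M_n \leq \beta^n\log_\beta(\alpha) < M_n+1$ yields the characterisation
\[
\beta^{M_n} \leq \alpha^{\beta^n} < \beta^{M_n+1},
\]
which uniquely pins $M_n$ down among the nonnegative integers and is expressible in the language, since every quantity on both sides is a term built from $0, 1, +, \alpha^x, \beta^x$. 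Symmetrically, $N_n := \lfloor \alpha^n\log_\alpha(\beta)\rfloor$ is characterised by $\alpha^{N_n} \leq \beta^{\alpha^n} < \alpha^{N_n+1}$.

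Writing $\log_\beta(\alpha) = \sum_{n=0}^\infty A_n \beta^{-n}$ gives $M_n = \sum_{j=0}^n A_j\,\beta^{n-j}$, and hence the identity $M_{n+k} = \beta^k M_n + \sum_{i=1}^k A_{n+i}\beta^{k-i}$. Fixing a pattern $p_0, \ldots, p_{k-1}$ and setting $P := \sum_{i=1}^k p_{i-1}\beta^{k-i}$, the event ``the pattern occurs at position $n+1$ of $(A_n)$'' reduces to the single equation $M_{n+k} = \beta^k M_n + P$, where $\beta^k M_n$ is just a $\beta^k$-fold sum of $M_n$. For (A) I would accordingly write $\exists\, n, M, M'$ imposing the two sandwich inequalities that declare $M = M_n$ and $M' = M_{n+k}$, together with $M' = \beta^k M + P$. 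For (B) I would duplicate the construction with matching variables $N, N'$ obeying the analogous inequalities, sharing the same $n$, and impose $N' = \alpha^k N + Q$ with $Q := \sum_{i=1}^k p_{i-1}\alpha^{k-i}$.

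For (C) the complication is that we need $A_{\alpha^{m+i}}$ at a sparse set of positions, and from the recursion $M_N = \beta M_{N-1} + A_N$ this demands access to $M_{\alpha^{m+i}-1}$, whose exponent $\alpha^{m+i}-1$ is not itself a term (subtraction is unavailable). I would bypass this by introducing for each $i$ an auxiliary variable $y_i$ with $y_i + 1 = \alpha^{m+i}$ and using $\beta^{y_i}$ in place of $\beta^{\alpha^{m+i}-1}$; the two sandwich inequalities then define fresh variables $M_i$, $M_i'$ as $M_{\alpha^{m+i}}$ and $M_{\alpha^{m+i}-1}$, and we impose $M_i = \beta M_i' + p_i$ for each $i = 0, \ldots, k-1$. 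All resulting terms---including the triply nested $\alpha^{\beta^{\alpha^{m+i}}}$---are legitimate because the symbols $\alpha^x, \beta^x$ may be freely composed on arbitrary terms.

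No deep obstruction is anticipated: the argument is a careful syntactic translation rather than new mathematics. The only points that demand care are checking that every term we write remains admissible in the language (no hidden subtraction or variable--variable product), and matching the precise indexing convention used in the paper for $(A_n)$ and $(B_n)$ so that the shifts in the sandwich inequalities and the constants $P, Q$ come out correctly.
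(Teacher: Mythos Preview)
Your proposal is correct and rests on the same core observation as the paper: the floor $M_n = \lfloor \beta^n \log_\beta(\alpha)\rfloor$ is pinned down in the language by the sandwich $\beta^{M_n} \le \alpha^{\beta^n} < \beta^{M_n+1}$, obtained by exponentiating the defining inequality of the floor. The paper packages this a little differently: it introduces an abstraction of \emph{existentially definable sequences} (for every $k$ there is an existential formula picking out the length-$k$ window at position $n$), proves $(A_n)$ and $(B_n)$ are definable via one instance of the sandwich trick, and then dispatches (B) and (C) by noting that definable sequences are closed under pairing and under composition $(u_{v_n})$. Your version instead writes out concrete formulas for each of (A)--(C), quantifying over two auxiliary floors $M, M'$ and reading off the pattern as the equation $M' = \beta^k M + P$; for (C) your hand-coded $y_i$ with $y_i + 1 = \alpha^{m+i}$ does exactly what the paper's closure under composition would give. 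The two-floor encoding you use is arguably tidier than the paper's single-inequality-per-word disjunction (and sidesteps an apparent indexing slip there), but the underlying mathematics is identical.
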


To place Theorem~\ref{thm::main-hardness} in context, consider the
case of $\alpha=2$ and $\beta=3$. 
The constant $\log_3(2)$ is a transcendental
number that is widely conjectured to be \emph{normal}
(and thus in base $3$, every length-$l$ pattern should appear
within $\seq{A_n}$ with density $3^{-l}$). A fortiori, this would
entail that the answer to the first query is always positive. However,
normality on its own is not sufficient to settle either of
the other two queries.

	\section{Mathematical background}
	\label{sec::prelims}

	We denote by $\zerovec$ a (column) vector of all zeros whose dimension will be clear from the context.
	We will occasionally write $d$-dimensional column vectors in the form $(x_1,\ldots,x_d)$.
	For vectors $\mathbf{x} = (x_1,\ldots,x_d)$, $\mathbf{y} = (y_1,\ldots,y_d)$, and a relation $\sim$, we write $\mathbf{x} \sim \mathbf{y}$ as a shorthand for $x_i \sim y_i$ for all~$i$.
	For a ring~$R$, by an \emph{$R$-linear form} we mean a function of the form
	$
	h(x_1,\ldots,x_l) \coloneqq c_1x_1+\cdots+c_lx_l
	$
	where $c_i \in R$ for all $i$.
	We say that $\alpha, \beta \ne 0$ are \emph{multiplicatively independent} if for all $n_1, n_2 \in \nat$, $\alpha^{n_1} = \beta^{n_2}$ implies $n_1= n_2= 0$.

	\subsection{Logical theories}

	A \emph{structure} $\mathbb{M}$ consists of a universe $U$, constants $c_1,\ldots,c_k \in U$, predicates $P_1,\ldots,P_l$ where each $P_i \subseteq U^{\mu(i)}$ for some $\mu(i) \ge 1$, and functions $f_1,\ldots,f_m$ where each $f_i$ has the type $f_i \st U^{\delta(i)} \to U$ for some $\delta(i) \ge 1$.
	By the \emph{language} of the structure $\mathbb{M}$ we mean the set of all well-formed first-order formulas constructed from symbols denoting  the constants $c_1,\ldots, c_k$, predicates $P_1,\ldots,P_l$, and functions $f_1,\ldots, f_m$, as well as the symbols $\forall, \exists, \land, \lor, \lnot, =$.
	We will additionally write $x \in P$ for a unary predicate $P$ to mean $P(x)$.
	A \emph{term} is a well-formed expression constructed from constant, function, and variable symbols.
	Terms represent elements of the universe.
	A \emph{theory} is simply a set of sentences, i.e.\ formulas without free variables.
	The theory of the structure $\mathbb{M}$ is the set of all sentences in the language of $\mathbb{M}$ that are true in $\mathbb{M}$.
	A formula is \emph{existential} if it is of the form $\exists x_1 \cdots \exists x_m \st \varphi(x_1,\ldots,x_m)$ for $\varphi$ quantifier-free.
	The \emph{existential fragment} of a theory $\Tcal$, which itself is a theory, is the set of all existential formulas belonging to~$\Tcal$.
	Finally, a theory $\Tcal$ is \emph{decidable} if there exists an algorithm that takes a sentence $\varphi$ and decides whether $\varphi \in \Tcal$.

	For a positive integer $x$, denote by $x^{\nat}$ the unary predicate $\{x^n \st n \in\nat\}$.
	Let $\alpha,\beta > 1$.
	We will be working with the following structures and their theories.
	\begin{itemize}
		\item Let $\mathbb{M}_1  = \langle \intg; 0,1,<,+,\alpha^{\nat},\beta^{\nat}\rangle$.
		We will denote the language of this structure by $\Lcal_{\alpha,\beta}$ and its theory by $\mathcal{PA}(\alpha^\nat, \beta^\nat)$; in case $\alpha = \beta$, we will write $\mathcal{PA}(\alpha^\nat)$ for the latter.
		Observe that using the constants $0,1$ and addition, we can obtain any constant $c \in \nat$.
		On the other hand, $-c$ for $c>0$ can be accessed via the relation $x+c = 0$.
		In fact, for any $\intg$-linear form $h$ over $k$ variables,
		we can express $h(x_1,\ldots,x_k) = 0$ in the language $\Lcal_{\alpha,\beta}$ as $s(x_1,\ldots,x_k) = t(x_1,\ldots,x_k)$ where $s,t$ are $\intg$-linear forms with non-negative integer constants.
		Therefore, every atomic formula in $\Lcal_{\alpha,\beta}$ is equivalent to either $t \sim 0$ or $t \in \gamma^{\nat}$, for ${\sim} \in \{>, =\}$,  $\gamma \in \{\alpha,\beta\}$ and $t$ an integer linear combination of integer constants and variables.
		\item Let $\mathbb{M}_2 = \langle \nat; 0,1,<, +,  x \mapsto \alpha^x,  x \mapsto \beta^x \rangle$.
		That is, for $\gamma \in \{\alpha,\beta\}$, instead of the predicate~$\gamma^{\nat}$ we have the function that maps $x$ to $\gamma^x$.
		We write $\mathcal{PA}(\alpha^x,\beta^x)$ for the theory of $\mathbb{M}_2$.
		Note that the universe of $\mathbb{M}_2$ is $\nat$ as opposed to $\intg$.
		This is to ensure that the functions are total and map into the universe of the structure.
		For $\gamma \in \{\alpha,\beta\}$, we can express $x \in \gamma^\nat$ as $\exists z \st \gamma^z = x$.
		Therefore, if we can decide (the existential fragment of) $\mathcal{PA}(\alpha^x,\beta^x)$ then we can also decide (the existential fragment of) $\mathcal{PA}(\alpha^\nat,\beta^\nat)$.
	\end{itemize}

	A set $X \subseteq U^d$ is \emph{definable} in a structure $\mathbb{M}$ if there exists a formula $\varphi$ in the language of $\mathbb{M}$ with~$d$ free variables such that for all $x_1,\ldots,x_d \in U$, $\varphi(x_1,\ldots,x_d)$ is true if and only if $(x_1,\ldots,x_d) \in X$.
	A set $X \subseteq \intg^{d}$ is \emph{semilinear} if it is definable in the structure $\mathbb{M}_0 \coloneqq \langle \intg; 0, 1, <, + \rangle$.
	We write~$\Lcal$ for the language of $\mathbb{M}$, and $\mathcal{PA}$ for its theory.
	By the result of Presburger that the theory of $\mathbb{M}_0$ admits \emph{quantifier elimination} if we allow a divisibility predicate \cite{haase-presburger-survival}, such $X$ can be defined by a formula of the form
	\begin{equation}
		\label{eq::pres-qe-first}
		\bigvee_{i \in I}
		\biggl(
		\,
		\bigwedge_{j=J_i} t_{j}(x_1,\ldots,x_d) \equiv 0 \bmod D_{j} \:\:\land\:\:  \bigwedge_{k \in K_j} h_{k}(x_1,\ldots,x_d) \sim_{k} c_k
		\biggr)
	\end{equation}
	where $D_j \ge 1$ and each $t_{j}, h_{j}$ is a $\rat$-linear form, $c_k \in \intg$, and ${\sim}_{k} \in \{>,=\}$.

	\subsection{Number theory}
	Let $x \in \intg$ and $p \in \nat$ be a prime.
	Then the \emph{$p$-adic valuation} of $x$, denoted $\nu_p(x)$, is the largest integer $n$ such that $p^n$ divides $x$, whereas $p^{n+1}$ does not.
    By convention, $\nu_p(0) = +\infty$.
	For integers $x,y$ and a prime $p$ we have $\nu_p(x+y) \ge \min \{\nu_p(x),\nu_p(y)\}$.
	Let $z = \frac{a}{b} \in \rat$ be non-zero with $\gcd(a,b)=1$.
	Then the \emph{(absolute logarithmic) height} of~$z$ is $h(z) \coloneqq \max \{\log |a|, \log |b|\}$.
	For $z_1,\ldots,z_k \in \rat_{\ne 0}$ we have that $h(1/z_i) = h(z_i)$,
	\[
	h(z_1 + \cdots + z_k) \le h(z_1)+\cdots+h(z_k) +\log(k)
	\] and
	\[
	h(z_1\cdots z_k) \le h(z_1) + \cdots + h(z_k).
	\]
	The following is a specialisation of Matveev's version~\cite{matveev2000explicit} of Baker's theorem on linear forms to rational numbers.

	\begin{theorem}
		\label{thm::baker}
		Suppose we are given $k \ge 0$, non-zero $\gamma_1,\ldots,\gamma_k \in \rat$, and $b_1,\ldots,b_k \in \intg$.
		Write $B = \max \{1, |b_1|,\ldots,|b_k|\}$, $A_i = \max \{h(\gamma_i), |\log(\gamma_i)|, 0.16\}$ for $1 \le i \le k$, and
		\[
		\Lambda = \gamma_1^{b_1}\cdots\gamma_k^{b_k} - 1.
		\]
		Then, assuming $\Lambda \ne 0$,
		\[
		\log |\Lambda| > -1.4 \cdot 30^{k+3} \cdot k^{4.5} \cdot (1+\log(kB))\cdot A_1\cdots A_k.
		\]
	\end{theorem}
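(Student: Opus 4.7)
The plan is to follow the classical Gelfond--Baker methodology, as refined by Matveev, proving the bound by contradiction. Assume $|\Lambda|$ violates the claimed inequality; passing to logarithms (with an appropriate choice of branches for complex $\gamma_i$), this means that the linear form $L := b_1 \log \gamma_1 + \cdots + b_k \log \gamma_k$ is nonzero but vastly smaller than its ``generic'' size predicted by Liouville-type bounds. The strategy is to manufacture an auxiliary function whose analytic behaviour (it must be small at many points because $L$ is small) conflicts with its arithmetic structure (it has small integer-valued coefficients and controlled degree).

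First, I would construct an auxiliary polynomial roughly of the form
\[
\Phi(z) = \sum_{\lambda_0, \ldots, \lambda_{k-1}} p(\lambda_0,\ldots,\lambda_{k-1}) \, z^{\lambda_0} \, \gamma_1^{\lambda_1 z} \cdots \gamma_{k-1}^{\lambda_{k-1} z},
\]
where $\gamma_k$ has been eliminated via the near-relation $\gamma_k \approx \gamma_1^{-b_1/b_k}\cdots\gamma_{k-1}^{-b_{k-1}/b_k}$ coming from smallness of $L$. Applying Siegel's lemma to the linear system $\Phi^{(j)}(n) = 0$ for integers $n$ in a prescribed range and derivative orders $j$ up to a prescribed bound, I would obtain nontrivial integer coefficients $p(\lambda)$ of controlled height, with the height bound depending on $A_1, \ldots, A_k$ and $B$ in the way suggested by the standard height estimates recalled just before the theorem.

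Second, the extrapolation step uses the smallness of $L$ to show that $\Phi$ is in fact very small, and hence zero after rounding, at many additional integer points beyond those used in the construction. Iterating extrapolation against a zero-estimate / multiplicity lemma eventually forces $\Phi$ to have more zeros than its total degree permits, yielding $\Phi \equiv 0$. This contradicts the multiplicative independence structure of the $\gamma_i$, unless they satisfy a short exceptional relation, which is handled as a separate base case.

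The main obstacle, and the reason the bound takes the specific shape stated, lies in making the bookkeeping fully quantitative and sharpening the dependence on $k$. Matveev's key innovation is to replace Siegel's lemma with an \emph{interpolation determinant} argument: one estimates a Vandermonde-like determinant both analytically (it is small because of high-order vanishing of $\Phi$) and arithmetically (a product-of-heights lower bound for a nonzero algebraic integer), and optimises the auxiliary parameters --- number of extrapolation rounds, degrees, derivative orders --- to squeeze out the factor $30^{k+3} k^{4.5}$ in the final exponent. Reproducing this optimisation in detail is well outside a reasonable sketch, so in practice one simply cites \cite{matveev2000explicit}, as the authors do; in what follows I will use Theorem~\ref{thm::baker} as a black box.
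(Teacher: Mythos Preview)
The paper does not prove this theorem at all: it is introduced as ``a specialisation of Matveev's version~\cite{matveev2000explicit} of Baker's theorem'' and used thereafter purely as a black box. There is therefore no ``paper's proof'' to compare against; your final sentence already identifies the correct treatment, namely to cite Matveev and move on.

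Your sketch of the Gelfond--Baker--Matveev machinery is a reasonable high-level summary of the transcendence method (auxiliary function via Siegel's lemma or interpolation determinants, extrapolation, zero estimate), and nothing in it is wrong as an outline. But it is not a proof, nor does it pretend to be: the actual work lies precisely in the quantitative bookkeeping you explicitly decline to reproduce, and several details would need care even at the level of a sketch (for instance, the $\gamma_i$ here are nonzero rationals and may be negative, so ``$\log\gamma_i$'' already requires a convention; and the elimination of $\gamma_k$ you describe presupposes $b_k\neq 0$). Since the paper itself treats the result as imported, the appropriate content for this spot is simply the citation, which you correctly supply.
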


	The following is a consequence of Kronecker's theorem in Diophantine approximation \cite{gonek-kronecker}.
	\begin{lemma}
		\label{thm::kronecker}
		Let $\alpha,\beta \in \nat_{>1}$ be multiplicatively independent and $I \subseteq \rel_{>0}$ be a non-empty open interval.
		Then there exist infinitely many $n_1,n_2 \in \nat$ such that $\alpha^{n_1}/\beta^{n_2} \in I$.
	\end{lemma}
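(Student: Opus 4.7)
The plan is to take logarithms and reduce the multiplicative condition on $\alpha^{n_1}/\beta^{n_2}$ to a density statement for an irrational rotation on the circle. Writing $I = (a,b)$ with $0 < a < b$ and setting $\theta = \log\alpha/\log\beta$, the condition $\alpha^{n_1}/\beta^{n_2} \in I$ is equivalent, after dividing by $\log\beta$, to
\[
n_1\theta - n_2 \in (\log_\beta a,\,\log_\beta b).
\]
First I would observe that $\theta$ is irrational: a rational equality $\theta = p/q$ with $p,q$ positive integers rearranges to $\alpha^q = \beta^p$, contradicting multiplicative independence of $\alpha$ and $\beta$.

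Next I would apply the one-dimensional case of Kronecker's theorem (equivalently, Weyl equidistribution) to conclude that $\{n\theta \bmod 1 \st n \in \nat\}$ is dense in $[0,1)$. Projecting $(\log_\beta a, \log_\beta b)$ to the circle $\rel/\intg$ yields a non-empty open arc (an arc of length $\log_\beta(b/a)$ if this is less than $1$, and the full circle otherwise). Density then produces infinitely many $n_1 \in \nat$ such that $n_1 \theta \bmod 1$ lies in this arc; for each such $n_1$ there is an integer $n_2$ with $n_1\theta - n_2 \in (\log_\beta a, \log_\beta b)$. Finally, since $\theta > 0$ (as $\alpha > 1$) and the target interval is bounded, we have $n_2 = n_1\theta + O(1) \to \infty$, so $n_2 \in \nat$ for all but finitely many of these $n_1$. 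This yields infinitely many pairs $(n_1,n_2)\in\nat^2$ with $\alpha^{n_1}/\beta^{n_2} \in I$, as required.

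The argument is essentially routine once the right density statement is identified; the only piece of genuine mathematical content is the density of an irrational rotation on $\rel/\intg$, which is exactly the cited Kronecker result. The one small subtlety that requires attention is ensuring $n_2 \geq 0$ rather than merely $n_2 \in \intg$, and this is handled cleanly by letting $n_1 \to \infty$. No effective or quantitative Diophantine input (such as Baker's theorem) is needed here, since the conclusion is purely qualitative.
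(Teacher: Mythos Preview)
Your proof is correct and follows essentially the same approach as the paper: take logarithms, use multiplicative independence to get irrationality of $\log_\beta(\alpha)$, and invoke Kronecker's theorem on the density of the irrational rotation $n\theta \bmod 1$. The paper phrases the conclusion as density of $\{\alpha^{n_1}/\beta^{n_2}\}$ first in $(1,\beta)$ and then in $(0,\infty)$, whereas you project the target interval directly to the circle; your version is also slightly more explicit about why $n_2 \ge 0$, which the paper leaves implicit.
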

	\begin{proof}
		By multiplicative independence, $\log_\beta(\alpha)$ is irrational.
		Write $\{x\}$ for the fractional part of~$x$.
		By Kronecker's theorem, $\seq{ \{n\log_\beta(\alpha)\} }$ is dense in $(0,1)$.
		That is,
		\[
		\{n_1\log_\beta(\alpha) - n_2 \st n_1,n_2\in\nat\} \cap (0,1)
		\] is dense in $(0,1)$.
		Equivalently, $\{\alpha^{n_1}/\beta^{n_2} \st n_1,n_2\in\nat\} \cap (1,\beta)$ is dense in $(1,\beta)$.
		It follows that $\{\alpha^{n_1}/\beta^{n_2} \st n_1,n_2\in\nat\}$ is dense in $(0, \infty)$.
	\end{proof}

	\section{Overview of the results}
	\label{sec::overview}

	Recall that our central problem is to decide, given $\alpha,\beta$ and
	an existential formula $\varphi \in \Lcal_{\alpha,\beta}$,
	whether $\varphi \in \mathcal{PA}(\alpha^\nat,\beta^\nat)$.
	As the first step in our decidability proof, we will reduce our main problem to the following.

	\begin{problem}
		\label{problem1}
		Given multiplicatively independent $\alpha, \beta \in \nat_{>1}$, $z_1,\ldots,z_l \in \{\alpha,\beta\}$, $r,s \ge 0$, $A \in \intg^{r \times l}$, $\mathbf{b} \in \intg^{r}$, $C \in \intg^{s \times l}$, and $\mathbf{d} \in \intg^s$, decide whether there exists $\mathbf{z} = (z_1^{n_1},\ldots,z_l^{n_l})$ such that $A\mathbf{z} > \mathbf{b}$ and $C\mathbf{z} = \mathbf{d}$.
	\end{problem}

	The reduction from $\mathcal{PA}(\alpha^{\nat},\beta^\nat)$ to Problem~\ref{problem1} is captured by the following lemma.
	The proof, given in \Cref{sec::first-step}, uses fairly standard arguments about Presburger arithmetic.

	\begin{lemma}
		\label{thm::pres-to-system-reduction}
		Let $\alpha,\beta \in \nat_{>1}$.
		\begin{itemize}
			\item[(a)] If $\alpha,\beta$ are multiplicatively dependent, then deciding the existential fragment of $\mathcal{PA}(\alpha^{\nat},\beta^{\nat})$ reduces to deciding the existential fragment of the theory of $\mathcal{PA}(\gamma^\nat)$ for some $\gamma\in\nat$.
			\item[(b)] If $\alpha, \beta$ are multiplicatively independent, then deciding the existential fragment of $\mathcal{PA}(\alpha^{\nat},\beta^{\nat})$ reduces to Problem~\ref{problem1}.
		\end{itemize}
	\end{lemma}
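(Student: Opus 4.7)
The plan is to reduce satisfiability of an existential formula $\varphi \in \Lcal_{\alpha,\beta}$ to (a) the existential theory of $\mathcal{PA}(\gamma^\nat)$ for a suitable base $\gamma$, or (b) finitely many instances of Problem~\ref{problem1}. First I would put $\varphi$ into disjunctive normal form, eliminating negations of power predicates via the equivalence
\[
t \notin \gamma^\nat \iff t \le 0 \,\lor\, \exists y \,(y \in \gamma^\nat \,\land\, y < t < \gamma y),
\]
valid because consecutive elements of $\gamma^\nat$ bracket every non-power integer. In each disjunct, introduce a fresh variable for every occurrence of a power atom, so that a typical disjunct has the shape $\exists \vec x\,\vec u\,\vec v : \Psi(\vec x,\vec u,\vec v) \,\land\, \bigwedge_j (u_j \in \alpha^\nat) \,\land\, \bigwedge_k (v_k \in \beta^\nat)$, with $\Psi$ purely Presburger. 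Next, apply Presburger quantifier elimination to the unconstrained integer variables $\vec x$: this rewrites $\Psi$ into a disjunction of conjunctions of $\intg$-linear (in)equalities and divisibility constraints $D_i \mid \tau_i(\vec u,\vec v)$ as in~\eqref{eq::pres-qe-first}. The two cases of the lemma now diverge.

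For (a), fix $\gamma \in \nat_{>1}$ and $a,b \ge 1$ with $\alpha = \gamma^a$ and $\beta = \gamma^b$, and use the identity
\[
x \in \alpha^\nat \iff x \in \gamma^\nat \,\land\, (\alpha-1) \mid (x-1)
\]
(and the analogue for $\beta^\nat$). The $\Rightarrow$ direction is immediate since $\gamma^a - 1 = \alpha - 1$ divides $\gamma^{ak} - 1 = \alpha^k - 1$ for all $k \ge 0$. For $\Leftarrow$, I claim the multiplicative order of $\gamma$ modulo $\gamma^a - 1$ equals exactly $a$: otherwise some $\gamma^k - 1$ with $1 \le k < a$ would be a positive multiple of $\gamma^a - 1$, contradicting $0 < \gamma^k - 1 < \gamma^a - 1$. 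Substituting these equivalences in each disjunct yields an existential formula in the language of $\mathcal{PA}(\gamma^\nat)$, using that $D \mid z$ is itself existentially expressible as $\exists w\,(z = Dw)$.

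For (b), eliminate the divisibility constraints by exhaustive case analysis on residues. For each modulus $D_i$, both sequences $n \mapsto \alpha^n \bmod D_i$ and $m \mapsto \beta^m \bmod D_i$ are ultimately periodic with periods $T_\alpha, T_\beta$ and bounded pre-periods depending only on $\alpha,\beta,D_i$; thus every $D_i \mid \tau_i(\vec u,\vec v)$ becomes a Boolean condition on the tuple of residues $((n_j \bmod T_\alpha)_j, (m_k \bmod T_\beta)_k)$, together with finitely many exceptional cases where some exponent lies in a pre-period. For each consistent residue assignment, I would decompose $u_j = \alpha^{r_j}\,(\alpha^{T_\alpha})^{\tilde n_j}$ and $v_k = \beta^{s_k}\,(\beta^{T_\beta})^{\tilde m_k}$; after substitution the residual $\intg$-linear constraints become $\intg$-linear (with enlarged but still integer coefficients) in the new variables $\tilde u_j \coloneqq (\alpha^{T_\alpha})^{\tilde n_j}$ and $\tilde v_k \coloneqq (\beta^{T_\beta})^{\tilde m_k}$, giving an instance of Problem~\ref{problem1} with the multiplicatively independent bases $\alpha^{T_\alpha}$ and $\beta^{T_\beta}$. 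The original formula is satisfiable iff at least one of these finitely many constructed instances is. The main obstacle is essentially clerical: carefully enumerating residue tuples and pre-period cases and verifying that the substitution preserves the integrality of all coefficients.
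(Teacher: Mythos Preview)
Your proposal is correct. Part (b) follows essentially the same route as the paper: eliminate negated power atoms, name power occurrences by fresh variables, apply Presburger quantifier elimination to the remaining integer variables, and then discharge the resulting divisibility constraints by a finite case split on residues using the ultimate periodicity of $n\mapsto\alpha^n\bmod D$ and $n\mapsto\beta^n\bmod D$, substituting $u_j=\alpha^{r_j}(\alpha^{T_\alpha})^{\tilde n_j}$ and $v_k=\beta^{s_k}(\beta^{T_\beta})^{\tilde m_k}$. Two small points you gloss over but which the paper makes explicit: one should first pass to a single common modulus $D$ (so that a single pair of periods $T_\alpha,T_\beta$ works for all constraints simultaneously), and the multiplicative independence of the new bases $\alpha^{T_\alpha},\beta^{T_\beta}$ needs a one-line check, namely that $(\alpha^{T_\alpha})^p=(\beta^{T_\beta})^q$ would contradict the independence of $\alpha,\beta$.

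Part (a), however, is genuinely different from the paper's argument. The paper treats the dependent case with the \emph{same} periodicity machinery as (b), the extra observation being that when $\alpha,\beta$ are multiplicatively dependent one can choose the periods so that $\alpha^{D_\alpha}=\beta^{D_\beta}=:\gamma$, whence after substitution only powers of this single $\gamma$ remain. You instead bypass the periodicity argument entirely: taking a common base $\gamma$ with $\alpha=\gamma^a$, $\beta=\gamma^b$, you observe that $x\in\alpha^\nat$ is existentially definable in $\mathcal{PA}(\gamma^\nat)$ via $x\in\gamma^\nat\wedge(\alpha-1)\mid(x-1)$, using that the multiplicative order of $\gamma$ modulo $\gamma^a-1$ is exactly $a$. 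This is correct and arguably more direct (indeed you do not even need the preceding quantifier-elimination step for this case). The paper's approach buys uniformity---one mechanism handles both (a) and (b)---while yours buys a shorter, self-contained reduction for (a). You should, for completeness, justify the existence of the common base $\gamma$: if $\alpha^p=\beta^q$ with $\gcd(p,q)=1$, then comparing prime valuations shows $\alpha$ is a $q$th power and $\beta$ a $p$th power of the same integer.
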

	Recall that the full theory $\mathcal{PA}(\gamma^\nat)$ is known to be decidable.
	Hence it remains to show decidability of Problem~\ref{problem1}, which we do in \Cref{sec::equations,sec::ineqs}.
	\begin{theorem}
		\label{thm::problem1-decidable}
		Problem~\ref{problem1} is decidable.
	\end{theorem}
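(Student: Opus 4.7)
The plan is to dispatch the equality system $C\mathbf{z}=\mathbf{d}$ first, producing an effective description of its solution set as a finite union of parametric families of exponent tuples, and then test the inequality system $A\mathbf{z}>\mathbf{b}$ against each family in turn.

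For the equations, I would proceed row by row. A single equation $\sum_i c_i z_i^{n_i} = d$ with $z_i \in \{\alpha,\beta\}$ is, up to the constant $d$, an $S$-unit equation over $\rat$ with $S$ the primes dividing $\alpha\beta$. The key dichotomy is: either the maximum exponent is bounded by an effectively computable constant (in which case enumerate), or the largest term $c_j z_j^{n_j}$ must be nearly cancelled by a second term $c_k z_k^{n_k}$. In the latter case, writing $(c_k/c_j)\, z_k^{n_k} z_j^{-n_j} + 1$ as a ratio and applying Baker's theorem (Theorem~\ref{thm::baker}) to the corresponding $\rat$-linear form in $\log\alpha,\log\beta$ and logarithms of the rational coefficients forces either this linear form to vanish exactly, yielding an explicit affine relation between $n_j$ and $n_k$, or the maximum exponent to be bounded by another effective constant. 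Substituting an exact relation reduces the equation to one with fewer terms, and recursing produces a finite union of parametric affine families $\Pi_1,\ldots,\Pi_N \subseteq \nat^l$ whose union is exactly the solution set of the equation. Intersecting across the $s$ rows of $C$ gives a similar description for $C\mathbf{z}=\mathbf{d}$.

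For the inequality step, each parametric family is of the form $\{\mathbf{n}_0 + M\mathbf{k} : \mathbf{k} \in \nat^t\}$ for explicit $\mathbf{n}_0$, $M$, and $t$. Substituting into $A\mathbf{z}>\mathbf{b}$ produces inequalities between sums of monomials in $\alpha,\beta$ parameterised by $\mathbf{k}$. I split further on the asymptotic domination pattern of these monomials: for each consistent preorder of their magnitudes, the sign of each inequality is fixed by that of the leading monomial, the sub-leading terms being exponentially negligible. Kronecker's theorem (Lemma~\ref{thm::kronecker}) then supplies arbitrarily large $n_1,n_2$ placing $\alpha^{n_1}/\beta^{n_2}$ in any prescribed open interval, so every domination pattern that is internally consistent with $\mathbf{k}$ ranging over $\nat^t$ is realised by infinitely many $\mathbf{k}$, while inconsistent patterns are rejected.

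The main obstacle I expect is keeping Baker's bound effective across the recursion on the equation side: each cancellation step rescales by a new rational coefficient whose height enters the $A_i$'s in Theorem~\ref{thm::baker}, so the compound bound on the maximum exponent must be tracked carefully to remain computable. A secondary subtlety arises when $\gcd(\alpha,\beta)>1$: the archimedean Baker estimate should then be supplemented by $p$-adic valuation arguments at primes shared by $\alpha$ and $\beta$, in order to cleanly separate $\alpha$-contributions from $\beta$-contributions in the enumeration.
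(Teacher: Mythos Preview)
Your treatment of the inequality system has a genuine gap. The claim that ``the sign of each inequality is fixed by that of the leading monomial, the sub-leading terms being exponentially negligible'' fails precisely in the case that carries all the difficulty: when the two largest terms have \emph{different} bases and opposite signs, say $c_1\alpha^{n_1}$ and $c_2\beta^{n_2}$ with $c_1c_2<0$. No preorder on magnitudes decides the sign of $c_1\alpha^{n_1}+c_2\beta^{n_2}+(\text{smaller terms})$; what matters is whether $\alpha^{n_1}/\beta^{n_2}$ lies above or below $-c_2/c_1$, and with several such rows simultaneously you are asking for $\alpha^{n_1}/\beta^{n_2}$ to land in an interval of width $O(z_3^{n_3}/\beta^{n_2})$, which \emph{shrinks} as $n_2-n_3$ grows. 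A single invocation of Kronecker (\Cref{thm::kronecker}) gives density in a fixed open interval, not the ability to hit a shrinking target while at the same time controlling the lower-order terms $z_3^{n_3},\ldots,z_l^{n_l}$ so that the window stays non-empty. The paper handles this by a Fourier--Motzkin-style elimination that isolates exactly this configuration (Case~1 in the proof of \Cref{thm::ineq-pure}), a Pumping Lemma (\Cref{thm::pumping}) that lets one rescale a solution of the lower-order subsystem while keeping ratios almost fixed, and a simultaneous-approximation lemma (\Cref{thm::simult-diophantine-appr}) that synchronises the Kronecker approximation for $\alpha^{n_1}/\beta^{n_2}\approx a$ with the one needed by the Pumping Lemma. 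Your outline contains none of this machinery, and without it the argument does not close.

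On the equation side your plan is broadly right but imprecise in a way that matters for the induction. The solution set of $C\mathbf{z}=\mathbf{d}$ is not an arbitrary union of cosets $\{\mathbf{n}_0+M\mathbf{k}\}$: the affine relations produced are always of the form $n_a=n_b+c$ with $z_a=z_b$, or $n_a=c$ (this is the class $\mathfrak{A}$ of \Cref{thm::equalities-main}). That same-base restriction is what makes substitution return a problem of the \emph{same shape} in fewer variables; a general $M$ could couple an $\alpha$-exponent to a $\beta$-exponent and leave you outside the problem class. Also, the $p$-adic valuation argument is needed for the homogeneous case $d=0$ (see \Cref{thm::prime-p} and Case~2 of \Cref{thm::equalities-big-thm}), not merely when $\gcd(\alpha,\beta)>1$: Baker bounds the gaps $n_{\mu(j)}-n_j$ polylogarithmically but gives no absolute bound on $n_1$ when $d=0$, and the valuation comparison is what supplies it.
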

	We approach Problem~\ref{problem1} by first studying how to solve systems of the form $C\mathbf{z} = \mathbf{d}$, i.e.\ the case where there are no inequalities.
	The following definition captures the structure of solutions of such systems.

	\begin{definition}
		A set $X \subseteq \nat^{l}$ belongs to the class $\mathfrak{A}$ if it can be written in the form
		\begin{equation}
			\label{eq::aclass-form}
			X = \bigcup_{i \in I} \: \bigcap_{j \in J_i} X_{j}
		\end{equation}
		where $I$ and $J_i$ for every $i \in I$ are finite, and each $X_{j}$ is either of the form
		\begin{equation}\label{eq::aclass-1}
			X_{j} = \big\{(n_1,\ldots,n_l) \in \nat^l \st n_{\mu(j)} = n_{\sigma(j)} + c_{j}\big\}
		\end{equation}
		or of the form
		\begin{equation}
			\label{eq::aclass-2}
			X_{j} = \big\{(n_1,\ldots,n_l) \in \nat^l  \st n_{\xi(j)} = b_{j}\big\}
		\end{equation}
		where $1 \le \xi(j), \mu(j), \sigma(j) \le l$ and $b_{j}, c_{j} \in \nat$.
	\end{definition}
	The sets belonging to $\mathfrak{A}$ are semilinear.
	Observe that every finite subset of $\nat^l$ belongs to $\mathfrak{A}$, and the class $\mathfrak{A}$ is closed under finite unions and intersections.
	In \Cref{sec::equations}, we will prove the following structure and effectiveness result about the system $C\mathbf{z} = \mathbf{d}$.
	Our main tool is Baker's theorem on linear forms, which is frequently used when solving Diophantine equations where the unknowns appear in the exponent position.

	\begin{restatable}{theorem}{equationsmain}
		\label{thm::equalities-main}
		Let $\alpha,\beta \in \nat_{>1}$ be multiplicatively independent and $z_1,\ldots,z_l \in \{\alpha,\beta\}$ for some $l \ge 1$.
		Further let $s \ge1$, $C \in \intg^{s \times l}$, $\mathbf{d} \in \intg^s$, and $\Scal \subseteq \nat^l$ be the set of solutions of $C \mathbf{z} = \mathbf{d}$, where $\mathbf{z} = (z_1^{n_1},\ldots,z_l^{n_l})$.
		Then $\Scal \in \mathfrak{A}$.
		Moreover, a representation of $\Scal$ in the form~\eqref{eq::aclass-form} can be effectively computed, with the additional property that $z_{\mu(j)} = z_{\sigma(j)}$ for every $X_{j}$ of the form~\eqref{eq::aclass-1}.
	\end{restatable}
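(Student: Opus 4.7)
The plan is to reduce to a single equation and then induct on the number of exponential terms, using Baker's theorem (Theorem~\ref{thm::baker}) in the non-degenerate case and recursive partitioning in the degenerate case. Since $\mathfrak{A}$ is closed under finite intersection and $\mathcal{S}$ is the intersection over the rows of $C\mathbf{z} = \mathbf{d}$ of their individual solution sets, it suffices to handle a single equation $\sum_{i=1}^{l} c_i z_i^{n_i} = d$ with all $c_i \ne 0$. Setting $c_0 z_0^{n_0} := -d$ as a formal ``$0$-th term'', rewrite the equation as $\sum_{i=0}^{l} c_i z_i^{n_i} = 0$ and induct on the total number $l+1$ of terms.

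Call a solution $\mathbf{n}$ \emph{non-degenerate} if no proper non-empty sub-sum of the terms vanishes. For non-degenerate solutions, relabel so that $|c_1 z_1^{n_1}| \ge |c_2 z_2^{n_2}| \ge \cdots$. The equation then forces
\[
\Bigl|\tfrac{c_1}{c_2}\,z_1^{n_1} z_2^{-n_2} + 1\Bigr| \;\le\; K\cdot \tfrac{z_3^{n_3}}{z_2^{n_2}}
\]
for an effective constant $K$ depending only on $l$ and the $c_i$. Non-degeneracy guarantees the left-hand side is non-zero, so Theorem~\ref{thm::baker} gives a lower bound on it that is polynomial in $\log\max_i n_i$, which in turn bounds $n_2\log z_2 - n_3\log z_3$ effectively. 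Iterating the same estimate down the list of magnitudes bounds every successive gap and hence yields an effective a priori bound on $\max_i n_i$; the non-degenerate locus is therefore a finite, effectively enumerable subset of $\nat^l$, which lies trivially in $\mathfrak{A}$ by expressing each solution as a conjunction of atoms of form~\eqref{eq::aclass-2}.

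For the degenerate solutions, enumerate proper non-empty subsets $T \subsetneq \{0,1,\ldots,l\}$ and let $\mathcal{S}_T$ denote the solution set of the sub-equation $\sum_{i \in T} c_i z_i^{n_i} = 0$. Then $\mathcal{S}_{\mathrm{deg}} = \bigcup_T (\mathcal{S}_T \cap \mathcal{S}_{T^c})$, and each $\mathcal{S}_T$ involves strictly fewer terms, so by induction lies in $\mathfrak{A}$ with the required same-base property on atoms of form~\eqref{eq::aclass-1}. Closure of $\mathfrak{A}$ under union and intersection gives $\mathcal{S} = \mathcal{S}_{\mathrm{nd}} \cup \mathcal{S}_{\mathrm{deg}} \in \mathfrak{A}$. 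The base case $l+1 \le 2$ is immediate: a one-term equation has at most one solution; a two-term homogeneous sub-equation $c_i z_i^{n_i} + c_j z_j^{n_j} = 0$ with $z_i = z_j$ yields $n_i = n_j + c^*$ for a computable $c^* \ge 0$ (form~\eqref{eq::aclass-1} with the same-base property, as required), while $z_i \ne z_j$ combined with multiplicative independence of $\alpha$ and $\beta$ forces $n_i = n_j = 0$ (form~\eqref{eq::aclass-2}).

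The main technical obstacle is the non-degenerate case: making the Baker estimate fully uniform over all possible orderings of the terms by magnitude, handling near-ties, and chaining the successive inequalities into an explicit a priori bound on $\max_i n_i$ in terms of $\alpha,\beta,l$, the $c_i$, and $d$. The rest of the argument, including preservation of the same-base constraint on atoms of form~\eqref{eq::aclass-1}, follows automatically from the inductive partitioning structure or is dispatched in the base case.
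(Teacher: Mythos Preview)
Your overall strategy---reduce to one equation, induct on the number of terms, split into degenerate and non-degenerate loci, and bound the non-degenerate locus via Baker---matches the paper's. However, the claim that iterating Baker ``yields an effective a priori bound on $\max_i n_i$'' is false when $d=0$. Baker's theorem bounds the \emph{ratios} between successive terms (equivalently, the gaps $n_i\log z_i - n_{i+1}\log z_{i+1}$), but without a constant-magnitude anchor among the terms it gives no bound on the absolute size of the exponents. A concrete counterexample: the equation $2\alpha^{n_1}-\alpha^{n_2}-\alpha^{n_3}=0$ has the infinite family of solutions $n_1=n_2=n_3=n$ for every $n\in\nat$, and one checks that every proper non-empty sub-sum (e.g.\ $2\alpha^n-\alpha^n=\alpha^n$) is non-zero, so these solutions are all non-degenerate in your sense. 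Thus the non-degenerate locus is not finite in general, and the partition $\mathcal{S}=\mathcal{S}_{\mathrm{nd}}\cup\mathcal{S}_{\mathrm{deg}}$ does not suffice.

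The paper repairs this with two additional ingredients that your sketch omits. First, before passing to the non-degenerate analysis it peels off a further layer $\mathcal{S}_1$ consisting of solutions in which some same-base pair satisfies $0\le n_a-n_b\le N$ for an explicit $N$; these are handled by substituting $n_a=n_b+k$ and recursing on fewer variables (this is exactly what absorbs the counterexample above). What remains after removing $\mathcal{S}_1$ and the degenerate part has the property that the two largest terms have \emph{different} bases, so Theorem~\ref{thm::equalities-big-thm} applies. Second, even under that different-bases hypothesis, when $d=0$ the iterated Baker estimate still only bounds gaps; the paper closes the argument with a $p$-adic valuation computation (Lemma~\ref{thm::prime-p} and Case~2 of Theorem~\ref{thm::equalities-big-thm}(iii)), exploiting a prime $p$ with $\log\alpha/\log\beta > \nu_p(\alpha)/\nu_p(\beta)$ to turn the gap bounds into a genuine bound on $n_1$. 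Both pieces are essential and are missing from your proposal.
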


	When proving \Cref{thm::equalities-main}, because the class $\mathfrak{A}$ is closed under intersections, it suffices to consider a single equality
	\begin{equation}
		\label{eq::helper-1536}
		c_1z_1^{n_1} + \cdots + c_lz_l^{n_l} = d
	\end{equation}
	where $c_1,\ldots,c_l\in \intg_{\ne 0}$, $d\in \intg$, $\alpha, \beta > 1$ are multiplicatively independent, and $z_1,\ldots,z_l \in \{\alpha,\beta\}$.
	We will show that the set $\Scal$ of solutions of \eqref{eq::helper-1536} belongs to $\mathfrak{A}$ and has an effectively computable representation.
	Let us further stipulate that $z_i = \alpha$ and $z_j = \beta$ for some $i, j$, and that no proper sub-sum of the left-hand side of~\eqref{eq::helper-1536} is zero.
	In this case,
	it can be shown that the set of solutions is finite and can be effectively computed; see the proof of \Cref{thm::equalities-big-thm}.\footnote{For convenience, we make additional assumptions on $z_1^{n_1},\ldots,z_l^{n_l}$ in the statement of \Cref{thm::equalities-big-thm}.
		A slightly modified proof can be given to show finiteness of solutions of \eqref{eq::helper-1536} only assuming that both $\alpha,\beta$ appear among $z_1,\ldots,z_l$ and requiring that all proper sub-sums of $c_1z_1^{n_1} + \cdots + c_lz_l^{n_l}$ be non-zero.}
	The idea is to use Baker's theorem on linear forms iteratively to bound the gaps between $n_1,\ldots,n_l$, which, in case $d\ne0$, will yield an upper bound on all of $n_1,\ldots,n_l$.
	If $d=0$, then we need an additional argument involving $p$-adic valuations.
	On the other hand, if $\bigcap_{j\in J_i} X_{j}$ is infinite for some $i$ in the representation of $\Scal$  in the form~\eqref{eq::aclass-form}, then some sub-sum of $c_1z_1^{n_1} + \cdots + c_lz_l^{n_l}$ must be zero at infinitely many points $(n_1,\ldots,n_l)$.

	\begin{example}
	     Consider the equation
	\begin{equation}
		\label{eq::helper-2052}
		15\cdot3^{n_1} - 5\cdot 3^{n_2} + 2^{n_3} = 8.
	\end{equation}
	The only proper sub-sum of the left-hand side that can be zero is $15\cdot3^{n_1} - 5\cdot 3^{n_2}$.
	We therefore have the infinitely many solutions
	\[
	X \coloneqq \big\{(n_1,n_2,n_3) \in \nat^3 \st n_2 = n_1 + 1 \land n_3 = 3\big\}.
	\]
	Now suppose no proper sub-sum is zero.
	Let us additionally stipulate that $3^{n_2} \ge  3^{n_1} \ge 2^{n_3}$.
	In this case, if $n_2 > n_1 + 1$, then the summand $5 \cdot 3^{n_2}$ becomes too large in magnitude: we have that $5 \cdot 3^{n_2} \ge 45 \cdot 3^{n_1}, 45 \cdot 2^{n_3}$ and hence~\eqref{eq::helper-2052} cannot hold.
	Therefore, we are left with the possibilities $n_2 = n_1$ and $n_2 = n_1 + 1$.
	If we substitute $n_2 = n_1$ into \eqref{eq::helper-2052}, we obtain $10 \cdot 3^{n_2} + 2^{n_3} = 8$, which does not have a solution.
	The substitution $n_2 = n_1 + 1$, meanwhile, is not permitted as $15\cdot3^{n_1} - 5\cdot 3^{n_2}$ becomes zero.

	Using the same argument as above, we can handle the case where $3^{n_1} \ge  3^{n_2} \ge 2^{n_3}$.
	The four remaining cases (e.g.\ $3^{n_1} \ge  2^{n_3} \ge 3^{n_2}$), however, require an iterated application of Baker's theorem on linear forms as in~\Cref{thm::equalities-big-thm} to bound the solutions.
	Checking all possible $(n_1,n_2,n_3)$ up to this bound, we obtain that the set of all solutions of~\eqref{eq::helper-2052} is $\{(0,3,7), (1,8,15)\} \cup X$.
	\end{example}

	Once we know how to solve systems of linear equations in powers of $\alpha$ and $\beta$, we discuss how we deal with inequalities.
	In \Cref{sec::ineqs}, we argue as follows.
	Consider a system $A\mathbf{z} > \mathbf{b}$ and $C\mathbf{z} = \mathbf{d}$ as in the statement of Problem~\ref{problem1}.
	Observing that $x > -c$ is equivalent to $x = -c+1 \lor \cdots \lor x =0 \lor x > 0$ for any variable $x$ and positive integer $c$, we rewrite our system in the form
	\[
	\bigvee_{k \in K} A_k\mathbf{z} > \mathbf{b}_k \:\land\: C_k\mathbf{z} = \mathbf{d}_k
	\]
	where $A_k \in \intg^{r \times l}, C_k \in \intg^{s \times l}, \mathbf{b}_k \in \intg^r, \mathbf{d}_k \in \intg^s$ and, importantly, $\mathbf{b}_k \ge \zerovec$ for all $k$.
	We can now solve each $A_k\mathbf{z} > \mathbf{b}_k \:\land\: C_k\mathbf{z} = \mathbf{d}_k$ separately.
	Denote by $\Scal_k$ the set of all $(n_1,\ldots,n_l) \in \nat^l$ that satisfy $C_k\mathbf{z} = \mathbf{d}_k$.
	By \Cref{thm::equalities-main}, apart from finitely many exceptional solutions which can be effectively computed, the set $\Scal_k$ is defined by equations of the form either $n_a = n_b + c$ or $n_a = c$, where $1 \le a,b \le l$, $c \in \nat$, and $z_a = z_b$ in the former case.
	We can use each such equation as a substitution rule to eliminate the variable $z_a$; see~the proof of \Cref{thm::problem1-decidable} on \Cpageref{proof-of-problem1-decifable} for the exact procedure.
	In the end, we construct  $\widetilde{A}_k \in \intg^{r \times l}$ such that $A_k\mathbf{z} > \mathbf{b}_k \:\land\: C_k\mathbf{z} = \mathbf{d}_k$ has a solution if and only if $\widetilde{A}_k\mathbf{z} > \mathbf{b}_k$ has a solution.

	It remains to show how to solve the system $\widetilde{A}_k\mathbf{z} > \mathbf{b}_k$.
	To do this, we first argue that  $\widetilde{A}_k\mathbf{z} > \mathbf{b}_k$ has a solution if and only if $\widetilde{A}_k\mathbf{z} > \zerovec$ has a solution.
	Next, using a form of Fourier-Motzkin elimination, we reduce solving the latter system to solving systems of the form
	\begin{equation}
		\label{eq::overview-system}
		\begin{cases}
			\vspace*{0.15cm} 	\frac{h_i(z_3^{n_3},\ldots,z_l^{n_l})}
			{z_2^{n_2}}
			< z_1^{n_1}/z_2^{n_2} - a
			<  \frac{h_j(z_3^{n_3},\ldots,z_l^{n_l})}{z_2^{n_2}} \quad\text{for all $(i,j) \in X_1 \times X_2$}\\
			\vspace*{0.15cm} z_1^{n_1},z_2^{n_2} > z_3^{n_3} > \cdots > z_l^{n_l}\\
			\vspace*{0.15cm} h_i(z_3^{n_3},\ldots,z_l^{n_l}) > 0 \quad\text{for all $i \in I$}\\
			n_2-n_3 > N
		\end{cases}
	\end{equation}
	where $X_1, X_2, I$ are finite sets of indices, each $h_i$ is a $\rat$-linear form, $a \in \rat_{>0}$, and $N \in \nat$.
	Our algorithm for solving the system~\eqref{eq::overview-system} proceeds by first inductively solving the sub-system consisting of the inequalities $h_i(z_3^{n_3},\ldots,z_l^{n_l}) < h_j(z_3^{n_3},\ldots,z_l^{n_l})$ for all $(i,j) \in X$, $z_3^{n_3} > \cdots > z_l^{n_l}$, and $h_{i}(z_3^{n_3},\ldots,z_l^{n_l}) > 0$ for $i \in I$.
	If no such solution exists, then~\eqref{eq::overview-system} does not have a solution either.
	Otherwise, let $(m_3,\ldots,m_l)$ be a solution to the sub-system.
	In \Cref{sec::ineqs}, we use arguments from Diophantine approximation to prove that in the latter case the system~\eqref{eq::overview-system} does always have a solution, and show to construct such a solution from $(m_3,\ldots,m_l)$.

	In~\Cref{sec::hardness} we prove~\Cref{thm::main-hardness} and give a whole class of queries that become decidable assuming decidability of the existential fragment of $\mathcal{PA}(\alpha^x,\beta^x)$.
	These include occurrence of a given pattern in base-$\beta$ or base-$\alpha$ expansions of $\log_\alpha(\beta)$.
	Finally, in~\Cref{sec::undec} we show that for any multiplicatively independent $\alpha,\beta \in \nat_{>1}$, already for formulas with three alternating blocks of quantifiers (i.e.\ formulas with quantifier alternation depth 2) the membership problem in $\mathcal{PA}(\alpha^\nat,\beta^\nat)$ is undecidable.
	This result is included for the sake of describing the decidability landscape as completely as possible.
	We use the approach developed by Hieronymi and Schulz in \cite{hieronymi-schulz}, but reduce from the Halting Problem for 2-counter machines as opposed to the Halting Problem for Turing machines, which results in a simpler construction.
	Thus for any multiplicatively independent $\alpha,\beta$, the decidability question for $\mathcal{PA}(\alpha^\nat,\beta^\nat)$ remains open only for formulas containing exactly two alternating blocks of quantifiers.

	\section{From formulas to systems of inequalities}
	\label{sec::first-step}
	We now prove \Cref{thm::pres-to-system-reduction}.
	Our main tool is the fact that semilinear sets have quantifier-free representations constructed from linear inequalities and divisibility constraints.
	We note that our reduction from the decision problem for the existential fragment of $\mathcal{PA}(\alpha^\nat,\beta^\nat)$ to Problem~\ref{problem1} does not preserve $\alpha$ and~$\beta$.

	\begin{proof}[Proof of \Cref{thm::pres-to-system-reduction}]
		Suppose we are given $\alpha,\beta> 1$ and an existential formula $\exists \mathbf{z} \st \varphi(\mathbf{z})$ in the language $\Lcal_{\alpha,\beta}$,  where $\varphi$ is quantifier-free and $\mathbf{z}$ is a collection of variables.
		Before inspecting whether $\alpha, \beta$ are multiplicatively independent, we will apply a sequence of transformations to $\exists \mathbf{z} \st \varphi(\mathbf{z})$.
		For a term $t$ and $\gamma \in \{\alpha,\beta\}$, we can rewrite the formula $\lnot(t \in \gamma^{\nat})$ as
		\[
		t < 1 \lor \exists x\st x \in \gamma^\nat \land x < t < \underbrace{x + \cdots + x}_{\gamma \textrm{ times}}.
		\]
		Since $\lnot(t > 0)$ and $\lnot (t = 0)$ are equivalent to $t < 0 \lor t = 0$ and $t >0 \lor t <0$ respectively, we can construct a formula $\exists \mathbf{x} \st \widehat{\varphi}(\mathbf{x})$ equivalent to $\exists \mathbf{z} \st \varphi(\mathbf{z})$ in which the negation symbol does not occur.
		We can also rewrite $t \in \gamma^\nat$ as $y = t \land y \in \gamma^\nat$, where~$y$ is a fresh variable.
		Therefore, we can construct a formula
		\[
		\widetilde{\varphi}(\mathbf{y},\mathbf{x})\coloneqq \bigvee_{e \in E} \bigwedge_{j\in J_e}
		\mu_{j}(\mathbf{y},\mathbf{x})
		\]
		not containing the negation symbol,
		where $\mathbf{y}$ denotes a collection $y_1,\ldots,y_l$ of fresh variables, with the following properties.
		\begin{itemize}
			\item
			$\exists \mathbf{z} \in \intg^k
			\st  \varphi(\mathbf{z}) \Leftrightarrow \exists\mathbf{y} \in \intg^l, \mathbf{x} \in \intg^k \st \widetilde{\varphi}(\mathbf{y},\mathbf{x})$.
			\item
			For each $y_i$, there exists unique $\gamma_i \in \{\alpha,\beta\}$ such that $y_i \in \gamma_i^{\nat}$ is a sub-formula of $\widetilde{\varphi}$.
			\item
			Each $\mu_{j}(\mathbf{y},\mathbf{x})$ is an atomic formula either of the form $t(\mathbf{y},\mathbf{x}) \sim 0$ for a term $t(\mathbf{y}, \mathbf{x})$
			and ${\sim} \in \{>,=\}$, or of the form $y_i \in \gamma_i^{\nat}$ for some $i$.
		\end{itemize}
		Next, write each $\bigwedge_{j\in J_e}
		\mu_{j}(\mathbf{y},\mathbf{x})$ in the form
		\[
		\bigwedge_{j \in A_e} y_{\sigma(j)} \in \gamma_{\sigma(j)}^{\nat} \:\land\:
		 \bigwedge_{j \in B_e} t_{j}(\mathbf{y},\mathbf{x}) \sim_{j} 0
		\]
		where $\sigma(j) \in \{1,\ldots,l\}$ and ${\sim}_{j} \in \{>,=\}$ for all $j$.
		We can then write
		$ \exists\mathbf{y} \in \intg^l, \mathbf{x} \in \intg^k \st \widetilde{\varphi}(\mathbf{y},\mathbf{x})$ equivalently as
		\begin{equation}
			\label{eq::helper-2325}
			\bigvee_{e \in E}
			\biggl(
			\exists \mathbf{y} \in \intg^l\st
			\bigwedge_{j \in A_e} y_{\sigma(j)} \in \gamma_{\sigma(j)}^{\nat}
			\:\land\:
			\exists \mathbf{x} \in \intg^k \st \bigwedge_{j \in B_e} t_{j}(\mathbf{y},\mathbf{x}) \sim_{j} 0
			\biggr).
		\end{equation}
		For $e \in E$, let $S_e$ be the set of all $\mathbf{y} \in \intg^l$ such that $ \exists \mathbf{x} \in \intg^k \st \bigwedge_{j \in B_e} t_{j}(\mathbf{y},\mathbf{x}) \sim_{j} 0$ holds.
		Observe that each $S_e$ is semilinear.
		Setting $z_i = \gamma_i$ and $y_i = z_i^{n_i}$ for $1 \le i \le l$, we can rewrite \eqref{eq::helper-2325} as
		\[
		\bigvee_{e \in E}
		\exists n_1,\ldots, n_l \in\nat\st (z_1^{n_1},\ldots,z_l^{n_l}) \in S_e
		\]
		which is equivalent to
		\[
		\exists n_1,\ldots, n_l \in \nat \st (z_1^{n_1},\ldots,z_l^{n_l}) \in S
		\]
		for semilinear $S = \bigcup_{e \in E} S_e$.

		Recall from \Cref{sec::prelims} that each semilinear set has a representation in the form~\eqref{eq::pres-qe-first}.
		For $x, y, r \ge 0$  and $\lambda, D \ge 1$, note that $x + y \equiv r \bmod D$ is equivalent to
		\[
		\bigvee_{
		\substack{
		0 \le r_1,r_2 < D\\
		r_1 + r_2 \equiv r \bmod D}
		}
		x \equiv r_1 \bmod D \:\land\: y \equiv r_2 \bmod D
		\]
		and $x \equiv r \bmod D$ is equivalent to $\bigvee_{k=0}^{\lambda-1} x \equiv r + kD \bmod \lambda D$.
		Hence we can construct $D \ge 1$ and a representation of $S$ of the form
		\begin{equation}
			\label{eq::helper-1601}
			\bigvee_{p \in P}
			\biggl(
			\,
			\bigwedge_{i=1}^l x_{i} \equiv r_{i,p} \bmod D \:\:\land\:\:  \bigwedge_{s \in S_p} h_{s}(x_1,\ldots,x_d) \sim_{s} b_s
			\biggr)
		\end{equation}
		where each $r_{i,p} \ge 0$, $h_s$ is a $\intg$-linear form, $b_s \in \intg$, and ${\sim}_s \in \{>, =\}$.
		Write $\widetilde{S}_p$ for the set defined by $p \in P$ in~\eqref{eq::helper-1601}, so that $S = \bigcup_{p\in P} \widetilde{S}_p$.
		It suffices to reduce deciding $\exists n_1,\ldots,n_l \in \nat \st (z_1^{n_1},\ldots,z_l^{n_l}) \in \widetilde{S}_p$ to either Problem~\ref{problem1} or deciding $\mathcal{PA}(\gamma)$ for some $\gamma$, depending on whether $\alpha,\beta$ are multiplicatively independent.
		To do this, first observe that for $\gamma\in \{\alpha,\beta\}$, the sequence $\seq{\gamma^n \bmod D}$ is ultimately periodic, with the additional property that if $x$ occurs at least twice in the sequence, then it occurs infinitely often.
		Next, construct $D_\alpha, D_\beta$ such that
		\begin{itemize}
			\item[(i)] $\seq{\alpha^n \bmod D}$ is ultimately periodic with period $D_\alpha$,
			\item[(ii)] $\seq{\beta^n \bmod D}$ is ultimately periodic with period $D_\beta$, and
			\item[(iii)] if $\alpha,\beta$ are multiplicatively dependent, then $\alpha^{D_\alpha} = \beta^{D_\beta}$.
		\end{itemize}
		Such $D_\alpha, D_\beta$ can always be constructed because if a sequence is ultimately periodic with period~$k$, then it is ultimately periodic with period $km$ for every $m > 0$.
		We have that for all $1\le i \le l$ and $p \in P$, $z_i^{n_i} \equiv r_{i,p} \bmod D$ is either false for all $n_i$, true for exactly one value of $n_i$, or true on a union of arithmetic sequences with period $D_{z_i}$.
		Therefore, $\exists n_1,\ldots,n_l\st (z_1^{n_1},\ldots,z_l^{n_l}) \in \widetilde{S}_p$ can be equivalently expressed as a disjunction of formulas of the form
		\[
		\exists m_1,\ldots,m_l \in \nat \st \bigwedge_{s \in S_p}
		h_s \biggl(
		z_1^{t_{s,1}},\ldots,z_l^{t_{s,l}}
		\biggr)
		\sim_s b_s
		\]
		where for all $s,l$, $t_{s,l} = a$ or $t_{s,l} = a +m_i \cdot D_{z_i}$ for a constant $a \in \nat$.
		It remains to observe that $z_i^{a_i + m_i \cdot D_{z_i}} = z_i^{a_i} \left(z_i^{D_{z_i}}\right)^{m_i}$.
		Therefore, we have reduced deciding the truth value of $\exists \mathbf{x} \st \varphi(\mathbf{x})$ to solving systems of (in)equalities in
		\begin{itemize}
			\item powers of $\gamma \coloneqq \alpha^{D_\alpha}$ in case $\alpha,\beta$ are multiplicatively dependent, and
			\item powers of $\gamma_\alpha \coloneqq \alpha^{D_\alpha}$, $\gamma_\beta \coloneqq \beta^{D_\beta}$ otherwise.
		\end{itemize}
		Note that if $\alpha,\beta$ are multiplicatively independent, then $\gamma_\alpha,\gamma_\beta$ are also multiplicatively independent.
		This concludes the proof.
	\end{proof}

	\section{Solving Diophantine equations}
	\label{sec::equations}

	We now discuss solutions of systems of affine Diophantine equations in powers of $\alpha$ and $\beta$.
	This is the first step towards showing decidability of Problem~\ref{problem1}.
	Our goal in this section is to prove the following theorem.
	\equationsmain*

	First let us consider the easier case where $\alpha = \beta$.

	\begin{theorem}
		\label{thm::single-predicate-equations}
		Let $\alpha \in \nat_{>1}$, $l \ge 1$, $c_1,\ldots,c_l \in \intg_{\ne0}$, and $d \in \intg$.
		Further let $\Scal \subseteq \nat^l$ be the set of solutions of
		\[
		c_1\alpha^{n_1} + \cdots + c_l\alpha^{n_l} = d.
		\]
		Then $\Scal \in \mathfrak{A}$, and a representation of $\Scal$ in the form~\eqref{eq::aclass-form} can be effectively computed.
	\end{theorem}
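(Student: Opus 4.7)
My plan is to proceed by strong induction on $l$. The base case $l = 1$ is immediate: $c_1 \alpha^{n_1} = d$ has either no solution or a unique one, yielding an $\mathfrak{A}$-set trivially. For $l \geq 2$, I decompose $\Scal = \Scal_A \cup \Scal_B$, where $\Scal_A$ collects solutions with at least one coincidence $n_i = n_j$ and $\Scal_B$ collects those with all $n_i$ distinct. Writing $\Scal_A = \bigcup_{i \ne j} \Scal \cap \{n_i = n_j\}$, each term arises by substituting $n_j = n_i$ (merging $c_i, c_j$ into $c_i + c_j$ and dropping that term if it is zero), yielding an equation in $l-1$ variables whose solution set lies in $\mathfrak{A}$ by the inductive hypothesis; intersecting with the basic set $\{n_i = n_j + 0\}$ keeps us in $\mathfrak{A}$.

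The heart of the argument is the treatment of $\Scal_B$, which I split by whether $d = 0$. If $d \neq 0$, dividing the equation by $\alpha^{\min_i n_i}$ shows $\alpha^{\min_i n_i} \mid d$, so $\min_i n_i \leq \log_\alpha |d|$ is bounded. Hence $\Scal_B \subseteq \bigcup_{j,\,b \leq \log_\alpha |d|} (\Scal \cap \{n_j = b\})$, a finite union of substitutions into $(l-1)$-variable equations (with new constant $d - c_j \alpha^b$), each in $\mathfrak{A}$ by IH and each intersected with the basic $\mathfrak{A}$-set $\{n_j = b\}$.

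The remaining case $d = 0$ is the most delicate, since the equation is translation-invariant and $\min_i n_i$ is unbounded. For each candidate minimum index $j$, I reparameterize $y_i = n_i - n_j$ for $i \neq j$; after dividing by $\alpha^{n_j} > 0$, the equation becomes $\sum_{i \neq j} c_i \alpha^{y_i} = -c_j$, which is now independent of $n_j$ and has nonzero right-hand side (since $c_j \neq 0$). By IH its solution set $T^{(j)} \subseteq \nat^{l-1}$ belongs to $\mathfrak{A}$, and lifting back through $n_i = n_j + y_i$ sends each basic constraint $\{y_\mu = y_\sigma + c\}$ to $\{n_\mu = n_\sigma + c\}$ and each $\{y_\xi = b\}$ to $\{n_\xi = n_j + b\}$ --- both of the forms allowed in $\mathfrak{A}$, with $n_j$ now a free parameter. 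Taking the union over $j \in \{1, \ldots, l\}$ covers every solution in $\Scal_B$.

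Effectiveness is automatic: pair enumeration, bounded search over $b$, coefficient merging, and lifting of the inductively computed $\mathfrak{A}$-representation are all constructive. The step I expect to require the most care is the homogeneous case $d = 0$: one must verify that lifting $T^{(j)}$ back via a free ``anchor'' $n_j$ yields a subset of $\nat^l$ still in $\mathfrak{A}$, leveraging precisely that the definition of $\mathfrak{A}$ allows constraints of the form $n_\xi = n_\sigma + b$ where $\sigma$ is the anchor index, rather than forcing $n_\xi$ to take an absolute value; this is exactly what permits the infinite translation orbits produced by vanishing sub-sums to be captured.
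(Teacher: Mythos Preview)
Your approach is genuinely different from the paper's. The paper does not split on $d$ or on coincidences at all: it simply orders the variables as $n_{\sigma(1)} \ge \cdots \ge n_{\sigma(l)}$, observes that $n_{\sigma(1)} - n_{\sigma(2)} \le N$ for an effectively computable $N$ (uniformly in $d$, since $\alpha^N > |d| + \sum_i |c_i|$ forces the top term to be balanced by the rest), substitutes $n_{\sigma(1)} = n_{\sigma(2)} + k$ for each $0 \le k \le N$, and recurses on $l-1$ variables. This is shorter and sidesteps the homogeneous case entirely.

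Your $d = 0$ case has a real gap. You obtain $T^{(j)} \in \mathfrak{A}$ and then ``lift back through $n_i = n_j + y_i$'', translating each constraint $\{y_\mu = y_\sigma + c\}$ to $\{n_\mu = n_\sigma + c\}$ and each $\{y_\xi = b\}$ to $\{n_\xi = n_j + b\}$. Call the resulting $\mathfrak{A}$-set $L^{(j)}$. You implicitly need $L^{(j)} \subseteq \Scal$, but this is not automatic: a point $(n_1,\ldots,n_l) \in L^{(j)}$ may have $n_i < n_j$ for some $i$, so $(n_i - n_j)_{i \ne j}$ need not lie in $T^{(j)} \subseteq \nat^{l-1}$, and there is no a~priori reason the equation holds at such a point. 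Concretely, if a nonempty disjunct $T^{(j)}_p$ contains a ``free'' class $F$ of variables linked only by offset constraints $y_\mu = y_\sigma + c$ (no anchor $y_\xi = b$), then after lifting the variables $\{n_i : i \in F\}$ float independently of $n_j$ and may sit below it. The gap \emph{is} fixable: since $T^{(j)}_p \subseteq T^{(j)}$ and the free parameter of $F$ ranges over all of $\nat$, one deduces $\sum_{i \in F} c_i \alpha^{e_i} = 0$ (a vanishing sub-sum), so the contribution of $F$ to $\sum_i c_i \alpha^{n_i}$ is zero wherever $F$ sits; the anchored classes, being tied to $n_j$, contribute $-c_j \alpha^{n_j}$, and one concludes $L^{(j)} \subseteq \Scal$. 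But this argument is absent from your proposal, and the issue you flag (``verify that lifting \ldots\ yields a subset of $\nat^l$ still in $\mathfrak{A}$'') is the wrong one --- membership in $\mathfrak{A}$ is syntactically clear; containment in $\Scal$ is what needs work.
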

	\begin{proof}
		By a case analysis on the ordering of $z_1^{n_1}, \ldots,z_l^{n_l}$, it suffices to show that for any permutation $\sigma \st \{1,\ldots,l\} \to \{1,\ldots,l\}$, the set $\widetilde{\Scal}$ of all $(n_1,\ldots,n_l) \in \nat^l$ such that
		\begin{equation*}
			\begin{cases}
				\vspace*{0.15cm}c_{1}\alpha^{n_{1}} + \cdots + c_{l}\alpha^{n_{l}} = d\\
				\vspace*{0.15cm} n_{\sigma(1)} \ge \cdots \ge n_{\sigma(l)}
			\end{cases}
		\end{equation*}
		belongs to $\mathfrak{A}$ and has an effectively computable representation.
		We prove this by induction on $l$.
		For $l  = 1$, the statement is immediate.
		Suppose $l \ge 2$.
		Because we can rename the variables, it suffices to consider $\sigma(j) = j$ for all $j$.
		Let $N$ be such that $\alpha^N > |d| + \sum_{i=1}^l |c_i|$. Then for every $(n_1,\ldots,n_l) \in \widetilde{\Scal}$ we have $0 \le n_1 - n_2 \le N$ and let $\widetilde{\Scal} = \bigcup_{k = 0}^N \widetilde{\Scal}_k$, where each $\widetilde{\Scal}_k$ is the set of all solutions of
		\begin{equation*}
			\begin{cases}
				\vspace*{0.15cm}c_{1}\alpha^{n_{1}} + \cdots + c_{l}\alpha^{n_{l}} = d\\
				\vspace*{0.15cm} n_2 \ge \cdots \ge n_l\\
				\vspace*{0.15cm} n_{1}  = n_{2} + k.
			\end{cases}
		\end{equation*}
		To construct a representation of $\widetilde{\Scal}_k$, we have to eliminate the variable $n_1$ using the last equation above.
		To do this, inductively solve the system
		\begin{equation*}
			\label{eq::helper-1801}
			\begin{cases}
				\vspace*{0.15cm} (c_{1}\alpha^k + c_2)\alpha^{n_{2}} +  c_{3}\alpha^{n_{3}} + \cdots + c_{l}\alpha^{n_{l}} = d\\
				\vspace*{0.15cm} n_{2} \ge \cdots \ge n_{l}.
			\end{cases}
		\end{equation*}
		and then add the condition $n_1 = n_2 + k$.
	\end{proof}

	Next, we show how to solve, under certain assumptions, equations involving powers of both $\alpha$ and $\beta$ by applying Baker's theorem on linear forms in an iterative fashion.
	These assumptions will be lifted later when proving \Cref{thm::equalities-main}.
	\begin{theorem}
		\label{thm::equalities-big-thm}
		Let $\alpha,\beta \in \nat_{>1}$ be multiplicatively independent, $l\ge2$, $z_1,\ldots,z_l\in \{\alpha,\beta\}$ with $z_1 = \alpha$ and $z_2=\beta$, $c_1,\ldots,c_l\in\intg_{\ne 0}$, and $d \in \intg$.
		Denote by $\Scal$ the set of all $(n_1,\ldots,n_l)\in\nat^l$ satisfying all of the following.
		\begin{enumerate}
			\item[(a)] $c_1z_1^{n_1}+\cdots+c_lz_l^{n_l}=d$;
			\item[(b)] $z_1^{n_1}, z_2^{n_2} \ge z_3^{n_3}\ge \cdots\ge z_l^{n_l}$;
			\item[(c)] For every non-empty proper subset $I$ of $\{1,\ldots,l\}$ it holds that $\sum_{i \in I}c_iz_i^{n_i} \ne 0$.
		\end{enumerate}
		Define $\mu(j)$ to be $1$ if $z_j = \alpha$ and $\mu(j) = 2$ if $z_j =\beta$.
		We have the following.
		\begin{enumerate}
			\item[(i)] We can compute $\xi_1,\xi_2 \in \mathbb{Q}$ such that $n_1 \ge \frac{\log(\beta)}{\log(\alpha)}n_2 - \xi_1$ and $n_2 \ge \frac{\log(\alpha)}{\log(\beta)}n_1 - \xi_2$.
			\item[(ii)] There exist effectively computable polynomials $p_1,\ldots,p_l \in \rat[x,y]$ such that
			\[
			n_{\mu(j)} - n_j < p_j(\log(1+n_1), \log(1+n_2))
			\] for all $(n_1,\ldots,n_l) \in \Scal$ and $3 \le j \le l$.
			\item[(iii)] The set $\Scal$ is finite and can be effectively computed.
		\end{enumerate}
	\end{theorem}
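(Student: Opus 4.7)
The plan is to apply Baker's theorem (\Cref{thm::baker}) iteratively, treating $c_1\alpha^{n_1}$ and $c_2\beta^{n_2}$ as the two leading terms and using hypotheses (b) and (c) throughout. For (i), the argument is elementary: if $\alpha^{n_1}\ge\beta^{n_2}$, isolating $c_1\alpha^{n_1}$ in the equation and bounding the remaining terms via (b) yields $|c_1|\alpha^{n_1}\le |d|+\bigl(\sum_{j\ge 2}|c_j|\bigr)\beta^{n_2}$, which produces a constant upper bound on $\alpha^{n_1}/\beta^{n_2}$ and hence, after taking logarithms, $n_2\ge\frac{\log\alpha}{\log\beta}n_1-\xi_2$ for an effectively computable $\xi_2$; the inequality $n_1\ge\frac{\log\beta}{\log\alpha}n_2$ is automatic in this case. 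The symmetric case delivers the other inequality.

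For (ii), I would proceed by induction on $j\ge 3$, writing $B=\max(n_1,n_2,1)$. Assuming the claimed polynomial bound for all $j'<j$, the crucial step is to \emph{collapse} the first $j-1$ terms by base: define $C_\alpha=\sum_{j'<j,\,z_{j'}=\alpha}c_{j'}\alpha^{n_{j'}-m_\alpha}$ and $C_\beta$ analogously, where $m_\alpha,m_\beta$ are the minimum $\alpha$- and $\beta$-exponents occurring among those first $j-1$ terms. Because the inductive bounds on the gaps $n_{\mu(j')}-n_{j'}$ are polynomial in $\log B$, the integers $C_\alpha,C_\beta$ have heights polynomial in $\log B$, and both are non-zero by (c) (each represents a non-empty proper sub-sum). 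Rewriting the equation as $C_\alpha\alpha^{m_\alpha}+C_\beta\beta^{m_\beta}=d-\sum_{j'\ge j}c_{j'}z_{j'}^{n_{j'}}$ and using (b) to bound the right-hand side by $|d|+\bigl(\sum_{j'\ge j}|c_{j'}|\bigr)z_j^{n_j}$, I apply \Cref{thm::baker} to $\Lambda=1-(-C_\beta/C_\alpha)\alpha^{-m_\alpha}\beta^{m_\beta}$ (nonzero by (c)), whose inputs have heights polynomial in $\log B$. Comparing Baker's lower bound on $|C_\alpha|\alpha^{m_\alpha}|\Lambda|$ with the above upper bound extracts $m_\alpha\log\alpha-n_j\log z_j\le P_j(\log B)$ for an effective polynomial $P_j$; translating $m_\alpha$ back to $n_{\mu(j)}$ via the inductive bound (and, if $z_j=\beta$, additionally invoking (i)) closes the induction.

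For (iii), I first fully collapse the sum using (ii) to obtain $C_\alpha\alpha^{m_\alpha}+C_\beta\beta^{m_\beta}=d$, with $\log|C_\alpha|$ and $\log|C_\beta|$ polynomial in $\log B$. If $d\ne 0$, a final application of \Cref{thm::baker} to $(-C_\beta/C_\alpha)\alpha^{-m_\alpha}\beta^{m_\beta}-1$, combined with the trivial upper bound $|d|/(|C_\alpha|\alpha^{m_\alpha})$, forces $m_\alpha\log\alpha\le Q(\log B)$ for an effective polynomial $Q$; since (ii) then yields $n_1\le Q'(\log n_1,\log n_2)$ for a polynomial $Q'$, both $n_1$ and (via (i)) $n_2$ are bounded by an effectively computable absolute constant. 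If $d=0$, Baker does not apply directly; instead, choose a prime $p$ with $p\mid\alpha$ and $p\nmid\beta$, which exists by multiplicative independence. From $C_\alpha\alpha^{m_\alpha}=-C_\beta\beta^{m_\beta}$ and $\nu_p(\beta)=0$, comparing $p$-adic valuations yields $m_\alpha\nu_p(\alpha)\le\nu_p(C_\beta)\le\log|C_\beta|/\log p$, which is again polynomial in $\log B$, bounding $n_1$ as before. Once $n_1$ and $n_2$ are bounded, (ii) bounds the remaining exponents, and $\Scal$ is effectively computed by enumeration up to this bound.

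The principal obstacle is the bookkeeping inside (ii): one must ensure that at each inductive step the collapsed coefficients $C_\alpha,C_\beta$ have heights growing only polynomially in $\log B$ rather than in $B$. This is exactly what is needed for Baker's lower bound to dominate the contribution of $z_j^{n_j}$ on the right-hand side, and subsequently for the final bound on $n_1$ in (iii) to be sub-linear in $B$, so that an inequality of the form $n_1\le R(\log n_1,\log n_2)$ forces $n_1$ to be absolutely bounded.
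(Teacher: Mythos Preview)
Your treatment of (i), (ii), and the case $d\neq 0$ of (iii) is essentially the paper's argument: the same iterated application of \Cref{thm::baker} with the collapsed coefficients $C_\alpha,C_\beta$ (the paper's $X,Y$), the same height bookkeeping, and the same final Baker step when $d\neq 0$.

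The case $d=0$ of (iii), however, contains a genuine gap. You assert that multiplicative independence guarantees a prime $p$ with $p\mid\alpha$ and $p\nmid\beta$. This is false: take $\alpha=12=2^2\cdot 3$ and $\beta=18=2\cdot 3^2$. These are multiplicatively independent (the exponent vectors $(2,1)$ and $(1,2)$ are not proportional), yet they have identical prime support, so no such $p$ exists in either direction. Your valuation argument, which needs $\nu_p(\beta)=0$ to kill the $\beta^{m_\beta}$ contribution outright, therefore does not go through in general.

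The paper repairs exactly this point with \Cref{thm::prime-p}: one can always find a prime $p$ with $\nu_p(\beta)>0$ and $\dfrac{\log\alpha}{\log\beta}>\dfrac{\nu_p(\alpha)}{\nu_p(\beta)}$. With such a $p$ the valuation comparison no longer bounds $m_\alpha$ on its own; instead it yields (after invoking (ii) to control the smallest exponents $n_a,n_b$) an inequality of the shape
\[
n_1\Bigl(\tfrac{\log\alpha}{\log\beta}\,\nu_p(\beta)-\nu_p(\alpha)\Bigr)\le \text{poly}(\log n_1),
\]
and the strict inequality on the ratio is precisely what makes the left-hand coefficient positive, forcing $n_1$ to be bounded. (Equivalently, one could use the valuations at \emph{two} primes $p_i,p_j$ with $a_ib_j\neq a_jb_i$, which exist by non-proportionality of the exponent vectors, and solve the resulting $2\times 2$ system; but some such refinement is needed.) Once you replace your prime-selection step by \Cref{thm::prime-p} and feed in (i), the rest of your outline for $d=0$ goes through.
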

	\begin{proof}
		Observe that $n_j \le n_{\mu(j)}$ for all $j\ge 1$ by~(b).
		Let
		\[
		\Scal_1 = \big\{(n_1,\ldots,n_l) \in \Scal \st n_1 > n_2\big\}
		\] and $\Scal_2 = \Scal \setminus \Scal_1$.

		\emph{Proof of~(i).}
            Together~(a) and~(b) imply that, for all $(n_1,\ldots,n_l) \in \Scal$,
            \begin{equation*}
                \biggl(|d| + \sum_{i=2}^l|c_i|\biggl)z_2^{n_2}  \ge |c_1|z_1^{n_1}.
            \end{equation*}
            Taking logarithms and dividing by $\log(\beta) = \log(z_2)$ gives
            \begin{equation*}
                n_2 \ge \frac{\log(\alpha)}{\log(\beta)} n_1 - \frac{\log\left(|d| + \sum_{i=2}^l|c_i|\right)-\log|c_1|}{\log(\beta)}.
            \end{equation*}
            This allows us to find $\xi_2$.
            To compute $\xi_1$, observe that by~(a) and~(b),
            \[
            \biggl(|d| + |c_1| + \sum_{i=3}^l|c_i|\biggr)z_1^{n_1}  \ge |c_2|z_2^{n_2}
            \]
            and proceed similarly.

		\emph{Proof of~(ii).} By finite induction.
		Note that we can choose $p_1(x,y),p_2(x,y) = 1$.
		Suppose therefore $p_1,\ldots,p_j$ have already been computed for some $j \ge 2$.
		By swapping $z_1$ and $z_2$  if necessary, we can assume $z_{j+1} = z_1 =  \alpha$.
		(Observe that the roles $z_1$ and $z_2$ in the statement of our theorem are completely symmetrical.)
		For $(n_1,\ldots,n_l) \in \Scal$ define
		\begin{align*}
			X &\coloneqq-
			\sum_{
				\substack{1\le i \le j \\ z_i = \alpha}
			} c_i \alpha^{n_i-n_1},\\
			Y &\coloneqq \sum_{
				\substack{1\le i \le j \\ z_i = \beta}
			} c_i \beta^{n_i-n_2},\\
			\Lambda &\coloneqq \alpha^{-n_1}\beta^{n_2}X^{-1}Y-1
			=
			\biggl(
			-
			\sum_{
				\substack{1\le i \le j \\ z_i = \alpha}
			} c_i \alpha^{n_i}\biggr)^{-1} \cdot \sum_{\substack{1\le i \le j \\ z_i = \beta}} c_i \beta^{n_i} - 1.
		\end{align*}
		By~(c), $X$ is non-zero and hence $X^{-1}$ is well-defined, and similarly, $Y$ and $\Lambda$ are non-zero.
		Next, observe that~(a) can be written as
		\begin{equation}
			\label{eq::a-rewritten}
			\Lambda =
			\biggl(
			\,
			\sum_{i=j+1}^l c_i z_i^{n_i} - d
			\,
			\biggr)
			\cdot
			\biggl(
			\,
			\sum_{
				\substack{1\le i \le j \\ z_i = \alpha}
			} c_i \alpha^{n_i}
			\,
			\biggr)^{-1}.
		\end{equation}
		We will estimate the magnitude of terms on both sides of this equation, starting with the left-hand side.
		Recall the definition and the properties of the height function $h(\cdot)$ given in \Cref{sec::prelims}.
		We have that $h(X^{-1}Y) \le h(X) + h(Y)$ and
		\begin{align*}
			h(X) &\le  \log(j) + \sum_{
				\substack{1\le i \le j \\ z_i = \alpha}
			} \log |c_i| + (n_1 - n_i) \log |\alpha|,
			\\
			h(Y) &\le \log(j) + \sum_{
				\substack{1\le i \le j \\ z_i = \beta}
			} \log |c_i| + (n_2 - n_i) \log |\beta|.
		\end{align*}
		Therefore, using \Cref{thm::baker} we can compute $\kappa_1 \in \rat_{>0}$ such that
		\begin{equation}
			\label{eq::baker-applying}
			\log |\Lambda| > -\kappa_1 \cdot \big(1+\log(1+\max\, \{n_1,n_2\})\big) \cdot \max_{1\le i \le j} \{n_{\mu(i)}-n_i\}.
		\end{equation}
		Applying the induction hypothesis, there exists computable $q \in \rat[x,y]$ such that
		\[
		\log |\Lambda| > -\kappa_1\cdot q\big(\log(1+n_1),\log(1+n_2)\big).
		\]
		Next, consider the right-hand side of \eqref{eq::a-rewritten}.
		Let $a$ be the largest integer $1 \le i \le j$ such that $z_i = \alpha$.
		We have that
		\[
		\biggl\vert
		\,
		\sum_{i=j+1}^l c_i z_i^{n_i} - d
		\,
		\biggr\vert \le \kappa_2 \alpha^{n_{j+1}}
		\]
		for some computable $\kappa_2 \in \intg_{>0}$
		and, by~(c) and the  induction hypothesis,
		\[
		\biggl\vert
		\,
		\sum_{
			\substack{1\le i \le j \\ z_i = \alpha}
		} c_i \alpha^{n_i}
		\,
		\biggr\vert > \alpha^{n_a} >  \alpha^{n_1 - r(\log(1+n_1),\,\log(1+n_2))}
		\]
		for a polynomial $r \in \rat[x,y]$.
		Hence the magnitude of the right-hand side of \eqref{eq::a-rewritten} is bounded by $\kappa_2 \alpha^{r(\log(1+n_1),\,\log(1+n_2))-n_1+n_{j+1}}$, and a necessary condition for~\eqref{eq::a-rewritten} to hold is that
		\begin{equation*}
			-\kappa_1\cdot q\big(\log(1+n_1),\log(1+n_2)\big)
			< \log
			\bigl(
			\kappa_2 \cdot \alpha^{r(\log(1+n_1),\,\log(1+n_2))-n_1 + n_{j+1}}
			\bigr)
		\end{equation*}
		which is equivalent to
		\begin{equation}
			\label{eq::helper-2}
			n_1-n_{j+1} < \frac{\kappa_1\cdot q\big(\log(1+n_1),\log(1+n_2)\big)-\log(\kappa_2)}{\log(\alpha)} + r\big(\log(1+n_1),\log(1+n_2)\big).
		\end{equation}
		It remains to choose $p_{j+1} \in \rat[x,y]$ such that $p_{j+1}(\log(1+n_1),\log(1+n_2))$ is at least as large as the right-hand side of~\eqref{eq::helper-2}.

		\emph{Proof of~(iii).}
		Since $\alpha \ne \beta$ by the multiplicative independence assumption, without loss of generality we can assume that $\alpha < \beta$.
		(The roles of $\alpha = z_1$ and $\beta = z_2$ are symmetric and we can swap them if necessary.)
		Recall the definitions of $\Scal_1,\Scal_2$.
		Elements of $\Scal_2$ can be bounded using~(i),  as $\frac{\log(\beta)}{\log(\alpha)} > 1$ and $n_2 \ge n_1 \ge \frac{\log(\beta)}{\log(\alpha)}n_2 - \xi_1$ together yield a bound on $n_2$.
		It remains to bound $\Scal_1$.

		\emph{Case 1: Suppose $d \ne 0$.}
		As in the proof of (ii), let
		\begin{align*}
			X&=-\sum_{
				\substack{1\le i \le l \\ z_i = \alpha}
			} c_i \alpha^{n_i-n_1},\\
			Y&=\sum_{
				\substack{1\le i \le l \\ z_i = \beta}
			} c_i \beta^{n_i-n_2},\\
			\Lambda&= \alpha^{-n_1}\beta^{n_2}\cdot
			X^{-1}Y-1.
		\end{align*}
		Similarly to the proof of~(ii), $X$, $Y$, and $\Lambda$ are non-zero, and we rewrite (a) in the form
		\begin{equation}
			\label{eq::a-rewritten-2}
			\Lambda= -d \cdot \biggl(
			\,
			\sum_{
				\substack{1\le i \le l \\ z_i = \alpha}
			} c_i \alpha^{n_i}
			\,
			\biggr)^{-1}
		\end{equation}
		and bound the magnitude on both sides.
		Because $n_1 > n_2 \ge 0$ for all solutions in $\Scal_1$, application of \Cref{thm::baker} and~(ii) yields,
		\[
		\log |\Lambda| > -\kappa_2 p\big(\log(n_1)\big)
		\]
		where $\kappa_2 > 0$ and $p \in \rat[x]$ are computed effectively.
		It remains to compute an upper bound for the right-hand side.
		Let $a$ be the largest integer $1 \le i \le l$ such that $z_i = \alpha$.
		We have that
		\[
		\biggl\vert
		\,
		\sum_{
			\substack{1\le i \le l \\ z_i = \alpha}
		} c_i \alpha^{n_i}
		\,
		\biggr\vert > \alpha^{n_{a}} >  \alpha^{n_1 - f(\log(n_1))}
		\]
		where $f \in \rat[x]$.
		Hence a necessary condition for $(n_1,\ldots,n_l) \in \Scal_1$ is
		\[
		\kappa_2 \cdot p\big(\log(n_1)\big) >   \big(n_1-f(\log(n_1))\big)\cdot \log(\alpha) - \log|d|
		\]
		from which we can compute a bound on $n_1$.
		Once we bound $n_1$, a bound on the remaining variables can be computed in the same way as above.

		\emph{Case 2: Suppose $d = 0$.}
		We will need a lemma.
		\begin{lemma}
			\label{thm::prime-p}
			There exists a prime $p \in\nat$ such that $\nu_p(\beta) > 0$ and
			\[
			\frac{\log(\alpha)}{\log(\beta)} > \frac{\nu_p(\alpha)}{\nu_p(\beta)}.
			\]
		\end{lemma}
		\begin{proof}
			If there is a prime $p$ dividing $\beta$ that does not divide $\alpha$, then the statement is immediate.
			Suppose therefore that $\alpha,\beta$ have exactly the same prime divisors $p_1,\ldots,p_k$.
			We have
			\[
			\frac{\log(\alpha)}{\log(\beta)} = \frac{\nu_{p_1}(\alpha)\log(p_1)+\cdots+\nu_{p_k}(\alpha)\log(p_k)}{\nu_{p_1}(\beta)\log(p_1)+\cdots+\nu_{p_k}(\beta)\log(p_k)}.
			\]
			By multiplicative independence, $\log(\alpha)/\log(\beta) \notin \rat$ and hence $\nu_{p_i}(\alpha)/\nu_{p_i}(\beta) \ne \log(\alpha)/\log(\beta)$ for all $1 \le i \le k$.
			It follows that $\log(\alpha)/\log(\beta) > \nu_{p_i}(\alpha)/\nu_{p_i}(\beta)$ for some $i$.
		\end{proof}

		To bound the elements of $\Scal_1$, let $a = \max \{i \st z_i = \alpha\}$ and $b =  \max \{i \st z_i = \beta\}$.
		Further let
		\begin{align*}
			A &= \sum_{\substack{1\le i \le l \\ z_i = \alpha}}c_i\alpha^{n_i},\\
			B &= -\sum_{\substack{1\le i \le l \\ z_i = \beta}} c_i\beta^{n_i}.
		\end{align*}
		Let $p$ be a prime as in \Cref{thm::prime-p}.
		A necessary condition for~(a) to hold is that
		\[
		\nu_p(A) \ge \nu_p(B).
		\]
		As discussed in \Cref{sec::prelims}, we have
		\[
		\nu_p(B) \ge \min_{z_i = \beta} \big\{\nu_p(c_i \beta^{n_i})\big\} \ge \nu_p(\beta^{n_{b}}) = n_{b} \cdot \nu_p(\beta).
		\]
		Under the assumption $n_1 > n_2$, by~(ii), there exists effectively computable $q_1 \in \rat[x]$ such that $n_{b} > n_2- q_1(\log(n_1))$.
		Hence,
		\[
		\nu_p(B) > n_2\nu_p(\beta) - q_1(\log(n_1))\nu_p(\beta).
		\]
		Meanwhile,
		\begin{align*}
			\nu_p(A) &= \nu_p(\alpha^{n_{a}}) + \nu_p
			\biggl(
			\,
			\sum_{\substack{1\le i \le l \\ z_i = \alpha}}c_i\alpha^{n_i-n_{a}}
			\biggr)\\
			&\le
			n_1\nu_p(\alpha) + \log_p
			\,
			\biggl|
			\sum_{\substack{1\le i \le l \\ z_i = \alpha}}c_i\alpha^{n_i-n_{a}}
			\biggr|.
		\end{align*}
		Applying~(ii), we obtain that
		\[
		\nu_p(A) \le n_1\nu_p(\alpha) + q_2\big(\log(n_1)\big)
		\]
		for an effectively computable $q_2 \in \rat[x]$.
		Thus, a necessary condition for~(a) to hold is that
		\[
		n_2\nu_p(\beta) - q_1\big(\log(n_1)\big)\nu_p(\beta) \le n_1\nu_p(\alpha) + q_2\big(\log(n_1)\big)
		\]
		which is equivalent to
		\[
		n_2-n_1\frac {\nu_p(\alpha)} {\nu_p(\beta)} \le \frac{ q_1\big(\log(n_1)\big)\nu_p(\beta) + q_2\big(\log(n_1)\big)}{\nu_p(\beta)}.
		\]
            Applying~(i), we obtain that
            \[
		-\xi_2 + n_1\left(\frac{\log(\alpha)}{\log(\beta)} -\frac {\nu_p(\alpha)} {\nu_p(\beta)}\right) \le \frac{ q_1\big(\log(n_1)\big)\nu_p(\beta) + q_2\big(\log(n_1)\big)}{\nu_p(\beta)}.
		\]
            By construction of~$p$, the left-hand side of the inequality above grows linearly in $n_1$ while the right-hand side grows poly-logarithmically.
            Hence we can compute a bound on $n_1$, from which bounds on every $n_i$ can be derived.
	\end{proof}

	We can now finalise the proof of the main result of this section.
	\begin{proof}[Proof of \Cref{thm::equalities-main}]
		Since the class $\mathfrak{A}$ is closed under intersections, it suffices to consider the case where $s = 1$, i.e.\ the case of a single equation of the form
		\begin{equation}
			\label{eq::helper-2237}
			c_{1} z_1^{n_1} + \cdots + c_{l}z_l^{n_l} = d.
		\end{equation}
		Denote the set of solutions by $\Scal$.
		The proof is by induction on $l$.
		The statement is immediate for $l = 1$.
		Suppose $l \ge 2$.
		Let $N$ be such that $\alpha^N, \beta^N > |d| + \sum_{i=1}^l |c_i|$.
		Define
		\[
		\Scal_1 = \big\{(n_1,\ldots,n_l) \in \nat^l \mid \exists a,b \st a \ne b \textrm{ and }z_{a} = z_b \textrm{ and }  0 \le n_a-n_b \le N \big\},
		\]
		and $\Scal_2 = \Scal \setminus \Scal_1$.
		Intuitively, for every solution in $\Scal_2$, the two dominant terms among $z_1^{n_1},\ldots,z_l^{n_l}$ must have different bases (as otherwise they would be too far apart in magnitude), which will allow us to apply \Cref{thm::equalities-big-thm}.
		Further let $\Mcal$ be the set of all non-empty proper subsets of $\{1,\ldots,l\}$, and $\Scal_\mu$ for $\mu \in \Mcal$ be the set of all $(n_1,\ldots,n_l) \in \Scal$ such that
		\[
		\sum_{i \in \mu} c_i z_i^{n_i} = 0.
		\]
		Finally, let $\widetilde{\Scal}$ be the set of all $(n_1,\ldots,n_l) \in \nat^l$ such that for all $\mu \in \Mcal$,
		\[
		\sum_{i \in \mu} c_i z_i^{n_i} \ne 0.
		\]
		That is, $\widetilde{\Scal}$ is the set of all solutions of \eqref{eq::helper-2237} where no proper sub-sum vanishes.
		We will express $\Scal$ in the form
		\[
		\Scal = \Scal_1 \cup \left(\Scal_2 \cap \widetilde{\Scal}\right) \cup  \bigcup_{\mu \in \Mcal} \Scal_\mu.
		\]
		Since each $\Scal_\mu$ is exactly the set of solutions to
		\[
		\sum_{i \notin \mu} c_iz_i^{n_i} = d,
		\]
		in which fewer variables than $l$ appear,
		we can apply the induction hypothesis.
		To compute a representation of $\Scal_1$, we will compute, for every  $0 \le k \le N$ and distinct $1 \le a, b \le l$ with $z_\alpha = z_\beta$, a representation of the set of all $(n_1,\ldots,n_l) \in \Scal$ satisfying $n_a = n_b + k$.
		To do this, we just have to eliminate the variable $n_a$.
		That is, inductively compute a representation of the set of solutions of
		\[
		\big(c_a z_a^k + c_b\big)z_b^{n_b} + \sum_{i \ne a,b} c_i z_i^{n_i} =d
		\]
		and add the condition $n_a = n_b + k$.

		It remains to describe the structure of~$\Scal_2\cap\widetilde{\Scal}$.
		Let $P$ be the set of all permutations of $\{1,\ldots,l\}$.
		For $p \coloneqq (p_1,\ldots,p_l)\in P$, define $\widetilde{\Scal}_p$ as the set of all $(n_1,\ldots,n_l) \in \Scal_2\cap\widetilde{\Scal}$ that satisfy
		\[
		z_{p_1}^{n_{p_1}}, z_{p_2}^{n_{p_2}} \ge z_{p_3}^{n_{p_3}} \ge  \cdots \ge z_{p_l}^{n_{p_l}}.
		\]
		For particular $p = (p_1,\ldots,p_l)$, if $z_{p_1} \ne z_{p_2}$, then we can invoke invoke \Cref{thm::equalities-big-thm} to construct a representation of~$\widetilde{\Scal}_p$
		On the other hand, if $z_{p_1} = z_{p_2}$, then by the construction of~$N$, either $n_{p_1} > n_{p_2} + N$ and hence $c_{p_1}z_{p_1}^{n_{p_1}}$ dominates the other summands, or  $n_{p_2} > n_{p_1} + N$ and the same argument applies.
		Hence in this case $\widetilde{\Scal}_p$ must be empty.
	\end{proof}

	\section{Handling inequalities}
	\label{sec::ineqs}
	In this section we prove decidability of Problem~\ref{problem1}.
	As discussed in \Cref{sec::overview}, this, in conjunction  with \Cref{thm::pres-to-system-reduction}, completes the proof of our main decidability result (\Cref{thm::main-decidability}).
	The following lemma is one of our main technical tools.
	In particular, it says that if $A\mathbf{z} > \zerovec$ has a solution, then it has infinitely many solutions.

	\begin{lemma}[Pumping Lemma]
		\label{thm::pumping}
		Suppose we are given
		\begin{itemize}
			\item[(a)] $\rat$-linear forms $h_1,\ldots,h_r$ in $l \ge 1$ variables,
			\item[(b)] multiplicatively independent $\alpha,\beta \in \nat_{>1}$,
			\item[(c)] $z_1,\ldots,z_l$ satisfying $z_i \in \{\alpha,\beta\}$ for all $i$ and $z_1 = \beta$,
			\item[(d)] $m_1,\ldots,m_l \in \nat$, and
			\item[(e)] $\varepsilon \in \rat_{>0}$.
		\end{itemize}
		Write $J = \{j \st h_j(z_1^{m_1},\ldots,z_l^{m_l}) > 0\}$.
		We can compute 
		$\mu, \delta \in \rat_{>0}$ with the following property.
		Suppose $n_1 > m_1$ is such that there exists $k \in \nat$ for which $|\alpha^k/\beta^{n_1} - \mu| < \delta$.
		Then there exist $n_2,\ldots,n_l$ such that for all $1 \le j \le r$,
		\begin{itemize}
			\item[(i)] if $j \in J$, then $h_j(z_1^{n_1},\ldots,z_l^{n_l}) > 0$, and
			\item[(ii)] $\left\vert \frac{h_j\big(z_1^{n_1},\ldots,z_l^{n_l}\big)}{z_1^{n_1}} - \frac{h_j\big(z_1^{m_1},\ldots, z_l^{m_l}\big)}{z_1^{m_1}} \right\vert < \varepsilon$.
		\end{itemize}
		In particular, there exist infinitely many $n_1$ that can be extended to $(n_1,\ldots,n_l)$ satisfying (i) and~(ii) for all $1 \le j \le r$.
	\end{lemma}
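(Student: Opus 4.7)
The plan is to pump the given solution $(m_1,\ldots,m_l)$ by shifting exponents in a base-dependent way so that the normalized quantities $z_i^{n_i}/z_1^{n_1}$ remain very close to $z_i^{m_i}/z_1^{m_1}$; since each $h_j$ is a $\rat$-linear form, this forces $h_j(z_1^{n_1},\ldots,z_l^{n_l})/z_1^{n_1}$ to approximate $h_j(z_1^{m_1},\ldots,z_l^{m_l})/z_1^{m_1}$, which delivers (ii) directly and, with a small enough error, (i) as well. Split the indices by base into $S_\beta=\{i\st z_i=\beta\}$ (which contains $1$) and $S_\alpha=\{i\st z_i=\alpha\}$. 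Given $n_1$ and a witness $k$, define $n_i:=m_i+(n_1-m_1)$ for $i\in S_\beta$ and $n_i:=m_i+k$ for $i\in S_\alpha$. The first choice is exact in the sense that $z_i^{n_i}/z_1^{n_1}=\beta^{m_i-m_1}=z_i^{m_i}/z_1^{m_1}$, and the second gives $z_i^{n_i}/z_1^{n_1}=\alpha^{m_i}\cdot(\alpha^k/\beta^{n_1})$.

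The target approximation $\alpha^k/\beta^{n_1}\approx\beta^{-m_1}$ motivates the choice $\mu:=\beta^{-m_1}\in\rat_{>0}$, which is computable. Under the hypothesis $|\alpha^k/\beta^{n_1}-\mu|<\delta$, for every $i\in S_\alpha$ we obtain
\[
\bigl|z_i^{n_i}/z_1^{n_1}-z_i^{m_i}/z_1^{m_1}\bigr|=\alpha^{m_i}\bigl|\alpha^k/\beta^{n_1}-\mu\bigr|<\alpha^{m_i}\delta,
\]
while the $i\in S_\beta$ contributions vanish. Writing $h_j=\sum_i c_{i,j}x_i$ with $c_{i,j}\in\rat$, this yields
\[
\Bigl|h_j(z_1^{n_1},\ldots,z_l^{n_l})/z_1^{n_1}-h_j(z_1^{m_1},\ldots,z_l^{m_l})/z_1^{m_1}\Bigr|\le C\delta,\quad C:=\max_{1\le j\le r}\sum_{i\in S_\alpha}|c_{i,j}|\alpha^{m_i}.
\]
Set
\[
\eta:=\min\Bigl(\varepsilon,\;\tfrac12\min_{j\in J}h_j(z_1^{m_1},\ldots,z_l^{m_l})/z_1^{m_1}\Bigr)>0,
\]
(interpreting the second term as $+\infty$ if $J=\emptyset$) and take $\delta:=\eta/C\in\rat_{>0}$. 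Then (ii) holds, and for $j\in J$ we have $h_j(z_1^{n_1},\ldots,z_l^{n_l})/z_1^{n_1}\ge h_j(\mathbf{m})/z_1^{m_1}-\eta\ge\tfrac12 h_j(\mathbf{m})/z_1^{m_1}>0$, delivering (i).

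For the \emph{in particular} assertion, invoke \Cref{thm::kronecker}: since $\alpha,\beta$ are multiplicatively independent, $\{\alpha^k/\beta^{n_1}\st k,n_1\in\nat\}$ is dense in $\rel_{>0}$, so infinitely many pairs $(k,n_1)$ satisfy $\alpha^k/\beta^{n_1}\in(\mu-\delta,\mu+\delta)$. For any fixed $n_1$, the values $\alpha^k/\beta^{n_1}$ are unbounded in $k$, hence lie in the bounded interval $(\mu-\delta,\mu+\delta)$ for only finitely many $k$; consequently these pairs realize infinitely many distinct $n_1$, and all but finitely many satisfy $n_1>m_1$.

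The main delicate point will be matching the \emph{inexact} approximation of the $\alpha$-ratios against the \emph{strict} positivity demanded by~(i); this is resolved by the safety factor $\tfrac12$ in the definition of $\eta$ and by the observation that $z_1=\beta$ lets us cancel $\beta$-shifts exactly, isolating the error entirely in the $\alpha$-ratios. Everything else is a bookkeeping calculation with linear forms, and the dependence of $\delta$ on the input is manifestly effective.
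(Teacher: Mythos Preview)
Your proof is correct and follows essentially the same approach as the paper: split the indices by base, shift the $\beta$-exponents additively by $n_1-m_1$ (exact) and the $\alpha$-exponents by the witness $k$ (approximate), then control the error in the normalised linear forms to obtain (ii) and hence (i). Your choice $\mu=\beta^{-m_1}$ differs from the paper's only by a harmless normalisation, and your explicit constant $C$ makes the bookkeeping a little cleaner; just note that in the degenerate case $C=0$ (i.e.\ no $\alpha$-variables actually occur in any $h_j$) you should set $\delta$ to any positive rational rather than $\eta/C$.
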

	\begin{proof}
		By re-ordering $z_2,\ldots,z_l$, we can without loss of generality assume that $z_1,\ldots,z_b = \beta$ and $z_{b+1},\ldots,z_l = \alpha$ for some $1 \le b \le l$.
		For $1 \le j \le r$, write
		\[
		h_j(x_1,\ldots,x_l) =
		 t_j(x_1,\ldots,x_b) + s_j(x_{b+1},\ldots,x_l)
		\]
		where $s_j,t_j$ are $\rat$-linear forms.
		Let $\nu \in \rat_{>0}$ be such that
		\begin{itemize}
			\item[(A)] $t_j(\beta^{m_1}, \ldots, \beta^{m_b}) + c\cdot s_j(\alpha^{m_{b+1}},\ldots,\alpha^{m_l}) > 0$ for all $c \in (1-\nu, 1 + \nu)$ and $j \in J$, and
			\item[(B)] $\nu \cdot
			\left|
			\frac{s_j\big(\alpha^{m_{b+1}}, \ldots, \alpha^{m_{l}}\big)}{\beta^{m_1}}
			\right|
			< \varepsilon$ for all $1 \le j \le r$.
		\end{itemize}
		Choose $\mu = 1 / \beta^{m_1}$ and $\delta \in \rat$ such that $0 < \delta \beta^{m_1} < \nu$.
		It remains to argue the correctness of our choice of $\mu, \delta$.
		
		Suppose $n_1 > m_1$ is such that $|\alpha^k/\beta^{n_1} - \mu| < \delta$ for some $k \in \nat$.
		Write $m_\beta = n_1 - m_1$ and $m_\alpha = k$.
		We have that
		\[
		\biggl |
		\frac{\alpha^{m_\alpha}}{\beta^{m_\beta}} - 1
		\biggr |
		= {\beta^{m_1}}
		\biggl |
		\frac{\alpha^{m_\alpha}}{\beta^{n_1}} - \mu
		\biggr | < \nu.
		\]
		For $2 \le i \le l$, define $n_i = m_i + m_\beta$ if $z_i = \beta$ and  $n_i = m_i + m_\alpha$ if $z_i = \alpha$.
		Then, for all $j \in J$,
		\begin{align*}
			\frac{1}{\beta^{m_\beta}}h_j(z_1^{n_1},\ldots,z_l^{m_l}) &=
			\frac{\alpha^{m_\alpha}}{\beta^{m_\beta}} s_j(\alpha^{m_{b+1}},\ldots,\alpha^{m_l}) +  t_j(\beta^{m_1}, \ldots, \beta^{m_{b}}) \\
			&> 0
		\end{align*}
		where the inequality follows from~(A).
		This proves~(i).
		To prove~(ii), first observe that for $1 \le i \le l$, $z_i^{n_i}/z_1^{n_1} = c_i z_i^{m_i}/z_1^{m_1}$ where $c_i = 1$ if  $z_i = \beta$ and $c_i = \alpha^{m_\alpha}/\beta^{m_\beta}$ of $z_i = \alpha$.
		Hence
		\[
		\frac{t_j(\beta^{n_1}, \ldots, \beta^{n_{b}})}{z_1^{n_1}}
		=
		\frac{t_j(\beta^{m_1}, \ldots, \beta^{m_{b}})}{z_1^{m_1}}
		\]
		for all $1 \le j \le r$.
		Therefore, for all $j$,
		\begin{align*}
			\frac{h_j(z_1^{n_1},\ldots,z_l^{n_l})}{z_1^{n_1}} - \frac{h_j(z_1^{m_1},\ldots,z_l^{m_l})}{z_1^{m_1}}
			&=
			\frac{s_j(\alpha^{n_{b+1}}, \ldots, \alpha^{n_{l}})}{\beta^{n_1}}
			-
			\frac{s_j(\alpha^{m_{b+1}}, \ldots, \alpha^{m_{l}})}{\beta^{m_1}}
			\\
			&=\frac{s_j(\alpha^{m_{b+1}+m_\alpha}, \ldots, \alpha^{m_l+m_\alpha})}{\beta^{m_1+m_\beta}}
			-
			\frac{s_j(\alpha^{m_{b+1}}, \ldots, \alpha^{m_{l}})}{\beta^{m_1}}
			\\
			&=\frac{s_j(\alpha^{m_{b+1}}, \ldots, \alpha^{m_{l}})}{\beta^{m_1}}
			\biggl(
			\frac{\alpha^{m_\alpha}}{\beta^{m_\beta}} - 1
			\biggr).
		\end{align*}
		It remains to invoke~(B).
		Finally, that there exist infinitely many $n_1$ that can be extended to $(n_1,\ldots,n_l)$ satisfying (i) and~(ii) for all $j$ follows from~\Cref{thm::kronecker}.
	\end{proof}
	\begin{corollary}
		\label{thm::pumping-corollary}
		Let $\alpha,\beta \in \nat_{>1}$ be multiplicatively independent, $z_1,\ldots,z_l \in \{\alpha,\beta\}$, $A \in \intg^{r \times l}$, and $\mathbf{b} \ge 0$.
		There exists $\mathbf{z} = (z_1^{m_1},\ldots,z_l^{m_l})$ with $m_1,\ldots,m_l \ge 0$ satisfying $A\mathbf{z}>\zerovec$ if and only if there exists $\mathbf{\widetilde{z}} = (z_1^{n_1},\ldots,z_l^{n_l})$ with $n_1,\ldots,n_l \ge 0$ satisfying $A \mathbf{\widetilde{z}} > \mathbf{b}$.
	\end{corollary}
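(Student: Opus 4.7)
The implication ($\Leftarrow$) is immediate, since $A\widetilde{\mathbf{z}} > \mathbf{b} \ge \zerovec$ already implies $A\widetilde{\mathbf{z}} > \zerovec$. For ($\Rightarrow$), suppose $\mathbf{z} = (z_1^{m_1},\ldots,z_l^{m_l})$ satisfies $A\mathbf{z} > \zerovec$, and identify the rows of $A$ with $\intg$-linear forms $h_1,\ldots,h_r$, so that $h_j(\mathbf{z}) > 0$ for every $j$; in the notation of \Cref{thm::pumping} this means $J = \{1,\ldots,r\}$.

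If $\beta$ does not appear among the $z_i$ (so all $z_i = \alpha$), then for any $k \in \nat$ the vector $\mathbf{z}_k \coloneqq (\alpha^{m_1+k},\ldots,\alpha^{m_l+k})$ satisfies $A\mathbf{z}_k = \alpha^k A\mathbf{z}$, and choosing $k$ with $\alpha^k \cdot \min_j h_j(\mathbf{z}) > \max_j b_j$ yields $A\mathbf{z}_k > \mathbf{b}$. The symmetric case where $\alpha$ is absent is handled analogously. Otherwise, by relabeling coordinates (and permuting the columns of $A$ accordingly) we may assume $z_1 = \beta$.

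Set $\varepsilon \coloneqq \tfrac{1}{2} \min_{1 \le j \le r} h_j(\mathbf{z})/\beta^{m_1} > 0$ and apply \Cref{thm::pumping} with these $h_j$, the tuple $(m_1,\ldots,m_l)$, and this $\varepsilon$ to obtain $\mu,\delta \in \rat_{>0}$. By \Cref{thm::kronecker} applied to the open interval $(\mu-\delta, \mu+\delta) \subset \rel_{>0}$, there are infinitely many pairs $(n_1,k) \in \nat^2$ with $\alpha^k/\beta^{n_1} \in (\mu-\delta, \mu+\delta)$, so in particular we may choose such an $n_1 > m_1$ large enough that $\varepsilon \beta^{n_1} > \max_j b_j$. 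The pumping lemma then extends $n_1$ to $(n_1,\ldots,n_l)$ satisfying properties~(i) and~(ii); combining~(ii) with the choice of $\varepsilon$ gives
\[
\frac{h_j(z_1^{n_1},\ldots,z_l^{n_l})}{\beta^{n_1}} > \frac{h_j(\mathbf{z})}{\beta^{m_1}} - \varepsilon \ge \varepsilon
\]
for every $j$, whence $h_j(z_1^{n_1},\ldots,z_l^{n_l}) > \varepsilon \beta^{n_1} > b_j$, producing the required $\widetilde{\mathbf{z}} = (z_1^{n_1},\ldots,z_l^{n_l})$.

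The only subtlety is that \Cref{thm::pumping} on its own preserves the \emph{signs} of the $h_j$ but not their magnitudes, so to dominate an arbitrary threshold $\mathbf{b}$ one must first commit to an $\varepsilon$ small enough to secure a uniform positive lower bound on each ratio $h_j/\beta^{n_1}$, and only then exploit the infinitude of Kronecker approximations to push $n_1$ so large that $\varepsilon \beta^{n_1}$ swamps all the coordinates of $\mathbf{b}$.
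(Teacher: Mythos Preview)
Your proof is correct and follows essentially the same route as the paper's: pick $\varepsilon$ so that each ratio $h_j(\mathbf{z})/z_1^{m_1}$ exceeds $2\varepsilon$, invoke the Pumping Lemma, and then take $n_1$ large enough that $\varepsilon z_1^{n_1}$ dominates every entry of $\mathbf{b}$. The paper's version is terser---it simply appeals to the ``in particular'' clause of \Cref{thm::pumping} for the infinitude of admissible $n_1$ rather than spelling out the Kronecker step, and it does not separately treat the single-base case---but the underlying argument is the same; your explicit handling of the case where all $z_i$ share a common base (so that \Cref{thm::pumping}, which requires $z_1=\beta$, does not directly apply) is arguably a small improvement in rigor.
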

	\begin{proof}
        The ``if'' direction is trivial as one can take $\mathbf{z} := \widetilde{\mathbf{z}}$.
        Thus, we focus on the other direction.
		Let $\mathbf{z} = (z_1^{m_1},\ldots,z_l^{m_l})$ be as above.
		For $1 \le j \le r$, define the form
		\[
		h_j(x_1,\ldots,x_l) = e_j^\top A \cdot (x_1,\ldots,x_l).
		\]
		Let $\varepsilon \in \rat_{>0}$ be such that $h_j(z_1^{m_1},\ldots,z_l^{m_l})/z_1^{m_1} > 2\varepsilon$ for all $j$.
		Invoke \Cref{thm::pumping} with the forms $h_1,\ldots,h_r$ and the values $m_1,\ldots,m_l, \varepsilon$.
		We obtain that there exist infinitely many $(\widetilde{m}_1,\ldots,\widetilde{m}_l)$ (where $m_1$ can be taken to be arbitrarily large) such that $h_j(z_1^{\widetilde{m}_1},\ldots,z_l^{\widetilde{m}_l}) > 0$ and
		\begin{equation}
			\label{eq::helper-1543}
			\left|\frac{h_j\big(z_1^{\widetilde{m}_1},\ldots,z_l^{\widetilde{m}_l}\big)}{z_1^{\widetilde{m}_1}} - \frac{h_j\big(z_1^{m_1},\ldots,z_l^{m_l}\big)}{z_1^{m_1}}\right| < \varepsilon
		\end{equation}
		for all $j$.
		Since $h_j(z_1^{m_1},\ldots,z_l^{m_l})/z_1^{m_1} > 2\varepsilon$,
		inequality~\eqref{eq::helper-1543} implies that $h_j\big(z_1^{\widetilde{m}_1},\ldots,z_l^{\widetilde{m}_l}\big) > \varepsilon z_1^{\widetilde{m}_1}$.
		It remains to choose $(\widetilde{m}_1,\ldots,\widetilde{m}_l)$ with $z_1^{\widetilde{m}_1}$ sufficiently large.
	\end{proof}

	The following is a useful lemma showing how to eliminate a variable $n_a$ if we can bound the gap between $n_a$ and some other (suitable) variable $n_b$.

	\begin{lemma}
		\label{thm::eliminating-one-var}
		Let $\alpha,\beta \in\nat_{>1}$, $z_1,\ldots,z_l \in \{\alpha,\beta\}$ for $l \ge 2$, and $1\le a,b \le l$ be distinct with $z_a = z_b$.
		Suppose we are given the system
		\begin{equation}
			\label{eq::helper-1850}
			\begin{cases}
				\vspace*{0.15cm}A\mathbf{z} > \zerovec\\
				\vspace*{0.15cm} N_1 \le n_a - n_b \le N_2
			\end{cases}
		\end{equation}
		where $A \in \intg^{r \times l}$ for $r \ge1$, $\mathbf{z} = (z_1^{n_1}, \ldots, z_l^{n_l})$ and $N_1,N_2 \ge 0$.
		Then we can construct matrices $\widetilde{A}_k \in \intg^{r\times (l-1)}$ for $N_1\le k \le N_2$ and $y_1,\ldots,y_{l-1} \in \{\alpha,\beta\}$ with the following property.
		There exists $\mathbf{y} = \big(y_1^{n_1},\ldots,y_{l-1}^{n_{l-1}}\big)$ satisfying   $n_1,\ldots,n_{l-1}\ge 0$ and
		\[
		\bigvee_{ k = N_1}^{N_2} \widetilde{A}_k \mathbf{y} > \zerovec
		\] if and only if the system~\eqref{eq::helper-1850} has a solution.
	\end{lemma}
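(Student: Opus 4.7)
The strategy is a direct substitution argument: since $z_a=z_b$, fixing the gap $k = n_a - n_b$ makes $z_a^{n_a}$ expressible as $z_a^k \cdot z_b^{n_b}$, which collapses the $n_a$-variable into $n_b$. Because $k$ ranges over an effectively bounded interval $[N_1,N_2] \cap \nat$, taking a disjunction over all admissible values of $k$ captures exactly the solution set of the original system.

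Concretely, for each $k$ with $N_1 \le k \le N_2$, I would construct $\widetilde{A}_k$ as follows. Let $A_{*,j}$ denote the $j$-th column of $A$. For each row $i$, observe that under the substitution $n_a = n_b + k$ we have
\[
A_{i,a} z_a^{n_a} + A_{i,b} z_b^{n_b} = \bigl(A_{i,a} z_a^{k} + A_{i,b}\bigr) z_b^{n_b},
\]
using $z_a=z_b$. Accordingly, I replace column $b$ of $A$ by $A_{*,b} + z_a^{k} \cdot A_{*,a}$ (note that $z_a^k \in \nat$ since $k \ge N_1 \ge 0$, so the entries remain integers) and then delete column $a$ from the resulting matrix. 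Call this $\widetilde{A}_k \in \intg^{r \times (l-1)}$. I would take $y_1,\ldots,y_{l-1}$ to be the sequence $z_1,\ldots,z_l$ with $z_a$ removed (renumbered in the natural order), and let $\mathbf{y} = (y_1^{n_1},\ldots,y_{l-1}^{n_{l-1}})$.

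For correctness, I would verify both directions. If $(n_1,\ldots,n_l)$ satisfies system~\eqref{eq::helper-1850}, set $k = n_a - n_b \in [N_1,N_2]$; dropping the $a$-th coordinate and applying the identity above shows that the remaining tuple (after renaming) satisfies $\widetilde{A}_k \mathbf{y} > \zerovec$. Conversely, given a solution of $\widetilde{A}_k \mathbf{y} > \zerovec$ for some $k \in [N_1,N_2]$, I would define $n_a := n_b + k$ (inserting it back at position $a$); the same identity shows $A \mathbf{z} > \zerovec$, while the bound $N_1 \le n_a - n_b = k \le N_2$ is built in by choice of $k$.

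There is no serious obstacle here; the lemma is a bookkeeping step used later in the Fourier--Motzkin-style elimination described in the overview. The only points warranting care are (i) that $k$ is a non-negative integer so that $z_a^k$ is well-defined in $\nat$ and the new matrix has integer entries, and (ii) that after deleting the $a$-th column the correspondence between the remaining $z_i$'s and the $y_i$'s is consistently maintained throughout the disjunction over $k$.
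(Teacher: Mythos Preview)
Your proposal is correct and follows essentially the same substitution argument as the paper: split over the finitely many admissible values of $k=n_a-n_b$, use $z_a=z_b$ to absorb $z_a^{n_a}=z_a^k z_b^{n_b}$ into the $b$-th column, and delete column $a$ to obtain $\widetilde{A}_k$. The paper states this slightly more tersely but the construction and the two-direction verification are identical.
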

	\begin{proof}
		Choose $(y_1,\ldots,y_{l-1})$ to be any ordering of $\{z_1,\ldots,z_l\} \setminus \{z_a\}$.
		It suffices to construct $\widetilde{A}_k$ for $N_1 \le k \le N_2$ such that $\widetilde{A}_k \cdot \mathbf{y} > \zerovec$ has a solution if and only if
		\begin{equation}
			\label{eq::helper-2139}
			\begin{cases}
				\vspace*{0.15cm}A\mathbf{z} > \zerovec\\
				\vspace*{0.15cm}  n_a  = n_b + k
			\end{cases}
		\end{equation}
		has a solution.
		The system~\eqref{eq::helper-2139} has a solution if and only if there exist $n_1,\ldots,n_{a-1},n_{a+1},\ldots,n_l$ such that
		\begin{equation}
			\label{eq::helper-2153}
			\big(A_{j,a}z_b^k + A_{j,b}\big)z_b^{n_b} + \sum^l_{
				\substack{
					i = 1\\
					i \ne a,b
				}
			} A_{j,i}z_i^{n_i} > 0
		\end{equation}
		for all $1 \le j \le r$.
		Thus we have eliminated the variable $n_a$, and can construct $\widetilde{A}_k$ by writing~\eqref{eq::helper-2153} for $1 \le j \le r$ in the matrix form.
	\end{proof}

	By \Cref{thm::pumping-corollary}, to solve the inequality $A \mathbf{z} > \mathbf{b}$ for $\mathbf{b} \ge 0$ it suffices to solve $A\mathbf{z} > \zerovec$.
	Next we show how to do the latter.

	\begin{restatable}{theorem}{ineqsmain}
		\label{thm::ineq-pure}
		Suppose we are given multiplicatively independent $\alpha,\beta \in \nat_{>1}$, $z_1,\ldots,z_l \in \{\alpha,\beta\}$ for some $l \ge 1$, and $A \in \intg^{r \times l}$ with $r > 0$.
		It is decidable whether there exist $n_1,\ldots,n_l \in \nat$ such that $A\mathbf{z} > \zerovec$, where $\mathbf{z} = (z_1^{n_1},\ldots,z_l^{n_l})$.
	\end{restatable}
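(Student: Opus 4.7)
The plan is to proceed by induction on $l$. The base case $l = 1$ is immediate: each row of $A\mathbf{z} > \zerovec$ reduces to $A_{j,1}\cdot z_1^{n_1} > 0$, which is satisfied iff $A_{j,1} > 0$, independently of $n_1$. For the inductive step I would first do a case split on the ordering of $z_1^{n_1}, \ldots, z_l^{n_l}$: there are $l!$ such orderings, and after renaming variables I may assume I am seeking solutions with $z_1^{n_1} \geq z_2^{n_2} \geq \cdots \geq z_l^{n_l}$.

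I would then distinguish two subcases. If $z_1 = z_2$, then a standard dominance argument (choosing a threshold $N$ with $z_1^N$ exceeding all coefficient sums appearing in $A$) shows that, whenever $n_1 - n_2 > N$, the sign of each row $A_j\mathbf{z}$ is determined by its leading nonzero coefficient, reducing the problem to an $(l-1)$-variable instance in $(n_2, \ldots, n_l)$; for each of the finitely many values $k = n_1 - n_2 \in \{0, \ldots, N\}$, \Cref{thm::eliminating-one-var} substitutes $n_1 = n_2 + k$ and produces an equivalent $(l-1)$-variable instance, all decidable by induction. If instead $z_1 \neq z_2$, say $z_1 = \alpha$ and $z_2 = \beta$, I would perform a Fourier--Motzkin-style elimination oriented on the ratio $z_1^{n_1}/z_2^{n_2}$: after dividing each row of $A\mathbf{z} > \zerovec$ by $z_2^{n_2}$, every inequality becomes either an upper or lower bound on $z_1^{n_1}/z_2^{n_2} - a$, for some $a \in \rat$, in terms of $h(z_3^{n_3}, \ldots, z_l^{n_l})/z_2^{n_2}$ with $h$ a $\rat$-linear form, or is free of $n_1$. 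Pairing every lower bound with every upper bound, and case-splitting on whether $n_2 - n_3$ is at most a suitable threshold $N$ (handled recursively after fixing the small gap) or strictly exceeds $N$, the problem reduces to deciding satisfiability of finitely many instances of system~\eqref{eq::overview-system}.

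The heart of the argument, and the main obstacle, is deciding~\eqref{eq::overview-system}. My plan is to first recursively solve its sub-system in $n_3, \ldots, n_l$ obtained by dropping the first chain of interval constraints and keeping only $h_i < h_j$ for $(i,j) \in X$, the ordering $z_3^{n_3} > \cdots > z_l^{n_l}$, and $h_i > 0$ for $i \in I$; this is an $(l-2)$-variable instance of the same theorem and is handled by the inductive hypothesis. If it is unsatisfiable, so is~\eqref{eq::overview-system}. Otherwise, given a witness $(m_3, \ldots, m_l)$, I would combine the Pumping Lemma (\Cref{thm::pumping}) with Kronecker's theorem (\Cref{thm::kronecker}) as follows: choose a rational target $\mu$ lying strictly inside the open interval common to every pair $(h_i, h_j) \in X$ evaluated at $(m_3, \ldots, m_l)$ (this interval is non-empty precisely because $(m_3, \ldots, m_l)$ solves the sub-system, so the $h_i$ at the witness are separated from the $h_j$); apply the Pumping Lemma to extract a tolerance $\delta > 0$; then invoke Kronecker to produce infinitely many pairs $(n_1, n_2)$ with $n_2$ arbitrarily large and $|\alpha^{n_1}/\beta^{n_2} - (\mu + a)| < \delta$. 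The Pumping Lemma simultaneously shifts $(m_3, \ldots, m_l)$ to $(n_3, \ldots, n_l)$ preserving the positivity of every $h_i$ and keeping the relevant ratios close to their values at the witness, and the gap $n_2 - n_3 > N$ is automatic once $n_2$ is chosen sufficiently large. The subtle point, and what makes the Diophantine ingredients essential, is simultaneously meeting every interval constraint indexed by $X$ with a single choice of $\mu$; this compatibility is exactly what solving the sub-system delivers.
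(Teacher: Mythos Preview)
Your inductive scaffolding and the Fourier--Motzkin reduction to system~\eqref{eq::overview-system} are essentially what the paper does, but your treatment of~\eqref{eq::overview-system} itself has a genuine gap at the level of scales. You pick a fixed rational $\mu$ inside the interval $(\max_i h_i(m_3,\ldots,m_l),\min_j h_j(m_3,\ldots,m_l))$ and then target $\alpha^{n_1}/\beta^{n_2}$ near $\mu+a$ via Kronecker. But the constraint in~\eqref{eq::overview-system} asks that $\alpha^{n_1}/\beta^{n_2}-a$ lie in the interval $\bigl(h_i(\mathbf{n})/\beta^{n_2},\,h_j(\mathbf{n})/\beta^{n_2}\bigr)$, whose width is $O(\beta^{-(n_2-n_3)})$ and which is centred near~$0$, not near your fixed~$\mu$. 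If you keep $(n_3,\ldots,n_l)$ bounded and let $n_2\to\infty$ (as your remark ``the gap $n_2-n_3>N$ is automatic once $n_2$ is sufficiently large'' suggests), the target interval shrinks to a point, and Kronecker's density statement gives no control over whether any $\alpha^{n_1}/\beta^{n_2}$ lands in an interval of width $O(\beta^{-n_2})$ around~$a$.

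The paper's fix is to keep the gap $d\coloneqq n_2-n_3$ \emph{bounded} (indeed, fixed), so that after pumping $(m_3,\ldots,m_l)$ up to $(k_3,\ldots,k_l)$ with $k_3=k_2-d$, the target interval for $\alpha^{k_1}/\beta^{k_2}$ has fixed positive width and Kronecker applies. But now the Pumping Lemma imposes its own Diophantine condition $|\alpha^{\widehat n}/\beta^{k_3}-\mu_{\mathrm{pump}}|<\delta_{\mathrm{pump}}$ on $k_3$, which is coupled to $k_2$ via $k_3=k_2-d$. Satisfying \emph{both} conditions simultaneously is the real difficulty, and it is handled by an auxiliary result (\Cref{thm::simult-diophantine-appr} in the paper): one first chooses $d$ so that $\beta^d/\alpha^m\approx\mu_{\mathrm{pump}}/a$, which guarantees that whenever $\alpha^{k_1}/\beta^{k_2}$ is close to~$a$, some $\alpha^{\widehat n}/\beta^{k_2-d}$ is automatically close to~$\mu_{\mathrm{pump}}$. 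Your proposal collapses these two approximation problems into one and misses this coupling step entirely.
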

	\begin{proof}
		The proof is by induction on $l$.
		For $l = 1$, the statement is immediate.
		Suppose $l = 2$.
		Then $A\mathbf{z} > \zerovec$ is equivalent to
		$z_1^{n_1}/z_2^{n_2} \in (c,d)$ for some $c,d\in \rat \cup \{+\infty\}$.
		If $z_1 = z_2$, then a solution exists if and only $z_1^k \in (c,d)$ for some $k\in \intg$, which is trivial to determine.
		If $z_1 \ne z_2$, then applying \Cref{thm::kronecker}, a solution exists if and only if $d > 0$ and $(c,d)$ is non-empty.

		Suppose $l > 2$.
		If we additionally assume that $z_a^{n_a} = z_b^{n_b}$ for some $a \ne b$, then we can eliminate at least one variable and solve the resulting system inductively, as follows.
		If $z_a = z_b$, then $n_a = n_b$, and we can invoke \Cref{thm::eliminating-one-var} with $N_1 = N_2 = 0$.
		If $z_a \ne z_b$, then by multiplicative independence, $n_a = n_b = 0$, and we can eliminate two variables.
		Hence we have reduced our problem to solving $l(l-1)/2$ systems in at most $l-1$ variables (which can be solved inductively), and the system
		\begin{equation}
			\label{eq::helper-0925}
			\begin{cases}
				\vspace*{0.15cm}A\mathbf{z} > \zerovec\\
				\vspace*{0.15cm} z_a^{n_a} \ne z_b^{n_b} \textrm{ for all $a\ne b$.}
			\end{cases}
		\end{equation}
		Next, by a case analysis on the largest two terms among $z_1^{n_1},\ldots,z_l^{n_l}$ and the order of the remaining terms, we reduce solving~\eqref{eq::helper-0925} to solving systems of the form
		\begin{equation*}
			\begin{cases}
				\vspace*{0.15cm}A\mathbf{z} > \zerovec\\
				\vspace*{0.15cm} z_{\sigma(1)}^{n_{\sigma(1)}}, z_{\sigma(2)}^{n_{\sigma(2)}} >  z_{\sigma(3)}^{n_{\sigma(3)}} >  \cdots > z_{\sigma(l)}^{n_{\sigma(l)}} \\
				\vspace*{0.15cm} z_{\sigma(1)}^{n_{\sigma(1)}} \ne z_{\sigma(2)}^{n_{\sigma(2)}}
			\end{cases}
		\end{equation*}
		where $\sigma$ is a permutation of $\{1,\ldots,l\}$.
		By renaming variables, we can rewrite the above as
		\begin{equation}
			\label{eq::helper-2320}
			\begin{cases}
				\vspace*{0.15cm}\widetilde{A} \mathbf{z} > 0\\
				\vspace*{0.15cm}z_1^{n_1},z_2^{n_2} >z_3^{n_3} > \cdots > z_l^{n_l}\\
				\vspace*{0.15cm}z_1^{n_1} \ne z_2^{n_2}.
			\end{cases}
		\end{equation}
		We will show how to solve such systems.

		Suppose $z_1 = z_2$.
		In this case we consider the two possibilities $n_1 > n_2$ and $n_1 < n_2$.
		We will only show how to solve the system
		\begin{equation}
			\label{eq::helper-1930}
		\begin{cases}
			\vspace*{0.15cm}\widetilde{A} \mathbf{z} > 0\\
			\vspace*{0.15cm}z_1^{n_1} > z_2^{n_2} >z_3^{n_3} > \cdots > z_l^{n_l}
		\end{cases}
		\end{equation}
		as the same argument applies to the case of $n_1 < n_2$.
		If $A_{j,1} < 0$ for some $j$, then we can compute~$N$ such that $1 \le n_1 - n_2 \le N$ in every solution of~\eqref{eq::helper-1930}.
		We can then eliminate the variable $n_1$ using \Cref{thm::eliminating-one-var}, and solve the resulting system in $l-1$ variables inductively.
		Now suppose $A_{j,1} \ge 0$ for all $j$.
		Let $K = \{k \st A_{k,1} = 0\}$ and $h_k(x_2,\ldots,x_l) = \sum_{i=2}^l A_{k,i} \cdot x$ for $k \in K$.
		Inductively solve the system consisting of the inequalities $z_2^{n_2} >\cdots >z_l^{n_l}$ and $h_k(z_2^{n_2},\ldots,z_l^{n_l}) > 0$ for $k \in K$.
		If there is no solution, then~\eqref{eq::helper-1930} does not have solution either.
		Otherwise, a solution to~\eqref{eq::helper-1930} can be constructed from the solution to the sub-system by choosing $n_1$ to be sufficiently large.

		Suppose $z_1  \ne z_2$; this is the more difficult case.
		By multiplicative independence, $z_1^{n_1} \ne z_2^{n_2}$ if and only if at least one of $n_1,n_2$ is non-zero.
		This is automatically satisfied, as $l > 2$ and $z_1^{n_1}, z_2^{n_2} > z_3^{n_3}$.
		By exchanging $z_1$ and $z_2$ if necessary, we can assume that $z_1 \ne z_2 = z_3$.
		Note that this implies $n_2 > n_3$.
		Further assume, without loss of generality, that $z_1 = \alpha$ and $z_2 = \beta$.

		By multiplying inequalities with different rational constants if necessary, write the system~\eqref{eq::helper-2320} in the form

		\begin{equation}
			\label{eq::helper-2318}
			\begin{cases}
				\vspace*{0.15cm} z_1^{n_1} > p_i(z_2^{n_2},\ldots,z_l^{n_l})  \quad \textrm{ for } i \in I_-\\
				\vspace*{0.15cm} z_1^{n_1} < p_i(z_2^{n_2},\ldots,z_l^{n_l}) \quad \textrm{ for } i \in I_+\\
				\vspace*{0.15cm} p_i(z_2^{n_2}, \ldots, z_l^{n_l}) > 0 \quad\quad \textrm{ for } i \in J\\
				\vspace*{0.15cm} z_1^{n_1},z_2^{n_2} > z_3^{n_3} > \cdots > z_l^{n_l}
			\end{cases}
		\end{equation}
		where $I_-, I_+, J$ are disjoint finite sets and each $p_i$ is a $\rat$-linear form.
		We can assume that $I_-$ is non-empty by adding the identically zero $\rat$-linear form over $l-1$ variables if necessary.
		Suppose $I_+$ is empty.
		Then inductively solve the sub-system
		\begin{equation}
			\begin{cases}
				\vspace*{0.15cm} p_i(z_2^{n_2}, \ldots, z_l^{n_l}) > 0 \quad\quad \textrm{ for } i \in J\\
				\vspace*{0.15cm} z_2^{n_2} > z_3^{n_3} > \cdots > z_l^{n_l}.
			\end{cases}
		\end{equation}
		If a solution exists, then a solution to \eqref{eq::helper-2318} can be constructed by choosing $n_1$ to be sufficiently large.
		Therefore, we can suppose both $I_-$ and $I_+$ are non-empty.

        Let $a_-$ be the largest coefficient of $z_2^{n_2}$ of any linear form $p_i(z_2^{n_2},\ldots,z_l^{n_l})$ with $i \in I_-$ and $\widetilde{I}_- \subseteq I_-$ all indices $i$ such that the coefficient of $z_2^{n_2}$ in $p_i(z_2^{n_2},\ldots,z_l^{n_l})$  equals $a_-$.
        Then one can effectively compute a number $N_-$ such that when $n_2 - n_3 > N_-$, $p_i(z_2^{n_2},\ldots,z_l^{n_l}) \le p_j(z_2^{n_2},\ldots,z_l^{n_l})$ for all $i \in I_-\setminus \widetilde{I}_-$ and $j \in \widetilde{I}_-$.
        Further, for $i \in \widetilde{I}_-$, write $p_i(z_2^{n_2},\ldots,z_l^{n_l}) = a_-z_2^{n_2} + h_i(z_3^{n_3},\ldots,z_l^{n_l})$.

        Similarly, let $a_+$ be the smallest coefficient of $z_2^{n_2}$ of any linear form $p_i(z_2^{n_2},\ldots,z_l^{n_l})$ with $i \in I_+$ and $\widetilde{I}_+ \subseteq I_+$ all indices $i$ such that the coefficient of $z_2^{n_2}$ in $p_i(z_2^{n_2},\ldots,z_l^{n_l})$  equals $a_+$.
        Then one can effectively compute a number $N_+$ such that $p_i(z_2^{n_2},\ldots,z_l^{n_l}) \le p_j(z_2^{n_2},\ldots,z_l^{n_l})$ for all $i \in I_+\setminus \widetilde{I}_+$ and $j \in \widetilde{I}_+$ when $n_2 - n_3 > N_+$.
        Further, for $i \in \widetilde{I}_+$, write $p_i(z_2^{n_2},\ldots,z_l^{n_l}) = a_+z_2^{n_2} + h_i(z_3^{n_3},\ldots,z_l^{n_l})$.

        For $i \in J$, let $c_i$ be the coefficient $z_2^{n_2}$ in $p_i(z_2^{n_2},\ldots,z_l^{n_l})$. Then let $\widetilde{J}$ be the subset of $J$ such that $c_i$ is zero and for $i \in \widetilde{J}$, write $h_i(z_3^{n_3},\ldots,z_l^{n_l}) = p_i(z_2^{n_2},\ldots,z_l^{n_l})$.  
        For some computably large~$N$, the following holds: If $i \in J \setminus \widetilde{J}$, then when $n_2 - n_3 > N$, the sign of $p_i(z_2^{n_2},\ldots,z_l^{n_l})$ equals the sign of $c_i$.
        Hence, if any $c_i$ is negative, we can reject the input when $n_2-n_3 > N$ and all inequalities where $c_i > 0$ are trivially satisfied.
        By taking $N$ large enough, we can assume that $N \ge N_-, N_+$.

        If we add $0 \le n_2 - n_3 \le N$ to~\eqref{eq::helper-2318}, we can solve the resulting system by eliminating $n_2$ using \Cref{thm::eliminating-one-var}.
        Meanwhile, if $n_2 - n_3 > N$, we have reduced ~\eqref{eq::helper-2318} (and hence our original decision problem) to solving systems of the following form:
		\begin{numcases}{}
			a_- + \frac{h_i(z_3^{n_3},\ldots,z_l^{n_l})}
			{z_2^{n_2}}
			< \frac{z_1^{n_1}}{z_2^{n_2}}
			< a_+ + \frac{h_j(z_3^{n_3},\ldots,z_l^{n_l})}{z_2^{n_2}}
			\label{eq::1-sys} \quad \text{for all $i \in \widetilde{I}_-$ and  $j \in \widetilde{I}_+$}
			\\
			z_1^{n_1} > z_3^{n_3}
			\label{eq::2-sys}
			\\
			z_3^{n_3}> \cdots > z_l^{n_l}
			\label{eq::3-sys}
			\\
			h_i(z_3^{n_3},\ldots,z_l^{n_l}) > 0 \quad \text{for all $i \in \widetilde{J}$}
			\label{eq::4-sys}
			\\
			n_2-n_3 > N.
			\label{eq::5-sys}
		\end{numcases}
		Note that as $N\ge0$ and $z_2 = z_3$, the condition~\eqref{eq::5-sys} implies $z_2^{n_2} > z_3^{n_3}$.
		It remains to show how to solve the system (\ref{eq::1-sys}-\ref{eq::5-sys}).

		\emph{Case 1.} Suppose $a_- = a_+ = a > 0$ for some $a \in \rat$.
		This is the only difficult case.
		Recalling that $z_2 = z_3 = \beta$, \eqref{eq::1-sys} is equivalent to
		\[
		\frac{h_i(z_3^{n_3},\ldots,z_l^{n_l})}{z_3^{n_3}} \cdot \frac{1}{\beta^{n_2-n_3}} < \frac{\alpha^{n_1}}{\beta^{n_2}} - a < \frac{h_j(z_3^{n_3},\ldots,z_l^{n_l})}{z_3^{n_3}}  \cdot \frac{1}{\beta^{n_2-n_3}} \quad \text{for all $i \in \widetilde{I}_-$ and $j \in \widetilde{I}_+$.}
		\]
         Observe that $h_i(z_3^{n_3},\ldots,z_l^{n_l}) < h_j(z_3^{n_3},\ldots,z_l^{n_l})$ is implied by~\eqref{eq::1-sys}.
		Inductively solve the system consisting of the inequalities $h_i(z_3^{n_3},\ldots,z_l^{n_l}) < h_j(z_3^{n_3},\ldots,z_l^{n_l})$ for $i \in \widetilde{I}_-$ and $j \in \widetilde{I}_+$, \eqref{eq::3-sys}, and \eqref{eq::4-sys}.
		If no solution exists, then the system (\ref{eq::1-sys}-\ref{eq::5-sys}) does not have a solution either.
		Otherwise, let $(m_3,\ldots,m_l)$ be a solution to the smaller system.
		We will argue that the system (\ref{eq::1-sys}-\ref{eq::5-sys}) also has a solution.

		Define $x_- = \max_{i \in \widetilde{I}_-}\big\{\frac{h_i(z_3^{m_3},\ldots,z_l^{m_l})}{z_3^{m_3}}\big\}$, $x_+ = \min_{i \in \widetilde{I}_+}\big\{\frac{h_i(z_3^{m_3},\ldots,z_l^{m_l})}{z_3^{m_3}}\big\}$ and $\varepsilon = (x_+ - x_-)/4$.
		From the construction of $m_3,\ldots,m_l$ it follows that $\varepsilon > 0$.
		We will construct a solution $(k_1,\ldots,k_l) \in \nat^l$ to the system (\ref{eq::1-sys}-\ref{eq::5-sys}).
		To do this, it suffices to construct $(k_1,\ldots,k_l) $ satisfying (\ref{eq::2-sys}-\ref{eq::5-sys}) with the following additional properties:
		\begin{itemize}
			\item[(a)] $\frac{x_- + \varepsilon}{\beta^{k_2-k_3}} < \frac{\alpha^{k_1}}{\beta^{k_2}} - a <  \frac{x_+ -\varepsilon}{\beta^{k_2-k_3}}$;
			\item[(b)] $\frac{h_i\big(z_3^{k_3},\ldots,z_l^{k_l}\big)}{z_3^{k_3}} < x_- + \varepsilon$ for all $i \in \widetilde{I}_-$;
			\item[(c)] $\frac{h_i\big(z_3^{k_3},\ldots,z_l^{k_l}\big)}{z_3^{k_3}} > x_+ - \varepsilon$ for all $i \in \widetilde{I}_+$.
		\end{itemize}
		As a sanity check on~(a), observe that for any $d\in \nat$,
		\[
		(x_- + \varepsilon)  \frac{1}{\beta^d} < (x_+-\varepsilon)  \frac{1}{\beta^d}.
		\]
		Next, invoke the Pumping Lemma with $m_1,\ldots,m_l$,  $\varepsilon$ as above and the linear forms
		\begin{itemize}
			\item $-h_i(z_3^{n_3},\ldots,z_l^{n_l}) + (x_- +\varepsilon)z_3^{k_3}$ for all $i \in \widetilde{I}_-$,
   			\item $h_i(z_3^{n_3},\ldots,z_l^{n_l}) - (x_+ -\varepsilon)z_3^{k_3}$ for all $i \in \widetilde{I}_+$,
			\item $h_i(z_3^{n_3},\ldots,z_l^{n_l})$ for all $i \in \widetilde{J}$, and
			\item $z_i^{n_i} - z_{i+1}^{n_{i+1}}$ for $3 \le i \le l - 1$
		\end{itemize}
		to compute $\mu, \delta > 0$.
		We have that any $n_3 > m_3$ satisfying
		\[
		\big|\alpha^{\widetilde{n}}/\beta^{n_3} - \mu\big| < \delta
		\]
		for some $\widetilde{n} \in \nat$
		can be extended to $(n_3,\ldots,n_l) \in \nat^{l-2}$ satisfying (\ref{eq::3-sys}-\ref{eq::4-sys}) and (b-c).

		Let $\Delta = \min \left\{ \frac a 2,  \frac{ a\delta}{2\mu} \right\} > 0$.
		It has the properties that $\Delta < a$ and $\mu\Delta/a  \le \delta/2$.
		We will need the following lemma.
		Intuitively, it will be used to show that we can simultaneously satisfy the Diophantine approximation conditions arising from the above application of the Pumping Lemma and item~(a).
		\begin{lemma}
			\label{thm::simult-diophantine-appr}
			Let $a, \mu, \delta, \Delta$ be as above.
			Given $M \in \nat$, we can compute $d > M$ and $m \in \nat$ with the following property.
			For all $k \ge m$, if there exists $\widetilde{n} \in \nat$ such that
			\[
			\big|\alpha^{\widetilde{n}}/\beta^k - a\big| < \Delta,
			\]
			then there exists $\widehat{n} \in \nat$ such that
			\[
			\big|\alpha^{\widehat{n}}/\beta^{k-d}-\mu\big| < \delta.
			\]
		\end{lemma}
		\begin{proof}
			Let $\xi = \delta/(4a)$.
			Using \Cref{thm::kronecker}, choose $d,m \in \nat$ to have the property that $d > M$ and
			\[
			\big\vert \beta^d/\alpha^m -\mu/a \big\vert < \xi.
			\]
			Suppose $|\alpha^{\widetilde{n}}/\beta^k - a| < \Delta$ for some $\widetilde{n} \ge m$.
			Let $\widehat{n} = \widetilde{n} - m$.
			Then
			\begin{align*}
				\left|\frac{\alpha^{\widehat{n}}}{\beta^{k-d}}-\mu\right|  &=
				\left\vert \frac {\alpha^{\widetilde{n}}} {\beta^k}
				\cdot \frac{\beta^d}{\alpha^m} - \mu
				\right\vert\\
				&=
				\left\vert
				\left(\frac{\alpha^{\widetilde{n}}}{\beta^k}-a\right)\frac{\beta^d}{\alpha^m} + a\left(\frac{\beta^d}{\alpha^m} -\frac{\mu}{a}\right)
				\right\vert\\
				&<
				\Delta(\xi+\mu/a) + a\xi\\
				&<2a\xi + \frac{\mu\Delta}{a}\\
				&\le \delta
			\end{align*}
			where the last two inequalities follow from the facts that $\Delta<a$, $\mu\Delta/a \le \delta/2$, and $a\xi = \delta/4$.
		\end{proof}
		Choose $M$ to be such that every $d>M$ has the following properties.
		\begin{itemize}
			\item[(A)] $d > N$;
			\item[(B)] $|x_- +\varepsilon|/\beta^d, |x_+-\varepsilon|/\beta^d < \Delta$;
			\item[(C)] $x_-+\varepsilon + a\beta^d > 1$.
		\end{itemize}
		Then apply \Cref{thm::simult-diophantine-appr} with this value of $M$ to construct $d$ and $m$.
		We will next construct $(k_1,\ldots,k_l) \in \nat^l$ satisfying (\ref{eq::2-sys}-\ref{eq::5-sys}) and (a-c);
		recall that such $(k_1,\ldots,k_l)$ will also be a solution to~(\ref{eq::1-sys}-\ref{eq::5-sys}).
		First, choose $k_1, k_2$ such that $k_2 > \max \{d,m\}$,
		and
		\[
		\frac{x_- + \varepsilon}{\beta^d} < \frac{\alpha^{k_1}}{\beta^{k_2}} - a < \frac{x_+-\varepsilon}{\beta^d}.
		\]
		By~(B), $|\alpha^{k_1}/\beta^{k_2}-a| < \Delta$.
		Then set $k_3 = k_2 -d$.
		By the construction of $d$ and $m$ via \Cref{thm::simult-diophantine-appr}, and the fact that $k_2 > m$, there exists $\widehat{n}$ such that
		\[
		\big|\alpha^{\widehat{n}}/\beta^{k_2-d}-\mu\big|  = \big|\alpha^{\widehat{n}}/\beta^{k_3}-\mu\big|  < \delta.
		\]
		Hence, by construction of $\mu,\delta$ via the Pumping Lemma, we can extend $k_3$ to $(k_3,\ldots,k_l)$ that satisfy (\ref{eq::3-sys}-\ref{eq::4-sys}) as well as (b-c).
		Inequality \eqref{eq::5-sys} and property~(a) are satisfied by construction.
		It remains to show that \eqref{eq::2-sys} is satisfied.
		By~(a), $\alpha^{k_1} - a\beta^{k_2} > (x_-+\varepsilon) \beta^{k_3}$.
		Hence
		\[
		\alpha^{k_1} > (x_-+\varepsilon) \beta^{k_3} + a\beta^{k_2} = \beta^{k_3}(x_-+\varepsilon + a\beta^d) > \beta^{k_3}.
		\]

		\emph{Case 2.}
		Suppose $a_+ > 0$ and $a_+ > a_-$.
		Let $\varepsilon = (a_+ - \max\{a_-, 0\})/4$.
		Compute $M \ge N$ such that for all $n_2,\ldots,n_l \in \nat$ satisfying $z_2^{n_2} > z_3^{n_3} > \cdots > z_l^{n_l}$ and $n_2 - n_3 > M$, we have that
		\[
		\left\vert
		\frac{h_i(z_3^{n_3},\ldots,z_l^{n_l})}
		{z_2^{n_2}}
		\right\vert
		 < \varepsilon \quad \text{for all $i \in \widetilde{I}_- \cup \widetilde{I}_+$}.
		\]
		Next, inductively solve the sub-system comprising inequalities~\eqref{eq::3-sys} and~\eqref{eq::4-sys}.
		If there is no solution, then~(\ref{eq::1-sys}-\ref{eq::5-sys}) does not have a solution and we are done.
		Otherwise, let $(k_3,\ldots,k_l)$ be a solution of the sub-system.
		Applying \Cref{thm::kronecker}, construct $k_1,k_2 \in \nat$ such that $z_1^{k_1} > z_3^{k_3}$, $k_2 - k_3 > M$, and $z_1^{k_1}/z_2^{k_2} \in (a_-+\varepsilon, a_+-\varepsilon)$.
		Then $(k_1,\ldots,k_l)$ is a solution of (\ref{eq::1-sys}-\ref{eq::5-sys}).

		\emph{Case 3.} Suppose $a_+ < a_-$.
		Let $\varepsilon$, $M$, $(k_3,\ldots,k_l)$ be as in Case~2; If no $(k_3,\ldots,k_l)$ exist, once again we are done.
		Observe that any $(n_1,\ldots,n_l) \in \nat^l$ such that $n_2 - n_3 > M$ is not a solution of (\ref{eq::1-sys}-\ref{eq::5-sys}).
		Hence the system (\ref{eq::1-sys}-\ref{eq::5-sys}) has a solution if and only if the system comprising (\ref{eq::1-sys}-\ref{eq::4-sys}) and $N < n_2-n_3 \le M$ has a solution, which can be checked using \Cref{thm::eliminating-one-var}.

		\emph{Case 4.} $a_+ = a_- = 0$.
		In this case, \eqref{eq::1-sys} is equivalent to
		\begin{equation}
			\label{eq::helper-1848}
			h_i(z_3^{n_3},\ldots,z_l^{n_l}) < z_1^{n_1} < h_j(z_3^{n_3},\ldots,z_l^{n_l}) \quad \text{for all $i \in \widetilde{I}_-$ and  $j \in \widetilde{I}_+$}
		\end{equation}
		in which the variable $n_2$ does not appear.
		Hence we can first inductively solve the sub-system comprising~(\ref{eq::1-sys}-\ref{eq::4-sys}).
		If a solution exists, then choose $n_2$ to be sufficiently large to satisfy~\eqref{eq::5-sys}.
		Otherwise, conclude that (\ref{eq::1-sys}-\ref{eq::5-sys}) does not have a solution either.

		\emph{Case 5.} Suppose $a_+ = 0 > a_-$.
		This case is similar to Case 4.
		Let $M$ be such that for all $(n_1,\ldots,n_l)$, if $n_2 - n_3 > M$ then
		\[
		a_- + \frac{h_i(z_3^{n_3},\ldots,z_l^{n_l})}
		{z_2^{n_2}} < 0 \quad \text{for all $i \in \widetilde{I}_-$}.
		\]
		Hence for such $(n_1,\ldots,n_l)$, \eqref{eq::1-sys} is equivalent to
		\begin{equation}
			\label{eq::helper-2359}
			z_1^{n_1} < h_i(z_3^{n_3},\ldots,z_l^{n_l}) \quad \text{for all $i \in \widetilde{I}_+$}.
		\end{equation}
		We therefore solve two systems.
		First, inductively check if the system comprising (\ref{eq::2-sys}-\ref{eq::4-sys}) and \eqref{eq::helper-2359} has a solution $(k_1,k_3,\ldots,k_l)$.
		If yes, then choose $k_2$ to be sufficiently large so that~\eqref{eq::5-sys} is satisfied.
		Thereafter, solve (\ref{eq::1-sys}-\ref{eq::5-sys}) together with $N < n_2 -n_3 \le M$ using \Cref{thm::eliminating-one-var}.
		The system~(\ref{eq::1-sys}-\ref{eq::5-sys}) has a solution if and only if at least one of the two systems has a solution.

		\emph{Case 6.} Finally, suppose, $a_+ < 0$.
		Let $M$ be such that
		\[
		a_+ + \frac{h_i(z_3^{n_3},\ldots,z_l^{n_l})}
		{z_2^{n_2}} < 0 \quad \text{for all $i \in \widetilde{I}_+$}
		\]
		for all $n_2 - n_3 > M$.
		It remains to solve (\ref{eq::1-sys}-\ref{eq::5-sys}) together with $N < n_2 -n_3 \le M$ using \Cref{thm::eliminating-one-var}.

	\end{proof}
	We can finally prove decidability of Problem~\ref{problem1}.
	\begin{proof}[Proof of \Cref{thm::problem1-decidable}]
		\label{proof-of-problem1-decifable}
		We proceed by induction on $l$.
		If $l =1$, then the result is immediate.
		Suppose $l \ge 2$.
		Write the system $A \mathbf{z} > b \:\land\: C\mathbf{z} = d$ in the form
		\[
		\bigvee_{k \in K} A_k\mathbf{z} > \mathbf{b}_k \:\land\: C_k\mathbf{z} = \mathbf{d}_k
		\]
		where each $\mathbf{b}_i \ge \zerovec$.
		By \Cref{thm::pumping-corollary}, this is equivalent to the system
		\[
		\bigvee_{k \in K} A_k\mathbf{z} > \zerovec \:\land\: C_k\mathbf{z} = \mathbf{d}_k.
		\]
		It suffices to solve each disjunct separately.
		Fix $k \in K$.
		If $C_k$ is empty, then we can solve $A_k\mathbf{z} > 0$ using \Cref{thm::ineq-pure}.
		Suppose $C_k$ is non-empty.
		Then first solve $C_k \mathbf{z} = \mathbf{d}_k$  and write the set of solutions $\Scal$ in the form
			\begin{equation*}
			\Scal = \bigcup_{i \in I} \: \bigcap_{j \in J_i} X_{j}
		\end{equation*}
		as in \Cref{thm::equalities-main}.
		It suffices to check, for every $i \in I$, whether $A_k\mathbf{z} > \zerovec$ has a solution belonging to $\bigcap_{j \in J_i} X_{j}$.
		Fix $1 \le i \le I$.
		If $J_i$ is empty, then we simply solve $A_k\mathbf{z} > \zerovec$ using \Cref{thm::ineq-pure}.
		In case $J_i$ is non-empty, we will carry out a variable elimination as follows.
		\begin{lemma}\label{lem: substitution}
			Suppose we are given $\alpha,\beta \in \nat_{>1}$, $z_1,\ldots,z_l \in \{\alpha,\beta\}$, $E \in \intg^{r \times l}$, $\mathbf{u} \in \intg^r$, and $X_1,\ldots,X_M \subseteq \nat^l$ where each $X_j$
			is defined by either
			\begin{equation}\label{eq::helper-1226}
				n_{\mu(j)} = n_{\sigma(j)} + c_{j}
			\end{equation}
			or
			\begin{equation}
				\label{eq::helper-1227}
				n_{\xi(j)} = b_{j}
			\end{equation}
			for some $b_{j}, c_{j} \in \nat$ and $1 \le \xi(j), \mu(j), \sigma(j) \le l$ satisfying $z_{\mu(j)} = z_{\sigma(j)}$.
			We can construct $\lambda < l$, $F \in \intg^{r \times \lambda}$, $\mathbf{v} \in \intg^{r}$, and $y_1,\ldots,y_{\lambda} \in \{\alpha,\beta\}$ such that
			\begin{equation}
				\label{eq::helper-1551}
				E \cdot (z_1^{n_1},\ldots,z_l^{n_l}) > \mathbf{u} \:\land\: (n_1,\ldots,n_l) \in \bigcap_{1 \le j \le M} X_{j}
			\end{equation}
			has a solution if and only if $F \cdot (y_1^{n_1},\ldots,y_\lambda^{n_\lambda}) > \mathbf{v}$ has a solution.
		\end{lemma}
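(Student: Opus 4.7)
The plan is to eliminate variables iteratively, processing the constraints $X_1, \ldots, X_M$ one at a time, each triggering a single algebraic substitution that reduces the variable count of the inequality system by one. Throughout, I maintain a current system $E^{(k)} \mathbf{z}^{(k)} > \mathbf{u}^{(k)}$ over a shrinking set of active variables, together with a bookkeeping record that, for each original $n_i$, stores either ``active'' or a resolved form --- a fixed constant $b \in \nat$, or an expression $n_{i'} + c$ with $z_{i'} = z_i$ and $n_{i'}$ itself tracked recursively. Initially every variable is active, $E^{(0)} = E$, $\mathbf{u}^{(0)} = \mathbf{u}$, and $\mathbf{z}^{(0)} = (z_1^{n_1}, \ldots, z_l^{n_l})$.

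At step $j$ I first resolve the variables appearing in $X_j$ through the current record, yielding an equivalent constraint whose variables are each either active or a constant. A reduced constraint of the form $n_\xi = b$ with $n_\xi$ active is absorbed via the identity $z_\xi^{n_\xi} = z_\xi^b$: delete the $n_\xi$-column of the current matrix and subtract $z_\xi^b$ times that column from the current right-hand side. A reduced constraint of the form $n_\mu = n_\sigma + c$ with $z_\mu = z_\sigma$ and both variables active is absorbed via $z_\mu^{n_\mu} = z_\mu^c \cdot z_\sigma^{n_\sigma}$: add $z_\mu^c$ times the $n_\mu$-column to the $n_\sigma$-column and then delete the $n_\mu$-column. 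Since all scalars involved are integers, the matrix and right-hand side remain integral. Tautological reductions leave the system unchanged; a detected contradiction causes me to output the trivially unsatisfiable one-row system $0 > 1$ over any choice of surviving variables.

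After all $M$ constraints have been processed, set $F = E^{(M)}$, $\mathbf{v} = \mathbf{u}^{(M)}$, and take $y_1, \ldots, y_\lambda$ to be the bases of the surviving variables in $\mathbf{z}^{(M)}$. Correctness of each substitution step is immediate from the algebraic identity it uses, so by induction on $k$ the final system $F \cdot (y_1^{n_1}, \ldots, y_\lambda^{n_\lambda}) > \mathbf{v}$ is satisfiable if and only if \eqref{eq::helper-1551} is. Because the canonical representation produced by \Cref{thm::equalities-main} contains only non-degenerate constraints, each of the $M \ge 1$ constraints eliminates at least one variable, and hence $\lambda < l$ as required.

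The main obstacle is not algebraic but purely one of bookkeeping: several constraints may refer to the same original variable, directly or through chains of previous substitutions, and one must detect inconsistencies cleanly. A union-find structure in which each equivalence class carries both a representative (either a surviving variable or a constant) and a shift relative to that representative handles this uniformly: a freshly arriving constraint either merges two classes, pins a representative to a constant, restates information already stored (tautology), or is inconsistent with the stored data (contradiction, triggering the trivially unsatisfiable output).
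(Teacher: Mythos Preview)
Your proof is correct and follows the same substitution-based elimination as the paper's, which phrases it as induction on $M$ rather than as an explicit iteration with union-find bookkeeping. One quibble: the sentence ``each of the $M \ge 1$ constraints eliminates at least one variable'' is inconsistent with your own earlier allowance for tautological reductions (e.g.\ after processing $X_1\colon n_1 = n_2+1$ and $X_2\colon n_2 = n_3+1$, a third constraint $X_3\colon n_1 = n_3+2$ becomes tautological), and it appeals to \Cref{thm::equalities-main} outside the lemma's hypotheses; the correct justification for $\lambda < l$ is simply that the \emph{first} constraint, with all variables still active, necessarily removes one.
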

		\begin{proof}
			We proceed by induction on $M$.
			Write $\mathbf{z} = (z_1^{n_1},\ldots,z_l^{n_l})$.
			If $M = 1$, then simply substitute the equation defining $X_1$ into $E \mathbf{z} > \mathbf{u}$.
			Suppose $M > 1$, and consider $X_1$.
			Let
			\[
			(y_1,\ldots,y_{l-1}) = (z_1,\ldots,z_{\xi(j)-1},z_{\xi(j)+1},\ldots,z_l)
			\]
			if $X_1$ is defined by~\eqref{eq::helper-1226}, and
			\[
			(y_1,\ldots,y_{l-1}) = (z_1,\ldots,z_{\mu(j)-1},z_{\sigma(j)+1},\ldots,z_l)
			\]
			if it is defined by~\eqref{eq::helper-1226}.
			Substitute the equation defining $X_1$ into $E \mathbf{z} > \mathbf{u}$ to obtain an equivalent system $\widetilde{E} \mathbf{y} > \widetilde{\mathbf{u}}$.
			Next, for each $k \in \{1,\ldots,M\} \setminus \{\xi(j)\}$, compute an equation of the form~\eqref{eq::helper-1226} or~\eqref{eq::helper-1227} defining $Y_k \subseteq \nat^{l-1}$ that is the projection of $X_k \cap X_1$ onto the appropriate $l-1$ variables.
			If $Y_k$ is empty, then~\eqref{eq::helper-1551} does not have a solution, and we can output any system in $\lambda < l$ variables that does not have a solution.
			If every $Y_k$ is defined by a consistent equation, then invoke the induction hypothesis with the system $\widetilde{E} \mathbf{y} > \widetilde{\mathbf{u}}$ and the sets $Y_k$ for $k \in \{1,\ldots,M\} \setminus \{\xi(j)\}$.
		\end{proof}
		Using the lemma above, we can construct $\lambda < l$, $F \in \intg^{r \times \lambda}$, $\mathbf{v} \in \intg^{r}$, and $y_1,\ldots,y_{\lambda} \in \{\alpha,\beta\}$ such that
		\[
		A_k\mathbf{z} > 0 \:\land\: (n_1,\ldots,n_l) \in \bigcap_{j \in J_i} X_{j}
		\]
		has a solution if and only if
		\begin{equation}
			\label{eq::helper-1626}
			F \cdot (y_1^{n_1},\ldots,y_\lambda^{n_\lambda}) > \mathbf{v}
		\end{equation}
		has a solution.
		Since $\lambda <l$, we can use the induction hypothesis to solve~\eqref{eq::helper-1626}.
	\end{proof}

	\section{Hardness results for the existential fragment of $\mathcal{PA}(\alpha^x, \beta^x)$}
	\label{sec::hardness}


	We now consider the existential fragment of $\mathcal{PA}(\alpha^x, \beta^x)$ for multiplicatively independent $\alpha$ and~$\beta$.
	Unlike the case for $\mathcal{PA}(\alpha^{\nat}, \beta^{\nat})$, we show that decidability of the existential fragment of $\mathcal{PA}(\alpha^x, \beta^x)$ would give us algorithms for deciding various properties of base-$\alpha$ and base-$\beta$ expansions of a large class of numbers, captured by the next definition.

	\begin{definition}
		\label{def::definable}
		A sequence $(u_n)_{n=0}^\infty$ over $\nat$ is existentially definable if for every $k \ge1$ there exists an existential formula $\varphi$ with $k+1$ free variables in the language of $\langle \nat; 0, 1, <, +, x \to \alpha^x, x\to\beta^x \rangle$ such that for all $n,y_0,\ldots, y_{k-1} \in \mathbb{N}$, $\varphi(n,y_0,\ldots,y_{k-1})$ holds if and only if
		\[
		u_{n+i} = y_i
		\]
		for all $0\le i < k$.
	\end{definition}
	The set of definable sequences is closed under many operations.
	Let $(u_n)_{n=0}^\infty$ and $(v_n)_{n=0}^\infty$ be definable, and $c \in \mathbb{N}$. Then $(u_n + v_n)_{n=0}^\infty$, $(c + u_n)_{n=0}^\infty$, $(c \cdot  u_n)_{n=0}^\infty$, $(\alpha^{u_n})_{n=0}^\infty$, $(\beta^{u_n})_{n=0}^\infty$, $(u_{v_n})_{n=0}^\infty$ are also definable.
	Write $\{x\}$ for the fractional part of $x$.
	Let $\alpha,\beta \in \nat_{>1}$ be multiplicatively independent, $(A_n)_{n=0}^\infty$ be the base-$\alpha$ expansion of $\{\log_\beta(\alpha)\}$, and $(B_n)_{n=0}^\infty$ be the base-$\beta$ expansion of $\{\log_\alpha(\beta)\}$.
	Note that $\log_\alpha(\beta), \log_\beta(\alpha)$ are both irrational, and for any $\gamma \in \nat_{> 0}$ and $x \in \rel_{\ge 0}$, the base-$\gamma$ expansions of $x$ and $\{x\}$ differ only by a finite prefix.

	\begin{proposition}
		\label{prop::definability}
		The sequences $(A_n)_{n=0}^\infty$ and $(B_n)_{n=0}^\infty$ are definable.
	\end{proposition}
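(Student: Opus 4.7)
\textbf{Proof plan for Proposition~\ref{prop::definability}.} The plan is to establish definability of $(A_n)_{n=0}^\infty$; definability of $(B_n)_{n=0}^\infty$ then follows by swapping the roles of $\alpha$ and $\beta$. Let $c := \lfloor\log_\beta(\alpha)\rfloor \in \nat$, a computable constant, and for $n \ge 0$ set $m := \alpha^{n+1}$ and
\[
S_n \;:=\; \lfloor \alpha^{n+1}\{\log_\beta(\alpha)\}\rfloor,
\]
so that $S_n$ is the integer whose $(n{+}1)$-digit base-$\alpha$ representation (with leading zeros as needed) is $A_0 A_1 \cdots A_n$; in particular $0 \le S_n < \alpha^{n+1}$. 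The crucial identity
\[
m\log_\beta(\alpha) \;=\; cm + \alpha^{n+1}\{\log_\beta(\alpha)\}
\]
forces $\lfloor m\log_\beta(\alpha)\rfloor = cm + S_n$. Consequently, $S_n$ is pinned down as the unique $s \in \nat$ for which some $k \in \nat$ satisfies
\[
k \;=\; s + cm \qquad\text{and}\qquad \beta^k \;\le\; \alpha^m \;<\; \beta^{k+1},
\]
with uniqueness ensured by the irrationality of $\log_\beta(\alpha)$ (a consequence of the multiplicative independence of $\alpha$ and $\beta$).

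This translates directly into the language of $\mathbb{M}_2$: the existential formula
\[
\psi(n, s) \;\equiv\; \exists k \st \beta^k \le \alpha^{\alpha^{n+1}} \;\wedge\; \alpha^{\alpha^{n+1}} < \beta^{k+1} \;\wedge\; k = s + c \cdot \alpha^{n+1}
\]
defines ``$s = S_n$''. Here $\alpha^{\alpha^{n+1}}$ is a legal nested application of $x \mapsto \alpha^x$, and since $c$ is a fixed natural number, $c \cdot \alpha^{n+1}$ abbreviates $c$ copies of $\alpha^{n+1}$ summed. Note that $k$ is quantified existentially but pinned down uniquely by the bracketing condition.

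To define $k$-windows of digits, I use the shift identity
\[
S_{n+k-1} \;-\; \alpha^k \, S_{n-1} \;=\; \sum_{j=0}^{k-1} A_{n+j}\, \alpha^{k-1-j} \qquad (S_{-1} := 0),
\]
which yields, for any fixed $k \ge 1$, the existential formula
\[
\varphi(n, y_0, \ldots, y_{k-1}) \;\equiv\; \exists s, t, p \st \psi(n+k-1, s) \;\wedge\; \bigl( (n = 0 \wedge t = 0) \vee (n = p + 1 \wedge \psi(p, t)) \bigr) \;\wedge\; s = \alpha^k\, t + \textstyle\sum_{j=0}^{k-1} y_j \, \alpha^{k-1-j}.
\]
Because $k$ is fixed by the definition of definability, $\alpha^k$ and each $\alpha^{k-1-j}$ are absolute numerical constants, and each product $\alpha^{k-1-j} \cdot y_j$ is expressible by iterated addition. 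Correctness is a direct unwinding of definitions. The main conceptual obstacle is recognising that the prefix $A_0 \cdots A_n$ of the base-$\alpha$ expansion of $\{\log_\beta(\alpha)\}$ can be read off, up to the fixed linear correction $cm$, from the unique integer $k$ bracketing the nested power $\alpha^{\alpha^{n+1}}$ between consecutive powers of $\beta$; once that identity is extracted, the remainder of the argument is routine and fits comfortably inside the existential fragment.
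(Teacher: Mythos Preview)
Your approach shares the paper's central observation: the integer prefix of the base-$\alpha$ expansion of $\{\log_\beta(\alpha)\}$ is recovered from the unique $k$ with $\beta^k \le \alpha^{\alpha^{m}} < \beta^{k+1}$, after subtracting the linear correction $c\,\alpha^m$. The paper expresses ``$w$ occurs at position $n$'' directly via a single bracketing inequality and then takes a finite disjunction over all $w\in\{0,\ldots,\alpha-1\}^k$; you instead define the prefix integers $S_{n+k-1}$ and $S_{n-1}$ and recover the window by subtraction. This is the same idea with a slightly different packaging of the digit-extraction step; your version is arguably tidier.

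There is one small gap. The definition of existential definability requires $\varphi(n,y_0,\ldots,y_{k-1})$ to hold \emph{if and only if} $A_{n+j}=y_j$ for every $j$, where the $y_j$ range over all of~$\nat$. Your formula only asserts $s-\alpha^k t=\sum_{j=0}^{k-1} y_j\,\alpha^{k-1-j}$, and this equation does not pin down the $y_j$ individually: for example, with $k=2$ and $(A_n,A_{n+1})=(1,0)$ the tuple $(y_0,y_1)=(0,\alpha)$ also satisfies it. The repair is immediate: conjoin $\bigwedge_{j=0}^{k-1} y_j<\alpha$, after which uniqueness of base-$\alpha$ representation gives the required equivalence.
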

	\begin{proof}
		By symmetry, it is sufficient to prove the proposition for $(A_n)_{n=0}^\infty$.
		For $x \ge 1$, denote by $f(x)$ the integer $\alpha^m$ such that $\alpha^m\le x  < \alpha^{m+1}$, noting that $f(x) = \alpha ^ {\lfloor x \log_\alpha \beta \rfloor}$.
		Fix $k \ge 1$, and let $w \in \{0,\ldots,\alpha-1\}^k$.
		Denote by $\lambda(w)$ the natural number whose base-$\alpha$ expansion equals $w$.
		That $w$ occurs at position $n$ in $\seq{A_n}$ can be expressed as
		\[
		\lambda(w) < \{ \alpha^n \log_\alpha \beta \} \cdot \alpha^k < \lambda(w)+ 1
		\]
		which is equivalent to
		\[
		\alpha^{\lambda(w)} < \left( \frac{\beta^{\alpha^n}}{\alpha^{\lfloor \alpha^{n} \log_\alpha \beta \rfloor}}\right)^{\alpha^k} < \alpha^{\lambda(w)+1}.
		\]
		Recall that 
		\[
		\alpha^{\lfloor \alpha^{n} \log_\alpha \beta \rfloor} = f(\beta^{\alpha^n}),
		\]
		and for any constant $c$ and a term $t$, we can express $c \cdot t$ as $\underbrace{t + \cdots + t}_{c \textrm{ times}}$.
		Hence the formulas
		\begin{align*}
			\varphi(n,y_0,\ldots,y_{k-1}) \coloneqq
			\exists m \st & \alpha^m \le \beta^{\alpha^n} < \alpha^{m+1} \:\:\land \\
			& \alpha^{\lambda(y_0\cdots y_{k-1}) + m \alpha^k} <\beta^{\alpha^{n+k}} < \alpha^{\lambda(y_0\cdots y_{k-1})+ 1 + m \alpha^k}
		\end{align*}
		for $k \ge 1$ define $\seq{A_n}$ as required.
	\end{proof}
	Observe that we can express whether a pattern $w = w_0\cdots w_{k-1}$ occurs in an existentially definable sequence $\seq{u_n}$ using the existential formula $\exists n\st \varphi(n,w_0,\ldots,w_{k-1})$, where $\varphi$ is the formula described in \Cref{def::definable}.
	Therefore, decidability of the existential fragment of $\mathcal{PA}(\alpha^x,\beta^x)$ would entail existence of oracles, among others, for deciding the following problems.
	\begin{enumerate}
		\item[(A)] Whether a given pattern $w$ appears in the base-$\beta$ expansion of $\log_\beta(\alpha)$.
		\item[(B)] Whether a given pattern $w$ appears at some index simultaneously in the base-$\beta$ expansions of $\log_\beta(\alpha)$ and $\log_\alpha(\beta)$.
		\item[(C)] Whether a given pattern $w$ appears in $(A_{\alpha^n})_{n=0}^\infty$.
	\end{enumerate}
	This proves \Cref{thm::main-hardness} from the Introduction.

	To the best of our knowledge, for no base $\gamma\in \nat_{\ge2}$ and multiplicatively independent $\alpha, \beta \in \nat_{>1}$, an algorithm is known that decides appearance of a given pattern in base-$\gamma$ expansion of $\log_\alpha(\beta)$.
	It is, however, generally believed that $\log_\alpha(\beta)$ is \emph{normal} in every base $\gamma$--that is, every finite pattern $w \in \{0,\ldots,\gamma-1\}$ of length $k$ occurs with frequency $1/\gamma^k$ as a factor in the base-$\gamma$ expansion of $\log_\alpha(\beta)$.\@
	See, for example, \cite[Introduction]{bailey2001}.\@
	For a general exposition to normal numbers, we suggest the reference \cite{bugeaud-book}.\@
	Proof of normality for the sequences $\seq{A_n}$ and $\seq{B_n}$ would make Problem~(A) above trivially decidable.
	However, normality alone is not strong enough to deal with Problems~(B) and~(C):
	deciding the latter problems in the same way as Problem~(A) would require a far stronger ``randomness'' property.
	Even if such properties are proven, we might still be unable to prove decidability of the full existential fragment of $\mathcal{PA}(\alpha^x,\beta^x)$.

	\section{Undecidability of $\mathcal{PA}(\alpha^\nat,\beta^\nat)$}
	\label{sec::undec}
	In this section, let $\alpha,\beta \in \nat_{>1}$ be multiplicatively independent.
	In \cite{hieronymi-schulz}, Hieronymi and Schulz show that the full theory $\mathcal{PA}(\alpha^\nat,\beta^\nat)$ is undecidable by giving a reduction from the Halting Problem for Turing machines.
	We now give an alternative (and shorter) undecidability proof by reducing from the Halting Problem for 2-counter Minksy machines, which is also undecidable \cite[Chapter~14]{minsky1967computation}.\@
	Our proof shows that already for formulas containing three alternating blocks of quantifiers,  membership in $\mathcal{PA}(\alpha^\nat,\beta^\nat)$ is undecidable.

	A \emph{2-counter Minsky machine} is given by $R > 0$ instructions, numbered $1,\ldots,R$, and two counters $c^{(1)}, c^{(2)}$ that take values in $\nat$.
	Each instruction except the $R$th one is either of the form $c^{(i)}  = c^{(i)}  + 1; \:\mathsf{GOTO}\: r$, or $\mathsf{IF} \: c^{(i)} = 0 \: \mathsf{GOTO} \: r \: \mathsf{ELSE \: c^{(i)} = c_i^{(i)} - 1; \: GOTO} \:\widetilde{r}$ where $i \in \{1,2\}$ and $r,\widetilde{r} \in \{1,\ldots, R\}$.
	The execution starts at line $r = 1$ with both counters set to zero, and halts if the line $r = R$ is reached.
	Denote by $c_i^{(n)}$ the value of the counter $c_i$ and by $r_n$ the current instruction number after $n$ steps.
	We refer to $(c^{(1)}_n, c^{(2)}_n, r_n)$ as the \emph{configuration} of the machine at time $n$.
	The \emph{transition function} $f \st \nat\times\nat\times \{1,\ldots,R\} \to \nat\times\nat\times \{1,\ldots,R\}$ of the machine describes how the configuration is updated.
	By definition, we have that $c^{(1)}_0 = c^{(2)}_0 = 0$ and $r_0 = 1$.

    We will represent the trace of the machine by the sequence
	\[
	\langle \alpha^{R+c^{(1)}_0}, \alpha^{R+c^{(2)}_0}, \alpha^{r_0-1}, \alpha^{R+c^{(1)}_1}, \alpha^{R+c^{(2)}_1}, \alpha^{r_1 - 1}, \ldots \rangle.
	\]
	Here, $\alpha^{R+c^{(1)}_n}$ and $\alpha^{R+c^{(2)}_n}$ are at least $\alpha^R$ while $\alpha^{r_n - 1} < \alpha^R$ for every $n \ge 0$.
	Note that every entry in the sequence is a power of $\alpha$, and the $n$th entry is smaller than $\alpha^R$ if and only if $n \equiv 2 \mod 3$.\@
	It remains to represent such sequences using arithmetic of powers of $\alpha$ and $\beta$.

	For $x \in\nat$, denote by $\mu(x)$ the most significant digit in the base-$\alpha$ expansion of $x$, and by $\delta(x)$ the number $\alpha^n$ (whenever it exists) such that the digit corresponding to $\alpha^n$ in the base-$\alpha$ expansion of $x$ is the second most significant digit that is non-zero.
    For example, if $\alpha = 10$, then $\mu(3078) = 3$ and $\delta(3078) = 10^1$.\@
	Next, consider $\Acal_l,\Acal_u \in \alpha^\nat, \Bcal_l,\Bcal_u \in \beta^\nat$ with $\Acal_l < \Acal_u$ and $\Bcal_l < \Bcal_u$.
	Let $\Pcal$ be the set of all $b \in \beta^\nat \cap [\Bcal_l,\Bcal_u]$ such that $\mu(b) = 1$ and $\delta(b) \in [\Acal_l, \Acal_u]$.
	Write $N = |\Pcal| - 1$, and order the elements of $\Pcal$ as $B_0 < \cdots < B_N$.
	We say that the tuple $(\Acal_l, \Acal_u, \Bcal_l, \Bcal_u)$ \emph{defines} the finite sequence $(u_n)_{n=0}^N$ over $\alpha^\nat$ given by $u_n = \delta(B_n)/\Acal_l$.
	The following result is Lemma 3.4 in \cite{hieronymi-schulz}, and serves a crucial role in their and our undecidability proofs.

	\begin{theorem}
		\label{thm::hs-black-box}
		Every finite sequence $(u_n)_{n=0}^N$ over $\alpha^\nat$ is defined by some $(\Acal_l, \Acal_u, \Bcal_l, \Bcal_u)$.
	\end{theorem}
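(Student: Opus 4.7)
Write each $u_n = \alpha^{e_n}$ and put $E = \max_n e_n$. The plan is to pick $L$ large, set $\Acal_l = \alpha^L$ and $\Acal_u = \alpha^{L+E}$, construct \emph{consecutive} good powers $\beta^{k_0} < \ldots < \beta^{k_N}$ whose $\delta$-values are precisely $\alpha^{L+e_0}, \ldots, \alpha^{L+e_N}$, and then take $\Bcal_l = \beta^{k_0}$, $\Bcal_u = \beta^{k_N}$. Here ``consecutive'' means that no power of $\beta$ strictly between $\beta^{k_n}$ and $\beta^{k_{n+1}}$ satisfies $\mu = 1$ and $\delta \in [\Acal_l, \Acal_u]$. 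This guarantees $\Pcal = \{\beta^{k_0}, \ldots, \beta^{k_N}\}$ in order, so the induced sequence is $(\delta(\beta^{k_n})/\Acal_l)_{n=0}^N = (\alpha^{e_n})_{n=0}^N = (u_n)_{n=0}^N$, as required.

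For a power $\beta^k$ with $\alpha^m \le \beta^k < 2\alpha^m$, set $y = \{k \log_\alpha \beta\}$ and $m = \lfloor k \log_\alpha \beta\rfloor$. Then $\mu(\beta^k) = 1$ automatically, and $\delta(\beta^k) = \alpha^{L+e}$ holds iff $y \in I_e^m := [\log_\alpha(1 + \alpha^{L+e-m}), \log_\alpha(1 + \alpha^{L+e+1-m}))$. For $m > L + E$ the intervals $I_0^m, I_1^m, \ldots, I_E^m$ are pairwise disjoint and lie near $0$, with combined length of order $\alpha^{L+E-m}$. This lets us translate the construction of the $k_n$ into a question about orbits of the irrational rotation $k \mapsto \{k \log_\alpha \beta\}$ hitting prescribed short intervals.

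The heart of the argument is producing consecutive good $k_0 < \ldots < k_N$ realising the label pattern $(e_0, \ldots, e_N)$. I would fix a moderate scale window $m \in [L+E+1, L+E+M]$, over which the intervals $I_e^m$ vary slowly in $m$ (controllably, by taking $L$ large compared to $M$). By Lemma~\ref{thm::kronecker} the orbit $\{k \log_\alpha \beta\}$ is equidistributed on $[0,1)$, so good powers occur with positive density in this window. Consider the word $w$ whose $j$-th letter is the label $e \in \{0, \ldots, E\}$ of the $j$-th good $k$ in the window. An equidistribution argument for the higher-dimensional orbit $(\{k \log_\alpha \beta\}, \{(k+1) \log_\alpha \beta\}, \ldots)$ on a torus (justified by the irrationality of $\log_\alpha \beta$) shows that every block of length $N+1$ over the alphabet $\{0, \ldots, E\}$ occurs as a factor of $w$; picking any occurrence of $(e_0, \ldots, e_N)$ yields the required $k_n$'s.

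The main obstacle is the coupling between the target intervals $I_e^m$ (which shift with $m$) and the rotation by $\log_\alpha \beta$: consecutive good powers do not correspond to iterates of a single rotation, so a naive application of Weyl's theorem is not quite enough. I expect to handle this by the two-step argument just sketched, first trivialising the $m$-dependence by scale separation (taking $L$ much larger than the window length $M$ and than $N$), and then invoking a clean equidistribution statement on a finite-dimensional torus to realise every length-$(N+1)$ pattern. Once the $k_n$'s are produced, the choices $\Bcal_l = \beta^{k_0}$ and $\Bcal_u = \beta^{k_N}$ complete the construction.
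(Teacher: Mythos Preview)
The paper does not prove this theorem; it simply cites Lemma~3.4 of Hieronymi--Schulz, so there is no in-paper proof to compare against. Your overall strategy---fix $\Acal_l=\alpha^L$, $\Acal_u=\alpha^{L+E}$, translate ``good with label $e$'' into the condition $\{k\log_\alpha\beta\}\in I_e^m$, and then manufacture consecutive good powers $\beta^{k_0}<\cdots<\beta^{k_N}$ realising the pattern $(e_0,\ldots,e_N)$---is the natural reduction and matches how the cited proof is set up. The execution of the key step, however, contains two genuine gaps.

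First, the scale-separation claim is false as stated. You write that by taking $L$ large compared to the window length $M$ the intervals $I_e^m$ ``vary slowly in $m$''. But $I_e^m=[\log_\alpha(1+\alpha^{L+e-m}),\log_\alpha(1+\alpha^{L+e+1-m}))$ depends only on the difference $L+e-m$; increasing $L$ merely shifts the admissible range of $m$ and does nothing to damp the dependence on $m$. When $m$ increases by one the interval contracts and shifts by a factor of $\alpha$, so over your window $m\in[L+E+1,L+E+M]$ the intervals move through $M$ genuinely different positions regardless of $L$.

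Second, the torus-equidistribution step is wrong. The orbit $k\mapsto(\{k\theta\},\{(k+1)\theta\},\ldots,\{(k+d)\theta\})$ with $\theta=\log_\alpha\beta$ is confined to the one-dimensional coset $\{(t,t+\theta,\ldots,t+d\theta):t\in\mathbb{T}\}\subset\mathbb{T}^{d+1}$, since each coordinate is an affine function of the first; irrationality of $\theta$ gives density only on that line, not on the full torus. Consequently one cannot conclude that every block over $\{0,\ldots,E\}$ occurs in the label word~$w$ from this. There is a further mismatch: even equidistribution information about consecutive $k$'s would not directly control consecutive \emph{good} $k$'s, which are separated by variable gaps determined by the return times of the rotation to the (moving) union $\bigcup_e I_e^m$. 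The Hieronymi--Schulz argument handles exactly this point, and it is not a routine consequence of Weyl equidistribution; you would need either a careful analysis of the first-return map of the rotation to a suitable interval (an interval-exchange/three-distance type argument) or an explicit inductive construction using continued-fraction denominators of $\theta$ to produce the $k_n$ while certifying that no intermediate $k$ is good. As written, the proposal identifies the right reduction but does not supply a valid argument for the existence of the consecutive good powers with the prescribed labels.
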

	By choosing $\Bcal_l$ to be the smallest element of $\Pcal$ and $\Bcal_u$ to be the largest element of $\Pcal$ if necessary, we can always assume that $\Bcal_l, \Bcal_u \in \Pcal$.
	We will encode the Halting Problem for 2-counter machines by constructing a formula that expresses existence of a tuple $(\Acal_1, \Acal_2, \Bcal_1, \Bcal_2)$ that defines a sequence corresponding to a finite trace of the machine ending with the halting instruction.
	Let $\Acal_l,\Acal_u\in \alpha^\nat, \Bcal_l,\Bcal_u \in \beta^\nat$ define the sequence $(u_n)_{n=0}^N$, and $\Pcal = \{B_0, \ldots, B_N\}$ be as above.
	Define
	\begin{align*}
		\varphi_{\Acal_l,\Acal_u,\Bcal_l,\Bcal_u}\,(C,A,B) \coloneqq \,&
		C\in \alpha^\nat \:\land\:A \in \alpha^\nat \cap [\Acal_l, \Acal_u] \:\land\: B \in \beta^\nat \cap [\Bcal_l, \Bcal_u] \:\land\:
		\\
		&C \le B < 2C \:\land\:
		 A \le B-C < \alpha \cdot A.
	\end{align*}
	This formula states that $B \in \Pcal$, which is witnessed by $C$ and $A$.
	Here, $C$  is the largest power of $\alpha$ not exceeding $B$, the atomic formula $C \le B < 2C$ ensures that $\mu(B) = 1$, and $A \le B-C < \alpha \cdot A$ ensures that $A = \delta(B)$.
	If $\varphi_{\Acal_l,\Acal_u,\Bcal_l,\Bcal_u}\,(C,A,B)$ holds, then $u_n = A/\Acal_l$ where $n$ is the position of $B$ in $\Pcal$.
	The next formula, on input $B_1, B_2$ that belong to $\Pcal$, returns whether $B_1$ is immediately followed by $B_2$  in the ordering of $\Pcal$.
	\begin{align*}
		\psi_{\Acal_l,\Acal_u,\Bcal_l,\Bcal_u}(B_1,B_2) \coloneqq
		\forall C, A, B_1 < B < B_2  \st \neg\varphi_{\Acal_l,\Acal_u,\Bcal_l,\Bcal_u}(C,A,B).
	\end{align*}
	We omit the subscript from $\phi$ and $\psi$ when $\Acal_l,\Acal_u,\Bcal_l,\Bcal_u$ are clear from the context.
	We can now construct a formula in the language $\Lcal_{\alpha,\beta}$ that is true if and only if the given 2-counter machine halts.
	Write $\mathbf{X}$ for the collection of variables $\Acal_l,\Acal_u,\Bcal_l,\Bcal_u, \widehat{B}_1,\widehat{B}_2, \widehat{C}_0,\widehat{C}_1,\widehat{C}_2, C_{\text{last}}$, and $\mathbf{Y}$ for the collection of variables $C_0,A_0,B_0,\ldots,C_5,A_5,B_5$.
	The variables
	\begin{itemize}
		\item $\Acal_l,\Acal_u,\Bcal_l,\Bcal_u$ serve to define a finite sequence over $\alpha^\nat$,
		\item $\Bcal_l, \widehat{B}_1,\widehat{B}_2$ denote the first three elements of $\Pcal$ with witnesses $(\widehat{C}_0,\alpha^R\cdot\Acal_l)$, $(\widehat{C}_1, \alpha^R\cdot\Acal_l)$, and $(\widehat{C}_2, \Acal_l)$, respectively,
		\item $\Bcal_u$ is the final element of $\Pcal$ with the witness $(\widehat{C}_{\text{last}},\alpha^{R-1} \cdot \Acal_l)$, and
		\item $C_0,A_0,B_0, \ldots, C_5,A_5,B_5$ represent arbitrary~6 consecutive terms of the sequence defined by $(\Acal_l,\Acal_u,\Bcal_l,\Bcal_u)$, which correspond to two consecutive configurations of the machine. (Recall that each configuration of the machine consists of three numbers.)
	\end{itemize}
	The required formula is then
	\begin{align*}
		\exists \mathbf{X} \st
		&\psi(\Bcal_l, \widehat{B}_1) \:\land\: \psi(\widehat{B}_1, \widehat{B}_2) \:\land\: \varphi(\widehat{C}_0, \alpha^R\cdot\Acal_l, \Bcal_l)\:\land\: \varphi(\widehat{C}_1, \alpha^R \cdot \Acal_l, \widehat{B}_1) \:\land\: \varphi(\widehat{C}_2,\Acal_l, \widehat{B}_2) \:\land\:
		\\
		&\varphi(C_{\text{last}}, \Bcal_u, \alpha^{R-1} \cdot \Acal_l) \:\land\:\\
		& \forall \mathbf{Y} \st \bigg(
		\bigwedge_{i=0}^4 \psi(B_i,B_{i+1})
		\:\land\:
		\bigwedge_{i=0}^5 \varphi(C_i,A_i,B_i)
		\:\land\:
		A_2 < \alpha^R \cdot \Acal_l
		\bigg) \Rightarrow \Phi(C_0,A_0,B_0,\ldots,C_5,A_5,B_5)
	\end{align*}
	where $\Phi$ implements the transition function of the machine.
	Note that $\Acal_l,\Acal_u,\Bcal_l,\Bcal_u$ also appear in the definitions of $\varphi$ and $\psi$.\@
	The first row in the formula above
	fixes the initial configuration of the machine to $(0, 0, 1)$ by requiring that the first three elements of the sequence defined by $(\Acal_l,\Acal_u,\Bcal_l,\Bcal_u)$ must be $\alpha^R, \alpha^R, 1$, respectively.
	The second row says that the last term in the sequence must be $\alpha^{R-1}$, which represents the halting instruction.
	The condition $A_2 < \alpha^R \cdot \Acal_l$ in the third row, in conjunction with $\varphi(C_2,A_2,B_2)$, ensures that the term of the sequence at the position defined by $B_2$ represents an instruction number, as opposed to a counter value.
	Thus $(C_0,A_0,B_0), \ldots,(C_5,A_5,B_5)$ represent two consecutive configurations of the machine.
	{Regarding~$\Phi$,} observe that we can define a function mapping $\alpha^n $ to $\alpha^{n+1}$ (which corresponds to incrementing a counter) by the formula $\chi(x,y) \coloneqq y = \underbrace{x + \cdots + x}_{\textrm{$\alpha$ times}}$, and a function mapping $\alpha^{n+1}$ to $\alpha^n$ (corresponding to decrementing a counter) by $\widetilde{\chi}(x,y) \coloneqq \chi(y,x)$.
	Finally, to see that the formula above has quantifier alternation depth 2 (i.e.\ three alternating blocks of quantifiers), recall that $\chi_1 \Rightarrow \chi_2$ is equivalent to $\lnot \chi_1 \lor \chi_2$ and the definition of $\psi$ involves a single universal quantifier.

	\paragraph*{Acknowledgements.}
	Toghrul Karimov and Jo\"el Ouaknine were supported
	by the DFG grant 389792660 as part of TRR 248 (see
	\url{https://perspicuous-computing.science}). Jo\"el Ouaknine is also
	affiliated with Keble College, Oxford as \texttt{emmy.network} Fellow.
	James Worrell was supported by EPSRC Fellowship EP/X033813/1.
	The Max Planck Institute for Software Systems is part of the Saarland
	Informatics Campus.
	The authors thank the anonymous referees for their constructive comments, and Mihir Vahanwala and Valérie Berthé for helpful initial discussions.
	
    \bibliographystyle{plain}
    \bibliography{main}

\begin{thebibliography}{10}

\bibitem{bailey2001}
David~H. Bailey and Richard~E. Crandall.
\newblock On the random character of fundamental constant expansions.
\newblock {\em Experimental Mathematics}, 10(2):175--190, 2001.

\bibitem{benedikt23}
Michael Benedikt, Dmitry Chistikov, and Alessio Mansutti.
\newblock The complexity of {P}resburger arithmetic with power or powers.
\newblock In {\em {ICALP}}, volume 261 of {\em LIPIcs}, pages 112:1--112:18.
  Schloss Dagstuhl - Leibniz-Zentrum f{\"{u}}r Informatik, 2023.

\bibitem{MSO24}
Val{\'{e}}rie Berth{\'{e}}, Toghrul Karimov, Joris Nieuwveld, Jo{\"{e}}l
  Ouaknine, Mihir Vahanwala, and James Worrell.
\newblock On the decidability of monadic second-order logic with arithmetic
  predicates.
\newblock In {\em {LICS}}, pages 11:1--11:14. {ACM}, 2024.

\bibitem{bertok2017number}
Csan\'ad Bert\'ok, Lajos Hajdu, Florian Luca, and Divyum Sharma.
\newblock On the number of non-zero digits of integers in multi-base
  representations.
\newblock {\em Publicationes Mathematicae Debrecen}, 90:181--194, 01 2017.

\bibitem{bes2002survey}
Alexis B{\`e}s.
\newblock A survey of arithmetical definability.
\newblock {\em A tribute to Maurice Boffa}, pages 1--54, 2002.

\bibitem{buchi1960weak}
J.~Richard B{\"u}chi.
\newblock Weak second-order arithmetic and finite automata.
\newblock {\em Mathematical Logic Quarterly}, 6(1--6), 1960.

\bibitem{bugeaud-book}
Yann Bugeaud.
\newblock {\em Distribution modulo one and {D}iophantine approximation}, volume
  193 of {\em Cambridge Tracts in Mathematics}.
\newblock Cambridge University Press, Cambridge, 2012.

\bibitem{godel31}
Kurt G{\"o}del.
\newblock {\"U}ber formal unentscheidbare {S}{\"a}tze der {P}rincipia
  {M}athematica und verwandter {S}ysteme {I}.
\newblock {\em Monatshefte f{\"u}r Mathematik und Physik}, 38:173--198, 1931.

\bibitem{gonek-kronecker}
Steven~M. Gonek and Hugh~L. Montgomery.
\newblock Kronecker’s approximation theorem.
\newblock {\em Indagationes Mathematicae}, 27(2):506--523, 2016.
\newblock In Memoriam J.G. Van der Corput (1890–1975) Part 2.

\bibitem{haase-presburger-survival}
Christoph Haase.
\newblock A survival guide to {P}resburger arithmetic.
\newblock {\em ACM SIGLOG News}, 5(3):67--82, jul 2018.

\bibitem{haase24}
Christoph Haase, Shankara~Narayanan Krishna, Khushraj Madnani, Om~Swostik
  Mishra, and Georg Zetzsche.
\newblock An efficient quantifier elimination procedure for {P}resburger
  arithmetic.
\newblock In {\em {ICALP}}, volume 297 of {\em LIPIcs}, pages 142:1--142:17.
  Schloss Dagstuhl - Leibniz-Zentrum f{\"{u}}r Informatik, 2024.

\bibitem{hieronymi-schulz}
Philipp Hieronymi and Chris Schulz.
\newblock A strong version of {C}obham’s theorem.
\newblock {\em SIAM Journal on Computing}, 0(0):STOC22--1--STOC22--21, 0.

\bibitem{mat93}
Yuri~V. Matiyasevich.
\newblock {\em Hilbert's tenth problem}.
\newblock Foundations of Computing. MIT Press, Cambridge, MA, 1993.

\bibitem{matveev2000explicit}
Eugene~M. Matveev.
\newblock An explicit lower bound for a homogeneous rational linear form in the
  logarithms of algebraic numbers. ii.
\newblock {\em Izvestiya: Mathematics}, 64(6):1217, 2000.

\bibitem{minsky1967computation}
Marvin~Lee Minsky.
\newblock {\em Computation: Finite and Infinite Machines}.
\newblock Prentice-Hall Englewood Cliffs, 1967.

\bibitem{presburger-og}
Mojzesz Presburger.
\newblock Über die {V}ollstandigkeit eines gewissen {S}ystems der {A}rithmetik
  ganzer {Z}ahlen, in welchen die {A}ddition als einzige {O}peration
  hervortritt.
\newblock In {\em Comptes-Rendus du ler Congres des Mathematiciens des Pays
  Slavs}, 1929.

\bibitem{schulz2023undefinability}
Chris Schulz.
\newblock Undefinability of multiplication in {P}resburger arithmetic with sets
  of powers.
\newblock {\em The Journal of Symbolic Logic}, pages 1--15, 2023.

\bibitem{semenov1980}
Alexei~L. Sem{\"e}nov.
\newblock On certain extensions of the arithmetic of addition of natural
  numbers.
\newblock {\em Mathematics of the USSR-Izvestiya}, 15(2):401, apr 1980.

\bibitem{shallit22}
Jeffrey Shallit.
\newblock {\em The logical approach to automatic sequences: Exploring
  combinatorics on words with Walnut}, volume 482.
\newblock Cambridge University Press, 2022.

\bibitem{stewart1980representation}
C.~L. Stewart.
\newblock On the representation of an integer in two different bases.
\newblock {\em Journal f\"ur die reine und angewandte Mathematik}, 319:63--72,
  1980.

\bibitem{villemaire92}
Roger Villemaire.
\newblock The theory of ({N}, +, {Vk}, {Vl}) is undecidable.
\newblock {\em Theor. Comput. Sci.}, 106(2):337--349, 1992.

\end{thebibliography}

\end{document}